\documentclass[12pt,oneside,reqno,titlepage, hyperfootnotes=false]{amsart}
\usepackage{lmodern}
\usepackage[T1]{fontenc}
\usepackage[latin9]{inputenc}
\usepackage{color}
\usepackage{bm}
\usepackage{amstext}
\usepackage{amsthm}
\usepackage{amssymb}
\usepackage[a4paper]{geometry}
\usepackage{setspace}
\usepackage[authoryear]{natbib}
\usepackage[pdfusetitle,
 bookmarks=true,bookmarksnumbered=false,bookmarksopen=false,
 breaklinks=false,pdfborder={0 0 0},pdfborderstyle={},backref=false,colorlinks=true]
 {hyperref}

\makeatletter
\numberwithin{equation}{section}
\numberwithin{figure}{section}

\usepackage{amsfonts}
\usepackage{amstext}
\usepackage{amsthm}

\usepackage{hyperref}
\hypersetup{
    colorlinks = true,
    citecolor = black
}

\usepackage{threeparttable}
\usepackage{graphicx}
\usepackage{enumerate}
\usepackage{listings}
\usepackage{subcaption}

\DeclareMathOperator*{\argmin}{arg\,min}

\DeclareMathOperator*{\argmax}{arg\,max}

\usepackage{float}

\newtheorem{prop}{Proposition}
\newtheorem{thm}{Theorem}

\newtheorem{lem}{Lemma}

\newtheorem*{lem*}{Lemma}
\newtheorem*{thm*}{Theorem}

\newtheorem*{asm1}{Assumption 1}
\newtheorem*{asm2}{Assumption 2}
\newtheorem*{asm3}{Assumption 3}
\newtheorem*{asm4}{Assumption 4}
\newtheorem*{asm5}{Assumption 5}
\newtheorem*{asm6}{Assumption 6}

\newtheorem*{asmA1}{Assumption A1}
\newtheorem*{asmA2}{Assumption A2}
\newtheorem*{asmA3}{Assumption A3}
\newtheorem*{asmA4}{Assumption A4}

\theoremstyle{definition}

\makeatother

\begin{document}
\title{You've Got to be Efficient: \\
Ambiguity, Misspecification and Variational Preferences}
\author{Karun Adusumilli$^\dagger$}
\begin{abstract}
This article introduces a framework for evaluating statistical decisions
under both prior ambiguity and likelihood misspecification. We begin
with an ambiguity set --- a frequentist model that pairs a possibly
misspecified likelihood with every possible prior --- and uniformly
expand it by a Kullback--Leibler radius to accommodate likelihood
misspecification. We show that optimal decisions under this framework
are equivalent to minimax decisions with an exponentially tilted loss
function. Misspecification manifests as an exponential tilting of
the loss, while ambiguity corresponds to a search for the least favorable
prior. This separation between ambiguity and misspecification enables
local asymptotic analysis under global misspecification, achieved
by localizing the priors alone. Remarkably, for both estimation and
treatment assignment, we show that optimal decisions coincide with
those under correct specification, regardless of the degree of misspecification.
These results extend to semi-parametric models. As a practical consequence,
our findings imply that practitioners should prefer maximum likelihood
over the simulated method of moments, and efficient GMM estimators
--- such as two-step GMM --- over diagonally weighted alternatives.
\end{abstract}

\thanks{\textit{This version}: \today{}\\
\thispagestyle{empty}I would like to thank Xu Cheng, Frank Diebold,
Wayne Gao, George Mailath, and seminar participants at Chicago Booth,
Paris Econometrics Seminar, Toulouse School of Economics, and the
University of Pennsylvania for valuable discussions and comments that
substantially improved this article.\\
$^\dagger$Department of Economics, University of Pennsylvania}
\maketitle

\section{Introduction \label{sec:Introduction}}

\citet{box1976science} famously observed that all models are wrong,
since they are necessarily approximations of reality. Any researcher
or decision-maker who relies on a statistical model to learn about
a parameter of interest must therefore contend with the possibility
that the likelihood is misspecified. At the same time, researchers
are often unable or unwilling to commit to a single prior over the
parameter. In practice, then, decision-makers confront both prior
ambiguity and likelihood misspecification. 

This article introduces a framework for evaluating statistical decisions
under both sources of concern. Following Bayesian practice, we define
a statistical model as a joint distribution comprising a prior and
a likelihood. We argue that both components are necessary because
they capture fundamentally different types of uncertainty. The prior
encodes epistemic uncertainty --- subjective uncertainty arising
from incomplete knowledge about the parameter of interest --- while
the likelihood captures aleatoric uncertainty --- the objective randomness
inherent in any statistical experiment.

To account for prior ambiguity, we define an ambiguity set: a frequentist
model that pairs a possibly misspecified likelihood with every possible
prior. Following \citet{cerreia2026making}, we then uniformly expand
this set by a Kullback--Leibler radius to accommodate likelihood
misspecification. The optimal decision rule is defined as the one
that achieves the lowest expected loss under the worst-case model
from this expanded set. 

We show that optimal decisions under this formulation are equivalent
to minimax decisions with an exponentially tilted loss function. Likelihood
misspecification manifests as an exponential tilting of the loss,
while prior ambiguity corresponds to a search for the least favorable
prior. Our framework thus enables a clean separation between ambiguity
and misspecification. Furthermore, when there is no fear misspecification,
the optimal decisions reduce to the standard \citet{wald1950statistical}
formulation of minimax decisions under ambiguity alone --- the formulation
underlying most of frequentist analysis.

This separation between ambiguity and misspecification also enables
us to develop a local asymptotic theory under global misspecification.
Under mild conditions, the finite-sample likelihoods---which may
themselves be misspecified---can be replaced by a limit experiment
involving the Gaussian family as the reference likelihood. While a
substantial literature studies local asymptotics under local misspecification,
where misspecification typically manifests as an added bias in the
Gaussian limit, our framework permits the Gaussian family itself to
be globally misspecified in the limit experiment, thereby accommodating
much richer classes of misspecification.

Our local asymptotic theory also accommodates both local and global
ambiguity over the unknown parameter. This substantially generalizes
the standard local asymptotic minimax theory (e.g., \citealp[Chapter 8]{van2000asymptotic}),
which considers ambiguity only against a finite set of parameter values.

Local asymptotics also simplifies the search for optimal decisions,
as these are considerably easier to characterize in the limit experiment.
Quite remarkably, we find that for estimation and treatment assignment
problems, optimal decisions coincide with those under correct specification,
regardless of the degree of global misspecification. For these problems,
it is therefore always optimal for the decision-maker to proceed as
if the likelihood were correctly specified and select the resulting
optimal decision rule. Intuitively, these results arise because misspecification
under our formulation is symmetric around the reference Gaussian likelihood.
Since the estimation and treatment-assignment loss functions are also
symmetric around the parameter of interest, any estimator that is
not efficient under correct specification would break this symmetry.
Because nature chooses the least favorable likelihood specification
given the decision-maker's choice of estimator, departures from symmetry
necessarily incur higher decision risk. 

We extend our local asymptotic theory to semi-parametric models and
show that the results on optimal decisions carry over to that setting.
The analysis requires care, and is materially different from the parametric
setting, as misspecification in this context arises when the population
distribution $P$ does not coincide with the outcome distribution
$\hat{P}$ in the experimental sample.

Finally, we develop a framework for model averaging in which the decision-maker
contemplates multiple candidate likelihoods, each subject to a different
degree of misspecification concern.

Our findings have a number of practical consequences. In applications,
researchers often employ inefficient estimators over efficient ones,
a practice frequently justified on the grounds that under misspecification,
no estimand recovers the precise parameter of economic interest (\citealp{andrews2025purpose}).
Our results, however, suggest that this reasoning is incomplete. While
the parameter of interest cannot be recovered with certainty under
misspecification, our analysis shows that efficient estimators under
correct specification also deliver the lowest decision risk under
arbitrary and undirected misspecification. In the case of parametric
models, these results suggest that practitioners should prefer maximum
likelihood over the simulated method of moments, irrespective of the
degree of misspecification. Similarly, in the context of GMM, practitioners
should prefer efficient estimation methods, such as two-step GMM,
over diagonally weighted or inefficient alternatives. Misspecification
concerns alone cannot justify the use of the inefficient estimators
over parametrically or semi-parametrically efficient alternatives. 

\subsection{Related literature}

This article relates to an extensive literature on ambiguity and misspecification
spanning economics, statistics, and computer science. A detailed comparison
of our approach with alternative decision-theoretic frameworks is
deferred to Section \ref{subsec:Alternative-approaches}. Here, we
restrict ourselves to a broad survey of the literature on ambiguity
and misspecification. 

The analysis of optimal decisions under prior ambiguity originates
with \citet{wald1950statistical}. A substantial body of work in statistics
has extended this framework to local asymptotics; we refer to \citet{ibragimov1981problem,le1986asymptotic,van1996weak,van2000asymptotic}
for textbook treatments. A central result from this literature is
that semi-parametrically efficient estimators are asymptotically minimax
optimal under prior ambiguity.

The literature on misspecification is equally extensive. \citet{huber1964robust}
proposes a contamination model to address likelihood misspecification.
\citet{hansen2011robustness} develop an approach that involves selecting
a worst-case likelihood from an ambiguity set defined by surrounding
a reference or approximate likelihood with a Kullback--Leibler divergence
ball of finite radius. The related field of Distributionally Robust
Optimization (DRO) takes the reference distribution to be the empirical
distribution $\hat{\mathbb{P}}$ of the data in the experiment, and
employs more general measures of distance from $\hat{\mathbb{P}}$
to define ambiguity sets --- including $\phi$-divergence measures
(e.g., reverse KL divergence and total-variation distance), Wasserstein
distances and Levy-Prokhorov distances. We refer to \citet{Ben-Tal:RobustOptimization}
for a textbook treatment of DRO, and to \citet{kuhn2025distributionally}
for a recent survey. These methods do not account for prior ambiguity
and, consequently, do not reduce to the standard minimax formulation
that underpins frequentist analysis in the absence of misspecification
concerns. Our approach instead follows the recent work of \citet{cerreia2026making}
by first defining an ambiguity set to address prior ambiguity and
then uniformly expanding this entire set by a KL divergence radius
to accommodate likelihood misspecification. In contrast to the results
from DRO, we find that optimal estimation and treatment assignment
are invariant to the degree of misspecification.

This article adopts a decision-theoretic approach to ambiguity and
misspecification, which is grounded in the literature on variational
preferences in economic theory, as expounded in \citet{maccheroni2006ambiguity}
and \citet{cerreia2026making}. In concurrent and independent work,
\citet{andrews2026misspecification} develop a closely related axiomatic
framework for characterizing optimal decisions under misspecification.
The main conceptual difference is that we take the payoff-relevant
state of the world to comprise both the parameter and the data, which
naturally leads us to incorporate Bayesian priors as part of the model.
The axiomatic characterization therefore differs slightly, even as
it yields the same characterization of decision risk; see Section
\ref{sec:Axiomatics} for details. This article differs from \citet{andrews2026misspecification}
in several additional respects. First, \citet{andrews2026misspecification}
introduce and axiomatize \textit{constrained multiplier preferences},
which impose constraints on the direction of misspecification. As
shown in Section \ref{subsec:Model-averaging}, this corresponds to
a limiting case of our model averaging framework in which the decision-maker
has no misspecification concerns about a larger, nesting model.\footnote{We do not, however, provide an axiomatic justification for the model
averaging approach.} Second, we cover semi-parametric estimation as well as a broader
class of losses, including treatment-assignment loss. Third, we develop
new approaches to local asymptotic theory that are substantially more
powerful than the standard local asymptotic minimax theory employed
in \citet{andrews2026misspecification}: for instance, our results
can address global minimax risk in addition to local minimax risk. 

The econometrics literature has also studied alternative, non-decision-theoretic
approaches to misspecification, and we refer to \citet{armstrong2025misspecification}
for a recent survey. For instance, \citet{white1982maximum} defines
pseudo-parameters as the probability limits of estimators, and views
them as suitably defined approximations to the underlying parameter
of interest. The partial identification approach of \citet{manski2003partial}
proposes set-identifying the parameter under misspecification, while
\citet{masten2021salvaging} develop methods for sensitivity analysis.
A limitation of these approaches relative to the decision-theoretic
framework is that they do not directly identify the optimal decision
that a decision-maker should employ. 

\section{Decision-Making Under Ambiguity and Misspecification\label{sec:Diffusion-asymptotics-and}}

\subsection{An illustrative example\label{subsec:An-illustrative-example}}

To illustrate our formalism, we introduce the following running example.
A decision-maker, Alice, is tasked with determining whether a drug
should be approved for use in the US population. She is therefore
interested in learning about the parameter $\theta\in\Theta$, defined
as the average population treatment effect. We assume binary outcomes,
so that the population outcome distribution is $\text{Bernoulli}(\theta)$.

To assess the drug's efficacy, the pharmaceutical company has conducted
a randomized controlled trial with $n$ observations. Given the observed
data $\bm{x}\in\mathcal{X}$ from the trial, Alice seeks to choose
a decision $\delta:\mathcal{X}\to\mathcal{A}$ so as to maximize her
utility $u_{n}(\theta,\delta)$, or equivalently, minimize the loss
function $l_{n}(\theta,\delta)=-u_{n}(\theta,\delta)$. Examples of
loss functions include the class of estimation losses,
\[
l_{n}(\theta,\delta)=\ell\bigl(\sqrt{n}(\theta-\delta)\bigr);\quad\delta\in\Theta,
\]
for some bowl-shaped function $\ell(\cdot)$, e.g., $\ell(z)=\left\Vert z\right\Vert ^{2}$
for mean squared error; recall that $\ell(\cdot)$ is said to be bowl-shaped
if the sub-level sets $\{z:\ell(z)\le c\}$ are convex and symmetric
about the origin. Another example is the treatment-assignment loss,
\[
l_{n}(\theta,\delta)=\sqrt{n}\bigl(\theta\,\mathbf{1}\{\theta\geq0\}-\theta\,\delta\bigr);\quad\delta\in\{0,1\}.
\]
Under the estimation loss, the goal is to learn directly about the
parameter $\theta$, whereas under the treatment-assignment loss,
the goal is to either approve ($\delta=1$) or reject ($\delta=0$)
the drug for use in the entire population.

Unfortunately for Alice, the trial was conducted exclusively in the
state of Pennsylvania. Because the drug is novel, she has no formal
basis for judging whether, or to what degree, treatment effects observed
in Pennsylvania are representative of those in the broader US population.
This gives rise to model misspecification concerns. At the same time,
Alice also faces ambiguity concerns, as she is unable to form an initial
prior over $\theta$. We now describe a formalism that accommodates
both.

\subsection{Bayesian, Frequentist and Misspecified models\label{subsec:Bayesian,-Frequentist-and-misspecified}}

\subsubsection{(Bayesian) Statistical models}

We begin by formally defining the notion of a statistical model in
the absence of ambiguity or misspecification concerns.

Since the loss function takes the form $l_{n}(\theta,\delta(\bm{x}))$,
the payoff-relevant state of the world is given by $\omega=(\theta,\bm{x})$:
an oracle who knows $\omega$ would recover Alice's loss with certainty.
Following the framework of Savage or Anscombe--Aumann (\citealp{anscombe1963definition}),
we define a model $m\equiv m(\theta,\bm{x})$ as a probability distribution
over the payoff-relevant state $\omega=(\theta,\bm{x})$. This distribution
admits a natural decomposition into a prior and a likelihood:
\begin{equation}
m(\theta,\bm{x})=\pi(\theta)\otimes p_{\theta}(\bm{x}).\label{eq:Bayesian_model}
\end{equation}

Here, $\pi(\theta)$ denotes the posited prior, the marginal distribution
over $\theta$, while $p_{\theta}(\bm{x})=p(\bm{x}|\theta)$ denotes
the posited likelihood, the conditional distribution of $\bm{x}$
given $\theta$. Equation (\ref{eq:Bayesian_model}) is nothing more
than the definition of a Bayesian statistical model; see \citet[Definition 1.2.1]{robert2007bayesian}.

The decomposition of a model into a prior and a likelihood is a canonical
feature of Bayesian decision-making. Given the importance of this
decomposition for what follows, it is worth understanding why both
components are necessary. As we argue below, they capture fundamentally
different sources of uncertainty: epistemic and aleatoric. In the
terminology of \citet{anscombe1963definition}, these correspond to
the uncertainties involved in horse gambles and roulette wheels.

Epistemic uncertainty refers to uncertainty arising from a lack of
knowledge --- uncertainty that can, in principle, be reduced through
the acquisition of additional data or evidence.\footnote{In the Anscombe--Aumann framework, this is the uncertainty of a horse
gamble.} Because the parameter $\theta$ enters Alice's loss function directly,
it is natural to regard it as a quantity that exists in principle
but that Alice does not know. The prior $\pi(\theta)$ thus encodes
Alice's epistemic uncertainty due to her imperfect knowledge of $\theta$.
Crucially, Alice can conceptualize $\theta$ independently of the
likelihood. She may, for instance, have access to prior information
--- such as data from related studies --- that enables her to form
a prior $\pi$ without reference to whatever experiment the pharmaceutical
company may have conducted.

Aleatoric uncertainty, by contrast, refers to inherent randomness,
which is implicit in the design of any statistical experiment. The
likelihood $p_{\theta}(\bm{x})$ captures precisely this source of
uncertainty. In conducting the trial, the pharmaceutical company presumably
drew a random sample of $n$ observations from the population of Pennsylvania.
This sampling procedure requires the use of an implicit or explicit
random number generator and therefore introduces genuine randomness.\footnote{This corresponds to the uncertainty generated by roulette wheels in
the Anscombe--Aumann framework.} The likelihood thus describes the distribution of the data $\bm{x}$
induced by this randomness, for any given value of $\theta$.

Importantly, in our framework, the likelihood does not rise to the
status of a model. It provides only a mapping from the parameter $\theta$
to the distribution of $\bm{x}$. Because $\theta$ enters Alice's
loss function directly, knowledge of the correct likelihood would
not enable Alice to obtain a probabilistic forecast of her loss, as
she would still face epistemic uncertainty over $\theta$. 

\subsubsection{Models with prior ambiguity, aka Frequentist models\label{subsec:Models-with-prior-ambiguity}}

We now incorporate prior ambiguity into our framework. Suppose that
Alice is unable to form a single prior, perhaps because she is ambiguity-averse
in the sense of \citet{maccheroni2006ambiguity}. Instead, she posits
a structured set of models,
\begin{equation}
\mathcal{Q}:=\bigl\{\pi(\theta)\otimes p_{\theta}(\bm{x}):\pi\in\Delta(\Theta)\bigr\},\label{eq:ambiguity-set}
\end{equation}
where $\Delta(\Theta)$ denotes the set of all probability distributions
over $\theta$, while continuing to treat the likelihood $p_{\theta}(\bm{x})$
as correctly specified. In the spirit of \citet{wald1950statistical},
Alice could then choose the decision rule that performs best against
the worst-case model in $\mathcal{Q}$ --- effectively, the one associated
with the least favorable prior --- thereby guarding against prior
ambiguity:
\[
\delta_{n,f}^{*}:=\argmin_{\delta}\left[\sup_{m\in\mathcal{Q}}\mathbb{E}_{m}\bigl[l_{n}(\theta,\delta)\bigr]\right].
\]

Because the Wald approach underpins much of frequentist analysis,
we refer to $\mathcal{Q}$ as a frequentist model and to $\delta_{n,f}^{*}$
as a frequentist (or minimax) decision rule. 

\subsubsection{Models with prior ambiguity and likelihood misspecification\label{subsec:Ambiguity_and_miss}}

Now suppose that Alice entertains the possibility that the likelihood
$p_{\theta}(x)$ employed in her frequentist model $\mathcal{Q}$
may not be correctly specified. Likelihood misspecification can arise
in two distinct ways. The first is misspecification of functional
form, e.g., specifying a Gaussian likelihood when the true data-generating
process follows a $t$-distribution. In our running example, however,
the outcomes are Bernoulli, so the likelihood is necessarily a product
of Bernoulli distributions and functional-form misspecification is
not a concern. In fact, as we show in Section 2.3, functional-form
misspecification can always be ameliorated by making the likelihood
sufficiently flexible. The misspecification that Alice confronts here
is of the second kind: the link between the parameter of interest
$\theta$ and the distribution over $x$ may be incorrectly specified.
In the running example, the true outcome distribution is $Y\sim\text{Bernoulli}(\theta_{P})$,
where $\theta_{P}$ is the average treatment effect in Pennsylvania.
Likelihood misspecification arises because, in general, $\theta\neq\theta_{P}$.

To accommodate misspecification concerns, we follow \citet{cerreia2026making}
and place a protective belt around the frequentist model $\mathcal{Q}$.
Formally, we posit a set of unstructured models of the form
\[
\mathcal{M}=\Bigl\{ m\in\Delta(\Theta,\mathcal{X}):\min_{q\in\mathcal{Q}}\,R_{q}(m)\leq K\Bigr\},
\]
where $R_{q}(m)=\text{KL}(m\,\|\,q)$ denotes the Kullback--Leibler
divergence and $K$ is a constant reflecting Alice's degree of ambiguity
aversion. Alice could then choose the decision rule that performs
best against the worst-case model in $\mathcal{M}$, thereby guarding
against both prior ambiguity and misspecification:
\begin{equation}
\delta_{n}^{*}:=\argmin_{\delta}\left[\sup_{m\in\mathcal{M}}\mathbb{E}_{m}\bigl[l_{n}(\theta,\delta)\bigr]\right].\label{eq:optimal_decision}
\end{equation}

For the remainder of this article, we decompose any generic model
$m$ as $m(\theta,\bm{x})=\pi(\theta)\otimes m_{\theta}(\bm{x})$,
and reserve the notation $p_{\theta}(\bm{x})$ for a reference likelihood
specification, which may itself be misspecified. Define $R_{\mathcal{Q}}(m):=\min_{q\in\mathcal{Q}}R_{q}(m)$.
Straightforward algebra yields
\begin{equation}
R_{\mathcal{Q}}(m)=\int\text{KL}\bigl(m_{\theta}(\cdot)\,\|\,p_{\theta}(\cdot)\bigr)\,d\pi(\theta),\label{eq:characterization_of_R_Q}
\end{equation}
so that the set of unstructured models can be equivalently written
as
\[
\mathcal{M}=\left\{ \pi(\theta)\otimes m_{\theta}(x)\;:\;\int\text{KL}\bigl(m_{\theta}(\cdot)\,\|\,p_{\theta}(\cdot)\bigr)\,d\pi(\theta)\leq K,\;\;\pi\in\Delta(\Theta)\right\} .
\]
The class of unstructured models thus comprises every possible prior
$\pi$, paired with all likelihoods $m_{\theta}(\cdot)$ satisfying
the integrated KL constraint.

The set $\mathcal{M}$ thereby expands $\mathcal{Q}$ by adding a
protective radius of size $K$ in units of the integrated KL divergence
$\int\text{KL}\bigl(m_{\theta}(\cdot)\,\|\,p_{\theta}(\cdot)\bigr)\,d\pi(\theta)$.
Because $\mathcal{Q}$ already accommodates unrestricted prior ambiguity,
this protective belt serves entirely to guard against likelihood misspecification.
Within the class of alternative models, however, the prior and likelihood
may interact in subtle ways: the constraint permits $m_{\theta}(\cdot)$
to deviate substantially from $p_{\theta}(\cdot)$ for certain values
of $\theta$, provided the associated prior $\pi$ places low weight
on those values.

To further understand this interaction between the prior and likelihood,
it is instructive to see how likelihood misspecification would be
modeled in the absence of prior ambiguity. If Alice were able to commit
to a single prior $\pi$, then $\mathcal{Q}$ would be a singleton
and $\mathcal{M}$ would consist of all models $\pi(\theta)\otimes m_{\theta}(\bm{x})$
such that $m_{\theta}(\cdot)\in\mathcal{M}_{\bm{x}|\theta}(\pi)$,
where
\[
\mathcal{M}_{\bm{x}|\theta}(\pi):=\left\{ m_{\theta}(\cdot):\int\text{KL}\bigl(m_{\theta}(\cdot)\,\|\,p_{\theta}(\cdot)\bigr)\,d\pi(\theta)\leq K\right\} .
\]
When the prior is fixed, the class of candidate likelihoods thus depends
on $\pi$; the prior shapes which deviations from the reference likelihood
are admissible. In the general case with unrestricted priors, $\mathcal{M}$
can be interpreted as the union of $\pi(\theta)\otimes\mathcal{M}_{\bm{x}|\theta}(\pi)$
over all $\pi\in\Delta(\Theta)$.

\subsection{Nuisance and structural parameters\label{subsec:Nuisance-and-structural-parameters}}

A nuisance parameter is an unknown quantity that enters the likelihood
but does not affect the loss function. In our running example, if
the outcomes are distributed as $\mathcal{N}(\mu,\sigma^{2})$ but
Alice is solely interested in learning about the mean treatment effect
$\mu$, then $\sigma^{2}$ is a nuisance parameter. Nuisance parameters
make the likelihood more flexible and can be used to ameliorate functional-form
misspecification; indeed, one can allow for nonparametric specifications
by making the nuisance parameters infinite-dimensional. As noted earlier,
however, nuisance parameters cannot address the second type of misspecification
concerning the link between the parameter of interest and the data.
Even if Alice were to adopt a fully nonparametric specification of
the outcome distribution, she would still face misspecification concerns,
as treatment effects in Pennsylvania may differ fundamentally from
those in the broader US population.

In contrast, we refer to the parameters that enter the utility function
directly as structural parameters. In what follows, $\theta$ denotes
the full collection of unknown parameters, which may include both
structural and nuisance components. The structural parameters are
modeled as known functions $\mu(\theta)$ of $\theta$. With this
notation, the loss functions introduced earlier take the more general
form
\[
l_{n}(\theta,\delta)=\ell\bigl(\sqrt{n}\,(\mu(\theta)-\delta)\bigr);\quad\delta\in\mu(\Theta),
\]
for estimation loss, and
\[
l_{n}(\theta,\delta)=\sqrt{n}\bigl(\mu(\theta)\,\mathbf{1}\{\mu(\theta)\geq0\}-\mu(\theta)\,\delta\bigr);\quad\delta\in\{0,1\},\mu(\theta)\in\mathbb{R}
\]
for treatment-assignment loss.

The definitions of Bayesian, frequentist, and misspecified models
remain unchanged; the introduction of nuisance parameters affects
only the form of the loss functions.

\subsection{Alternative approaches to misspecification: A comparison\label{subsec:Alternative-approaches}}

\citet{andrews2025purpose} define an econometric model $(\theta,p_{\theta}(\cdot))$
as a combination of the parameter and the likelihood.\footnote{In the terminology of \citet{andrews2025purpose}, the likelihood
is referred to as a data-generating process.} Apart from the prior over $\theta$, this coincides with the definition
of a Bayesian statistical model. Introducing the prior allows us to
account for prior ambiguity, which, as we have seen, plays a key role
even in the standard frequentist approach.

A growing recent literature has considered accounting for misspecification
through partial identification; see, e.g., \citet{ishihara2021evidence,yata2021optimal,christensen2022optimal,montiel2026decision}.
This literature supposes that the parameter of interest $\theta$
lies within a bounded distance, $d(\theta,\theta_{P})\leq L$, of
an identifiable parameter $\theta_{P}$. In our running example, this
would require Alice to assume that the population treatment effect
$\theta$ differs from the treatment effect in Pennsylvania $\theta_{P}$
by at most $L$. While bounds of this form can arise naturally in
a number of applications, the approach falls short as a general framework
for misspecification for several reasons. 

First, unlike our formalism, bounding the parameters directly lacks
an axiomatic justification. Second, the approach is sensitive to the
choice of the distance measure $d(\cdot)$, which in turn makes $L$
difficult to calibrate. In the Bernoulli setting, for instance, it
would not be reasonable to use Euclidean distance $d(\theta,\theta_{P})=|\theta-\theta_{P}|$,
as it does not respect the constraint $\theta\in[0,1]$. Third, and
most importantly, imposing a uniform bound $d(\theta,\theta_{P})\leq L$
for all $\theta_{P}$ implies comparisons across different values
of $\theta_{P}$ that may be at odds with the decision-maker's actual
preferences over ambiguity and misspecification. To see why, note
that there is always epistemic uncertainty over the value of $\theta_{P}$.
There is no a priori reason to believe that the bound $d(\theta,\theta_{P})\leq L$
provides the same degree of protection against misspecification when
$\theta_{P}=0.9$ as when $\theta_{P}=0.1$. Depending on Alice's
preferences, e.g., her loss function, she may be less concerned about
misspecification at high values of $\theta_{P}$ (which suggests the
treatment is highly effective) than at low values. The constraint
$d(\theta,\theta_{P})\leq L$ is not directly linked to her attitudes
toward misspecification; it is a constraint on parameters, not on
payoff-relevant quantities.

In more closely related work, \citet{andrews2020informativeness},
\citet{bonhomme2022minimizing} and \citet{christensen2023counterfactual}
characterize misspecification through a statistical distance $d(m_{\theta}(\cdot),p_{\theta}(\cdot))$,
e.g., KL divergence, over likelihoods. The specific setups and goals
of these works differ substantially from our own.\footnote{\citet{andrews2020informativeness} study the relationship between
descriptive statistics and structural parameters. \citet{bonhomme2022minimizing}
analyze estimation under local misspecification, i.e., when $d(m_{\theta}(\cdot),p_{\theta}(\cdot))\to0$.
\citet{christensen2023counterfactual} study partial identification
of $\theta$. These goals are all distinct from ours: devising optimal
decisions under global ambiguity and misspecification.} Quite apart from this, however, a uniform bound on the KL divergence
over likelihoods, of the form $\sup_{\theta}d(m_{\theta}(\cdot),p_{\theta}(\cdot))\leq L$,
is subject to the same criticism as a bound $d(\theta,\theta_{P})\leq L$
over parameters: there is no a priori reason to believe that it provides
the same degree of protection against misspecification across different
values of $\theta$. As before, Alice may be less concerned about
misspecification at high values of $\theta$ than at low values. Furthermore,
quantities such as KL divergence are sensitive to the choice of the
reference measure $p_{\theta}(\cdot)$, and KL divergences evaluated
at different parameter values such as $R_{p_{\theta_{1}}}(m_{\theta_{1}})$
and $R_{p_{\theta_{2}}}(m_{\theta_{2}})$ are not directly comparable. 

Our formulation avoids this problem because we first postulate an
infinite-dimensional ambiguity set $\mathcal{Q}$ and then uniformly
expand it by a KL radius $K$. As in \citet{cerreia2026making}, the
value of $K$ can be tied to the decision-maker's underlying preferences
over ambiguity and misspecification. But rather remarkably, as it
turns out, our optimal decisions are asymptotically independent of
the choice of $K$.

\section{Characterizing Optimal Decisions\label{sec:Characterizing-Optimal-Decisions}}

We now characterize optimal decisions under ambiguity and misspecification.
For reasons that will become apparent shortly, it is convenient to
start with utility maximization rather than loss minimization. Following
the framework described in Section \ref{subsec:Ambiguity_and_miss},
the optimal decision rule takes the form:
\[
\delta_{n}^{*}:=\argmax_{\delta}\left[\inf_{m\in\mathcal{M}}\,\mathbb{E}_{m}\bigl[u_{n}(\theta,\delta)\bigr]\right]=\argmax_{\delta}\,\inf_{m}\,\left\{ \mathbb{E}_{m}\bigl[u_{n}(\theta,\delta)\bigr]:R_{\mathcal{Q}}(m)\le K\right\} .
\]

The function $R_{\mathcal{Q}}(\cdot):\Delta(\Theta\times\mathcal{X})\to\mathbb{R}$
is convex, so we may apply a minimax theorem to show that for each
$K$ at which the constraint binds, there exists a multiplier $\lambda>0$
such that\footnote{Formally, we suppose that $\Theta\times\mathcal{X}$ is compact, which
implies $\Delta(\Theta\times\mathcal{X})$ is compact. Then, an application
of Sion's minimax theorem for each $\delta$ shows that 
\begin{align*}
 & \max_{\delta}\,\inf_{m\in\Delta(\Theta\times\mathcal{X})}\,\left\{ \mathbb{E}_{m}\bigl[u_{n}(\theta,\delta)\bigr]:R_{\mathcal{Q}}(m)\le K\right\} \\
 & =\max_{\delta}\,\inf_{m\in\Delta(\Theta\times\mathcal{X})}\,\max_{\lambda\ge0}\left\{ \mathbb{E}_{m}\bigl[u_{n}(\theta,\delta)\bigr]+\lambda(R_{\mathcal{Q}}(m)-K)\right\} \\
 & =\max_{\delta}\,\max_{\lambda\ge0}\,\inf_{m\in\Delta(\Theta\times\mathcal{X})}\left\{ \mathbb{E}_{m}\bigl[u_{n}(\theta,\delta)\bigr]+\lambda(R_{\mathcal{Q}}(m)-K)\right\} .
\end{align*}
The claim then follows by interchanging the max operations.}
\begin{align}
\delta_{n}^{*} & =\argmax_{\delta}V_{n}(\delta),\textrm{ where}\nonumber \\
V_{n}(\delta) & :=\inf_{m}\left\{ \mathbb{E}_{m}\bigl[u_{n}(\theta,\delta)\bigr]+\lambda\,R_{\mathcal{Q}}(m)\right\} .\label{eq:variational_criterion}
\end{align}
Following \citet{cerreia2026making}, we refer to $V_{n}(\cdot)$
as the variational decision criterion. In fact, the variational criterion
has a direct axiomatic justification through preferences, as in \citet{cerreia2026making},
more so than the constraint version (\ref{eq:optimal_decision}),
which is only used as a motivation for the optimal decision. This
axiomatic justification is discussed in more detail in Section \ref{sec:Axiomatics}.

Recalling the definition $R_{\mathcal{Q}}(m):=\min_{q\in\mathcal{Q}}R_{q}(m)$
and interchanging the the order of the $\min_{q\in\mathcal{Q}}$ and
$\inf_{m}$ operations, we can write
\[
V_{n}(\delta)=\min_{q\in\mathcal{Q}}\inf_{m}\left\{ \mathbb{E}_{m}\bigl[u_{n}(\theta,\delta)\bigr]+\lambda\,R_{q}(m)\right\} .
\]
The Donsker-Varadhan variational formula yields
\begin{equation}
\inf_{m}\left\{ \mathbb{E}_{m}\bigl[u_{n}(\theta,\delta)\bigr]+\lambda\,R_{q}(m)\right\} =-\lambda\ln\mathbb{E}_{q}\!\left[e^{-u_{n}(\theta,\delta)/\lambda}\right].\label{eq:DV formula}
\end{equation}
Converting back to the loss function via $l_{n}(\theta,\delta)=-u_{n}(\theta,\delta)$,
we obtain
\begin{equation}
V_{n}(\delta)=-\lambda\ln\left\{ \max_{q\in\mathcal{Q}}\,\mathbb{E}_{q}\left[e^{l_{n}(\theta,\delta)/\lambda}\right]\right\} ,\label{eq:decision-risk-eq}
\end{equation}
and consequently, the optimal decision can be characterized as
\begin{equation}
\delta_{n}^{*}=\argmin_{\delta}\,\max_{q\in\mathcal{Q}}\,\mathbb{E}_{q}\!\left[e^{l_{n}(\theta,\delta)/\lambda}\right].\label{eq:optimal_decision_Q}
\end{equation}

So far, the calculations above follow \citet{cerreia2026making}.
However, due to the special structure of $\mathcal{Q}$ in our setting---comprising
all possible priors paired with the reference likelihood $p(\bm{x}|\theta)$---we
can simplify (\ref{eq:optimal_decision_Q}) further:
\begin{equation}
\delta_{n}^{*}=\argmin_{\delta}\,\max_{\pi\in\Delta(\Theta)}\int\mathbb{E}_{p(\bm{x}|\theta)}\!\left[e^{l_{n}(\theta,\delta)/\lambda}\right]d\pi(\theta).\label{eq:optimal_decision_pi}
\end{equation}
In this expression, the quantity $R_{n}(\theta,\delta):=\mathbb{E}_{p(\bm{x}|\theta)}\!\left[e^{l_{n}(\theta,\delta)/\lambda}\right]$
admits a natural interpretation as the frequentist risk of the decision
rule $\delta$ under the reference likelihood $p(\bm{x}|\theta)$,
evaluated with respect to the exponentiated loss $e^{l_{n}(\theta,\delta)/\lambda}$. 

Notice that (\ref{eq:optimal_decision_pi}) corresponds to a standard
minimax decision framework under exponentiated loss: we can interpret
the optimal decision as the result of a two-player game in which nature
chooses the least favorable prior while the decision-maker chooses
the optimal rule. Equation (\ref{eq:optimal_decision_pi}) is therefore
a key result of this article. It establishes that optimal decisions
under both ambiguity and misspecification are equivalent to optimal
decisions under ambiguity alone, but with an exponentiated loss function.
The result also reveals how the two sources of concern separate naturally.
The effect of misspecification is to transform the loss function into
an exponentiated version; intuitively, the decision-maker magnifies
the impact of large losses while attenuating the impact of small ones.
The effect of ambiguity, as in the setting without misspecification,
manifests in the search for the least favorable prior.

\section{Local Asymptotics with Global Misspecification}

It is rarely feasible to solve the minimax problem (\ref{eq:optimal_decision_pi})
exactly in finite samples. Instead, as is standard even in classical
frequentist (i.e., minimax) settings, we turn to local asymptotic
approximations.

Following the usual approach, we fix a reference parameter $\theta_{0}$
and consider local perturbations of the form $\theta_{0}+h/\sqrt{n}$.
Priors $\pi(\theta)$ over $\theta$ are then mapped to local priors
$\pi(h)$ over $h$.

In the case of treatment-assignment loss, $l_{n}(\theta,\delta)=n\bigl(\mu(\theta)\,\mathbf{1}\{\mu(\theta)\geq0\}-\mu(\theta)\,\delta\bigr)$,
local asymptotics arise naturally from the global minimax problem
(\ref{eq:optimal_decision_pi}). The key observation is that the least
favorable prior concentrates its mass on regions where the treatment
effect $\mu(\theta)$ is of order $1/\sqrt{n}$. When $\mu(\theta)$
is of a higher order of magnitude than $1/\sqrt{n}$, determining
the optimal assignment is asymptotically trivial. Conversely, when
$\mu(\theta)$ is of a lower order of magnitude than $1/\sqrt{n}$,
the difference between treatment and status quo is negligible, so
the loss is close to zero regardless of the choice of $\delta$. It
is therefore natural to choose a reference parameter $\theta_{0}$
satisfying $\mu(\theta_{0})=0$, since the least favorable prior would
concentrate around this value in any case. As \citet{hirano2009asymptotics}
showed, these same considerations apply to treatment-assignment problems
in the absence of misspecification as well.

But how should one choose a reference $\theta_{0}$ for estimation
loss, and what is the meaning of local asymptotics in this setting?
We offer two interpretations.

The first is that local asymptotics amounts to localizing prior ambiguity
around a reference parameter $\theta_{0}$ that the decision-maker
believes is close to the true value. In our running example, Alice
may have a priori reason to believe that the true treatment effect
lies near $\theta_{0}$, even if she is uncertain about its exact
value. It would then be natural to restrict her ambiguity set to a
$1/\sqrt{n}$ neighborhood of $\theta_{0}$. Because this prior information
is obtained independently of the data, likelihood misspecification
does not affect the choice of $\theta_{0}$. Consequently, we can
localize the priors, even as we can --- and do --- allow for global
misspecification of the likelihood.

Under the second interpretation, the choice of the reference $\theta_{0}$
is itself subject to adversarial optimization. The basic idea, following
\citet{ibragimov1981problem}, is to decompose the global minimax
problem into two stages: first, fix a reference $\theta_{0}$ and
evaluate the local minimax performance of a decision rule $\delta$
against local alternatives of the form $\theta_{0}+h/\sqrt{n}$; then,
in an outer step, select the least favorable reference $\theta_{0}$. 

Here, we focus primarily on a theoretical development of the first
interpretation. The second interpretation is detailed in Section \ref{subsec:Non-local-priors},
while the theory is developed in Appendix \ref{sec:Local-Asymptotics-with-non-local-prior}.

\subsection{Parametric models: Setup}

We assume that the data consists of an i.i.d collection of outcomes
$\ensuremath{\bm{x}:=\{Y_{i}^ {}\}_{i=1}^{n}}$. Under the reference
likelihood, $Y_{i}$ is distributed as $P_{\theta}$. Let $\nu$ denote
a dominating measure for $\{P_{\theta}:\theta\in\mathbb{R}^{d}\}$,
and set $p_{\theta}^ {}:=dP_{\theta}^ {}/d\nu$. We require the reference
class of likelihoods, $\{P_{\theta}^ {}\}_{\theta}$, to be quadratic
mean differentiable (qmd): 

\begin{asm1} The class $\{P_{\theta}^ {}:\theta\in\mathbb{R}^{d}\}$
is qmd around $\theta_{0}^ {}$, i.e., there exists a score function
$\psi_{}(\cdot)$ such that for each $h^ {}\in\mathbb{R}^{d},$
\[
\int\left[\sqrt{p_{\theta_{0}^ {}+h^ {}}}-\sqrt{p_{\theta_{0}}}-\frac{1}{2}h^{\intercal}\psi_{}\sqrt{p_{\theta_{0}^ {}}}\right]^{2}d\nu=o(\vert h^ {}\vert^{2}).
\]
Furthermore, the information matrix $I_{0}:=\mathbb{E}_{\theta_{0}}[\psi\psi_{}^{\intercal}]$
is invertible. \end{asm1}

In the illustrative example, the outcomes are modeled as Bernoulli,
so Assumption 1 holds with $\psi_{}(y)=\left(\theta_{0}^ {}(1-\theta_{0}^ {})\right)^{-1}(y-\theta_{0}^ {})$.
More broadly, this assumption is rather mild and satisfied for almost
all commonly used likelihood models, including the Normal, Cauchy,
Exponential, and Poisson distributions. It is important to bear in
mind that Assumption 1 constrains the reference class of likelihoods,
$p_{\theta}$, not the actual likelihoods, which are unknown. 

We also assume that the function $\mu(\theta)$, which maps $\theta$
to structural parameters, satisfies a mild differentiability condition:

\begin{asm2} There exists $\dot{\mu}_{0}\in\mathbb{R}^{d}\backslash\{0\}$
and $\epsilon_{n}\to0$ independent of $h$ such that $\sqrt{n}\left(\mu(\theta_{0}+h/\sqrt{n})-\mu(\theta_{0})\right)=\dot{\mu}_{0}^{\intercal}h+\epsilon_{n}\vert h\vert^{2}$
for all bounded $h$. \end{asm2}

Let $P_{n,h}$ denote the joint probability measure over the iid $Y_{1},\dots,Y_{n}$
when each $Y_{i}\sim P_{\theta_{0}+h/\sqrt{n}}$, and let $\mathbb{E}_{n,h}[\cdot]$
denote the corresponding expectation. Under local asymptotics, the
minimal risk attained under the minimax problem (\ref{eq:optimal_decision_pi})
can be written as: 
\begin{equation}
V_{n}^{*}=\min_{\delta}\,\max_{\pi(h)}\int\mathbb{E}_{n,h}\left[e^{l_{n}(\theta_{0}+h/\sqrt{n},\delta)/\lambda}\right]d\pi(h).\label{eq:minimax_value_local_approx}
\end{equation}

\subsection{Limit approximations and the Gaussian limit experiment\label{subsec:Limit-approximations}}

Define the standardized score statistic as
\[
x_{n}=\frac{I_{0}^{-1/2}}{\sqrt{n}}\sum_{i=1}^{n}\psi(Y_{i}).
\]
It is well known, see e.g., \citet[Chapter 7]{van2000asymptotic},
that quadratic mean differentiability (Assumption 1) implies $\mathbb{E}_{n,0}[\psi(Y_{i})]=0$.
Then, by the central limit theorem,
\begin{equation}
x_{n}\xrightarrow[P_{n,0}]{d}x\sim\mathcal{N}(0,I).\label{eq:Convergence of score process}
\end{equation}
Assumption 1 also implies the important property of Local Asymptotic
Normality (LAN; \citealp[Chapter 7]{van2000asymptotic}):
\begin{equation}
\ln\frac{dP_{n,\theta_{0}+h/\sqrt{n}}}{dP_{n,\theta_{0}}}=h^{\intercal}I_{0}^{1/2}x_{n}-\frac{1}{2}h^{\intercal}I_{0}h+o_{P_{n,0}}(1),\ \textrm{uniformly over bounded }h.\label{eq:LAN property}
\end{equation}

Consider now a limit experiment in which the decision-maker observes
a $d$-dimensional signal $x$, posited to be drawn from a reference
Gaussian likelihood, $P_{h}(x)\sim\mathcal{N}(I_{0}^{1/2}h,I)$. By
the properties of the Gaussian distribution,
\[
\ln\frac{dP_{h}}{dP_{0}}=h^{\intercal}I_{0}^{1/2}x_{}-\frac{1}{2}h^{\intercal}I_{0}h.
\]
It follows from (\ref{eq:Convergence of score process}) and (\ref{eq:LAN property})
that the reference likelihood ratios in the finite-sample experiment
converge to their counterparts in the limit experiment:
\[
\ln\frac{dP_{n,\theta_{0}+h/\sqrt{n}}}{dP_{n,\theta_{0}}}\xrightarrow[P_{n,0}]{d}\ln\frac{dP_{h}}{dP_{0}},\ \textrm{for each }h.
\]
Furthermore, Assumption 2 implies that the loss functions admit asymptotic
approximations. For estimation-loss, defining $\tilde{\delta}_{n}=\sqrt{n}\dot{\mu}_{0}^{\intercal}(\delta_{n}-\theta_{0})$
and assuming $\tilde{\delta}_{n}$ has a weak limit $\tilde{\delta}$,
we have
\begin{equation}
l_{n}(\theta_{0}+h/\sqrt{n},\delta_{n})\equiv\ell\left(\sqrt{n}\left(\mu(\theta_{0}+h/\sqrt{n})-\delta_{n}\right)\right)\rightsquigarrow\ell(\dot{\mu}_{0}^{\intercal}h-\tilde{\delta}),\label{eq:estimation_loss_approx}
\end{equation}
where $`$$\rightsquigarrow$' represents weak convergence. For treatment-assignment
loss, since the reference parameter satisfies $\mu(\theta_{0})=0$,
Assumption 2 implies
\begin{equation}
l_{n}(\theta_{0}+h/\sqrt{n},a)\to\dot{\mu}_{0}^{\intercal}h\,\mathbf{1}\{\dot{\mu}_{0}^{\intercal}h\geq0\}-(\dot{\mu}_{0}^{\intercal}h)a,\ \textrm{uniformly over }a\in[0,1]\textrm{ and bounded }h.\label{eq:treatment_loss_approx}
\end{equation}

Convergence of likelihood ratios implies asymptotic equivalence between
the actual and limit experiments in the sense of \citet{le1986asymptotic}.
Combined with the loss function approximations above, this suggests
that the minimax value $V_{n}^{*}$ in (\ref{eq:minimax_value_local_approx})
should converge to the minimax value $V^{*}$ in the limit experiment,
where 
\begin{align}
V^{*} & :=\min_{\tilde{\delta}}\,\max_{\pi(h)}\int\mathbb{E}_{h}\left[e^{l(h,\tilde{\delta})/\lambda}\right]d\pi(h),\ \textrm{with}\label{eq:minimax_value_limit_experiment}\\
l(h,\tilde{\delta}) & =\begin{cases}
\ell(\dot{\mu}_{0}^{\intercal}h-\tilde{\delta}) & \textrm{for estimation loss},\\
\dot{\mu}_{0}^{\intercal}h\,\left\{ \mathbf{1}\{\dot{\mu}_{0}^{\intercal}h\geq0\}-\tilde{\delta}\right\}  & \textrm{for treatment-assignment loss}.
\end{cases}\nonumber 
\end{align}
Formal statements to this effect are provided in Section \ref{subsec:Formal-results-parametric}.

It is instructive to compare our asymptotic approach with the more
traditional analysis of local misspecification. There, one embeds
the parametric likelihood in a larger class indexed by additional
nuisance parameters and studies the effect of local deviations of
these parameters from zero. As $n\to\infty$, the enlarged parametric
model still admits a Gaussian limit experiment. Consequently, as highlighted
in \citet{andrews2020informativeness}, \citet{bonhomme2022minimizing}
and \citet{muller2024locally}, local misspecification manifests as
asymptotic bias within that experiment. Our framework differs fundamentally
since it permits the Gaussian likelihood approximation itself to be
globally misspecified. This is possible because our asymptotic theory
approximates only the reference finite-sample likelihood ratios with
Gaussian likelihoods; it makes no claim about the convergence of the
true likelihood ratios. Global misspecification consequently manifests
not as bias in the Gaussian limit, but as an exponential tilting of
the loss function. 

Equation (\ref{eq:minimax_value_limit_experiment}) suggests that
asymptotically optimal decision rules can be derived by solving the
minimax problem in the limit experiment and mapping the solutions
back to the finite-sample setting. Since optimal decisions are considerably
easier to characterize under Gaussian likelihoods, this reduction
illustrates the key benefit of the local asymptotic approach.

\subsection{Characterization of optimal decisions in the limit experiment\label{subsec:Characterization-of-optimal-decisions}}

We begin with estimation-loss. Since $\ell(\cdot)$ is bowl-shaped,
so is $e^{\ell(\cdot)/\lambda}$. It then follows from Anderson's
lemma, see e.g., \citet[Proposition 8.6]{van2000asymptotic}, that
the minimax-optimal estimator in the limit experiment --- the solution
to (\ref{eq:minimax_value_limit_experiment}) --- is simply
\[
\tilde{\delta}^{*}=\dot{\mu}_{0}^{\intercal}I_{0}^{-1/2}x.
\]
Remarkably, $\tilde{\delta}^{*}$ is independent of $\lambda$, which
governs the degree of misspecification. In fact, $\tilde{\delta}^{*}$
coincides with most efficient estimator, the maximum likelihood estimator,
under correct misspecification, which corresponds to $\lambda=\infty$.
In other words, the optimal estimator under ambiguity and misspecification
is identical to the optimal estimator under ambiguity alone.

For treatment-assignment loss, Anderson's lemma does not apply. Nevertheless,
as the following proposition shows, the optimal decision again takes
a simple form: it recommends treatment whenever the MLE of the treatment
effect under correct specification is positive. 

\begin{prop}\label{prop1}The minimax-optimal decision rule in the
limit experiment under the treatment assignment loss is $\tilde{\delta}^{*}=\mathbf{1}\{\dot{\mu}_{0}^{\intercal}I_{0}^{-1/2}x\geq0\}$.
The corresponding least-favorable prior is a symmetric two-point prior
supported on $(-h^{*},h^{*})$ with $h^{*}:=\frac{\Delta^{*}}{\sigma^{2}}I_{0}^{-1}\dot{\mu}$,
where $\sigma^{2}:=\dot{\mu}_{0}^{\intercal}I_{0}^{-1}\dot{\mu}_{0}$
and 
\[
\Delta^{*}=\arg\max_{\Delta\ge0}\left\{ \left(e^{\frac{\Delta}{\lambda}}-1\right)\Phi(-\Delta/\sigma)\right\} .
\]
 \end{prop}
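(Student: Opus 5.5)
The plan is to verify a saddle point. That means exhibiting a pair $(\tilde{\delta}^{*},\pi^{*})$ such that (i) $\tilde{\delta}^{*}$ is Bayes against $\pi^{*}$ for the exponentiated loss $e^{l(h,\tilde\delta)/\lambda}$, and (ii) $\pi^{*}$ is a best response to $\tilde{\delta}^{*}$, i.e.\ the risk $R(h,\tilde{\delta}^{*}):=\mathbb{E}_{h}[e^{l(h,\tilde{\delta}^{*})/\lambda}]$ attains its supremum over $h$ on the support $\{-h^{*},h^{*}\}$. Standard minimax arguments then give that $\tilde{\delta}^{*}$ solves (\ref{eq:minimax_value_limit_experiment}) and $\pi^{*}$ is least favorable.

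First I reduce the problem to one dimension. Put $s:=\dot{\mu}_{0}^{\intercal}I_{0}^{-1/2}x\sim\mathcal{N}(\dot\mu_0^\intercal h,\sigma^{2})$ under $P_h$, and $\tau:=\dot\mu_0^\intercal h$. For the candidate rule $\mathbf{1}\{s\ge0\}$ the loss is $0$ when the decision is correct and $|\tau|$ otherwise. Hence
\[
R(h,\tilde\delta^*)=1+\bigl(e^{|\tau|/\lambda}-1\bigr)\Phi(-|\tau|/\sigma),
\]
which depends on $h$ only through $|\tau|$. Its supremum is attained at $|\tau|=\Delta^{*}$. Existence of the maximizer follows because the bracket vanishes at $0$ and tends to $0$ as $\Delta\to\infty$, since the Gaussian tail dominates the exponential. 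Among all $h$ with $\dot\mu_0^\intercal h=\pm\Delta^*$, I choose $h^{*}=\frac{\Delta^{*}}{\sigma^{2}}I_{0}^{-1}\dot\mu_0$, the point of minimal $I_0$-norm. This satisfies $\dot\mu_0^\intercal h^*=\Delta^*$, which gives (ii).

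For (i), with $\pi^{*}$ placing mass $1/2$ on $\pm h^{*}$, the posterior expected exponentiated loss of action $a\in\{0,1\}$ is proportional to a likelihood-weighted sum. At $h^*$ (where $\tau=\Delta^*>0$) action $0$ costs $e^{\Delta^*/\lambda}$ and action $1$ costs $1$. At $-h^*$ action $1$ costs $e^{\Delta^*/\lambda}$ and action $0$ costs $1$. So the Bayes rule chooses $a=1$ iff $p_{h^*}(x)\ge p_{-h^*}(x)$, because the cost differences are symmetric: both equal $e^{\Delta^*/\lambda}-1$. The log likelihood ratio is $2h^{*\intercal}I_0^{1/2}x=\frac{2\Delta^*}{\sigma^2}\dot\mu_0^\intercal I_0^{-1/2}x=\frac{2\Delta^*}{\sigma^2}s$. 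The Bayes rule is therefore $\mathbf{1}\{s\ge0\}=\tilde\delta^*$, and randomized actions $a\in[0,1]$ do no better because the loss is linear in $a$ before exponentiation and the comparison is pointwise. The choice of $h^*$ proportional to $I_0^{-1}\dot\mu_0$ is exactly what makes the likelihood ratio a function of $s$ alone.

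The main obstacle is the degenerate case $\Delta^{*}=0$, together with checking that the Bayes-risk value equals $\sup_h R(h,\tilde\delta^*)$. The latter is automatic once (i) and (ii) hold, since the Bayes risk of $\tilde\delta^*$ under $\pi^*$ averages $R(\pm h^*,\tilde\delta^*)$, which both equal the maximum. For $\Delta^*>0$ I expect the argument to go through directly. I will confirm $\Delta^*>0$ by noting that the derivative of $(e^{\Delta/\lambda}-1)\Phi(-\Delta/\sigma)$ at $0$ equals $\tfrac{1}{2\lambda}>0$. I will also handle the measure-zero tie $s=0$ by convention.
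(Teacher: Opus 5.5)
Your argument is the same Nash-equilibrium verification the paper uses. $\tilde\delta^{*}$ is Bayes against the symmetric two-point prior because the posterior odds of $\pm h^{*}$ are monotone in $s=\dot\mu_0^{\intercal}I_0^{-1/2}x$. Its risk $R(h,\tilde\delta^{*})=1+(e^{|\tau|/\lambda}-1)\Phi(-|\tau|/\sigma)$ depends on $h$ only through $|\dot\mu_0^{\intercal}h|$, so nature's best responses are supported on $\{h:|\dot\mu_0^{\intercal}h|=\Delta^{*}\}$. For nonrandomized rules this is complete. Your check that $\Delta^{*}>0$ (slope $1/(2\lambda)$ at the origin) is what makes the comparison of the two actions strict, and it is a useful addition the paper leaves implicit.

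The step to correct is your justification for randomized actions. Linearity of $l(h,a)$ in $a$ makes $e^{l(h,a)/\lambda}$ strictly convex, not linear, in $a$, and a convex function on $[0,1]$ can be minimized in the interior. Suppose $a\in[0,1]$ is inserted into the loss. Then the posterior expected exponentiated loss against $\pi^{*}$ is $(1-q)e^{\Delta^{*}(1-a)/\lambda}+q\,e^{\Delta^{*}a/\lambda}$, where $q$ is the posterior probability of $-h^{*}$ and $\ln\{(1-q)/q\}=2\Delta^{*}s/\sigma^{2}$. Its minimizer is $a=\tfrac12+\lambda s/\sigma^{2}$ whenever $|s|<\sigma^{2}/(2\lambda)$. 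So under this fractional reading $\tilde\delta^{*}$ is not Bayes against $\pi^{*}$. In fact it is not minimax in that class: switching to $a=\tfrac12$ on a small band around $s=0$ strictly lowers its worst-case risk, since on the band $2e^{\Delta^{*}/(2\lambda)}<e^{\Delta^{*}/\lambda}+1$.

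The correct reason randomized rules do no better is different. A rule that assigns treatment with probability $a(x)$ through an external device has exponentiated risk $a\,e^{l(h,1)/\lambda}+(1-a)\,e^{l(h,0)/\lambda}$. This is linear in $a$, so the pointwise comparison is won at an endpoint. The paper's remark that its proof covers $\delta\in[0,1]$ inserted into the loss has the same problem. You should therefore state explicitly that randomization is meant in this external sense, or restrict to $\{0,1\}$-valued rules.
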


As with the optimal estimator, the optimal treatment-assignment rule
is independent of the degree of misspecification. Intuitively, these
results arise because misspecification under our formulation is unstructured
and therefore symmetric around the reference Gaussian likelihood.
Since the loss functions are also symmetric around the reference $\theta_{0}$,
any estimator that is not efficient under correct specification would
break this symmetry. Because nature chooses the least favorable likelihood
specification given the decision-maker's choice of estimator, departures
from symmetry would necessarily incur higher decision risk. It is
therefore always optimal for the decision-maker to proceed as if the
likelihood were correctly specified and select the resulting optimal
decision rule.

\subsection{Formal results\label{subsec:Formal-results-parametric}}

We now formally establish the asymptotic equivalence of experiments
through two results. First, we show that the minimax value $V^{*}$
in the limit experiment forms an asymptotic lower bound on the sequence
of optimal decision risks in the finite-sample experiments. Second,
we show that plug-in versions of the optimal limit-experiment decision
rules $\tilde{\delta}^{*}$ -- obtained by replacing $I_{0}^{-1/2}x$
with the finite sample MLE $\hat{\theta}_{\textrm{mle}}$ -- are
asymptotically optimal, in the sense that their decision-risks converge
to $V^{*}$. Specifically, we argue that the asymptotically decisions
are given by 
\begin{equation}
\hat{\delta}_{n}^{*}=\begin{cases}
\mu(\hat{\theta}_{\textrm{mle}}) & \textrm{for estimation,}\\
\mathbf{1}\left\{ \mu(\hat{\theta}_{\textrm{mle}})\geq0\right\}  & \textrm{for treatment assignment.}
\end{cases}\label{eq:asymptotically_optimal_decisions}
\end{equation}

As discussed at the beginning of this section our formal results require
localization of ambiguity. This involves restricting attention to
the set of compactly supported priors $\Delta_{M}(\mathcal{H})\equiv\{\pi(h):\textrm{supp}(\pi)\in[-M,M]\}$
for some $M<\infty$. This restriction is not needed for the lower
bound, but plays a role in establishing that the bound is attained
by the plug-in rules. 

\begin{thm}\label{thm:lower_bound}(Lower bound) Suppose that Assumptions
1 and 2 hold. Then, under both the estimation and treatment-assignment
loss functions, 
\[
\lim_{M\to\infty}\liminf_{n\to\infty}\,\min_{\delta}\,\max_{\pi(h)\in\Delta_{M}(\mathcal{H})}\int\mathbb{E}_{n,h}\left[e^{l_{n}(\theta_{0}+h/\sqrt{n},\delta)/\lambda}\right]d\pi(h)\ge V^{*}.
\]
\end{thm}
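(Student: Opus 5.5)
Our approach is the standard route to local asymptotic minimax lower bounds (as in \citealp[Chapter 8]{van2000asymptotic}), adapted to the exponentiated loss. The first step replaces the maximum over priors by Bayes risks. Fix $M$ and any prior $\pi\in\Delta_{M}(\mathcal{H})$; we may take it finitely supported, say on a grid $h_{1},\dots,h_{J}\subset[-M,M]^{d}$. For every sequence of rules $\delta_{n}$,
\[
\max_{\pi'\in\Delta_{M}(\mathcal{H})}\int\mathbb{E}_{n,h}\bigl[e^{l_{n}/\lambda}\bigr]d\pi'(h)\ge\sum_{j}\pi(h_{j})\,\mathbb{E}_{n,h_{j}}\bigl[e^{l_{n}(\theta_{0}+h_{j}/\sqrt{n},\delta_{n})/\lambda}\bigr].
\]
It therefore suffices to show that the $\liminf$ of the right-hand side is at least the Bayes risk in the limit experiment,
\[
B(\pi):=\inf_{\tilde{\delta}}\sum_{j}\pi(h_{j})\,\mathbb{E}_{h_{j}}\bigl[e^{l(h_{j},\tilde{\delta})/\lambda}\bigr].
\]
After that we take the supremum over finitely supported $\pi$ with support in $[-M,M]^d$, let $M\to\infty$, and invoke a minimax theorem in the limit experiment, $\sup_{\pi}B(\pi)=V^{*}$. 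For estimation loss this holds because $\tilde{\delta}^{*}=\dot{\mu}_{0}^{\intercal}I_{0}^{-1/2}x$ is an equalizer rule whose risk is approached by the Bayes risks under wide Gaussian priors truncated to $[-M,M]^d$. For treatment loss the two-point prior of Proposition \ref{prop1} is itself compactly supported, so no limit in $M$ is even needed.

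The second step passes from the finite-sample risk to the limit experiment. Choose a subsequence that attains the $\liminf$. Along it, by Prohorov's theorem, the joint laws of $(x_{n},\tilde{\delta}_{n})$ under $P_{n,0}$ are tight after compactifying the action space: for estimation loss we use $\bar{\mathbb{R}}$, and for treatment loss the actions already lie in $[0,1]$ once we allow randomization. Passing to a further subsequence, $(x_{n},\tilde{\delta}_{n})\rightsquigarrow(x,\tilde{\delta})$ under $P_{n,0}$, where $\tilde{\delta}$ is possibly randomized, $\tilde{\delta}=T(x,U)$ with $U$ independent uniform. By the LAN property (\ref{eq:LAN property}) and Le Cam's third lemma, $(x_{n},\tilde{\delta}_{n})\rightsquigarrow(x,\tilde{\delta})$ under $P_{n,h_{j}}$ as well, where now $x\sim\mathcal{N}(I_{0}^{1/2}h_{j},I)$. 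This is the usual asymptotic representation theorem. For each $j$ we combine this with the loss approximations (\ref{eq:estimation_loss_approx}) and (\ref{eq:treatment_loss_approx}) and with the portmanteau lemma. Since $e^{l/\lambda}$ is nonnegative and lower semicontinuous (bowl-shaped $\ell$ can be taken lower semicontinuous, or we truncate it at level $c$ and let $c\to\infty$), we get
\[
\liminf_{n}\mathbb{E}_{n,h_{j}}\bigl[e^{l_{n}/\lambda}\bigr]\ge\mathbb{E}_{h_{j}}\bigl[e^{l(h_{j},\tilde{\delta})/\lambda}\bigr].
\]
Summing over $j$ yields at least $B(\pi)$, because randomized rules cannot beat the Bayes risk.

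The main obstacle is the minimax identity $\sup_{\pi\in\cup_{M}\Delta_{M}}B(\pi)=V^{*}$ for the exponentiated loss, since $e^{\ell/\lambda}$ may be unbounded and even non-integrable under Gaussian tails; the case $\ell(z)=z^{2}$ with small $\lambda$ is an example. To handle it we would truncate the loss at $\min(e^{\ell/\lambda},c)$, which is bounded and bowl-shaped, so Anderson's lemma applies. Using Gaussian priors $\mathcal{N}(0,\tau^{2}I_{0}^{-1})$ restricted to $[-M,M]^d$, the Bayes rules approach $\tilde{\delta}^{*}$ and their Bayes risk approaches $\mathbb{E}_{0}[\min(e^{\ell(\dot{\mu}_{0}^{\intercal}I_{0}^{-1/2}x)/\lambda},c)]$ as $M,\tau\to\infty$. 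Letting $c\to\infty$ by monotone convergence then recovers $V^{*}$, including the case $V^{*}=\infty$. For treatment loss the loss is continuous in $h$ and linear in $a$, so the argument reduces to evaluating the two-point least-favorable prior directly.
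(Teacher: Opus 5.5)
Your proposal is correct. For the treatment-assignment loss it is essentially the paper's proof. The two-point prior of Proposition~\ref{prop1} lies in $\Delta_M(\mathcal{H})$ once $M$ is large. A subsequence along which $(x_n,\delta_n)$ converges under $P_{n,0}$ is transferred to $P_{n,\pm h^*}$ by Le Cam's third lemma. A Fatou/portmanteau step then bounds the finite-sample Bayes risk below by that of a limit rule, and the best response to $\pi_{\Delta^*}$ finishes.

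The one difference is how randomization is handled. The paper collapses the limit rule to $E[\bar\delta\mid x]$ by conditional Jensen, whereas you keep $T(x,U)$. You should state explicitly that randomization means mixing the actions $0$ and $1$. Then the conditional expected exponentiated loss is $\delta e^{l(h,1)/\lambda}+(1-\delta)e^{l(h,0)/\lambda}$, which is affine in the treatment probability $\delta$. If a fractional action $a\in[0,1]$ were instead plugged into $e^{l(h,a)/\lambda}$, the argument would break. Near $\dot\mu_0^{\intercal}I_0^{-1/2}x=0$ the posterior is close to $(1/2,1/2)$, and $a=1/2$ gives $e^{\Delta^*/(2\lambda)}<\tfrac12(1+e^{\Delta^*/\lambda})$. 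So the indicator would not be the Bayes response to $\pi_{\Delta^*}$ in that class, and $B(\pi_{\Delta^*})$ would fall strictly below $V^*$. The best-response computation in Proposition~\ref{prop1} compares only the actions $0$ and $1$. It is therefore the mixture reading, where conditioning on $x$ gives an equality rather than a Jensen inequality, that makes this step valid.

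For the estimation loss your route genuinely differs. The paper simply invokes the local asymptotic minimax theorem of \citet[Proposition 8.11]{van2000asymptotic} for the bowl-shaped loss $e^{\ell/\lambda}$. You instead fix a compactly supported prior before the rule is chosen and prove the limit-experiment bound by hand. Your tools are Gaussian priors truncated to $[-M,M]^d$, Anderson's lemma for the bounded bowl-shaped loss $\min(e^{\ell/\lambda},c)$, and monotone convergence in $c$.

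The citation is shorter. However, the textbook theorem is stated for a fixed estimator sequence, with the finite set of local alternatives chosen afterwards. Matching it to the $\lim_M\liminf_n\min_\delta\max_{\pi\in\Delta_M(\mathcal{H})}$ form therefore needs an extra diagonal-subsequence step. Your prior-first ordering delivers that quantifier order directly, treats both losses in one Bayes-risk scheme, and makes the case $V^*=\infty$ (e.g.\ squared error with small $\lambda$) transparent.

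Two small repairs are needed.
\begin{itemize}
\item Step 2 is written for finitely supported priors, but Step 3 uses truncated Gaussians. This is harmless, because the asymptotic representation holds for every $h$ along a single subsequence. Fatou's lemma in $h$ then gives $\liminf_k\int R_{n_k}(h,\delta_{n_k})\,d\pi(h)\ge\int R(h,\tilde\delta)\,d\pi(h)$ for any $\pi\in\Delta_M(\mathcal{H})$.
\item Truncating $\ell$ does not make it lower semicontinuous. Instead, replace $\ell$ by its lower semicontinuous envelope, which is still bowl-shaped and differs from $\ell$ only on a countable set, so not under the Gaussian limit laws. On the compactified action space, set $\ell(\pm\infty)=\sup\ell$.
\end{itemize}
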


In order to show that the decision rules in (\ref{eq:asymptotically_optimal_decisions})
attain the lower bound, we place a mild regularity condition on the
MLE:

\begin{asm3} The maximum-likelihood estimator $\hat{\theta}_{\textrm{mle}}$
admits a locally linear score-function approximation:
\[
\hat{\theta}_{\textrm{mle}}-\theta_{0}=I_{0}^{-1/2}x_{n}+o_{P_{n,0}}(1).
\]
 \end{asm3}

To avoid technical issues relating to the existence of moments for
the estimation problem, our theory also requires that $\ell(\cdot)$
be bounded. We state this as an additional assumption. 

\begin{asm4} The function $\ell(\cdot)$ is bowl-shaped and bounded.\end{asm4}

Assumption 4 implies the estimation loss $l_{n}(\theta,\delta)$ is
bounded. As for the treatment-assignment loss, since we work with
compact priors, it is automatically bounded.

\begin{thm}\label{thm:upper_bound}(Asymptotic optimality of plug-in
rules) Suppose that Assumptions 1-4 hold. Then, under both the estimation
and treatment-assignment loss functions, 
\[
\lim_{M\to\infty}\limsup_{n\to\infty}\max_{\pi(h)\in\Delta_{M}(\mathcal{H})}\int\mathbb{E}_{n,h}\left[e^{l_{n}(\theta_{0}+h/\sqrt{n},\hat{\delta}_{n}^{*})/\lambda}\right]d\pi(h)=V^{*}.
\]
\end{thm}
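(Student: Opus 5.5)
Since the statement is an equality, the plan is to prove the upper inequality $\limsup\le V^*$ directly and get the reverse from Theorem \ref{thm:lower_bound}. Indeed, for each $n$ and $M$ the risk of the plug-in rule $\hat{\delta}_{n}^{*}$ is at least $\min_{\delta}\max_{\pi\in\Delta_M}$ of the same criterion. So
\[
\lim_{M\to\infty}\liminf_{n\to\infty}\max_{\pi\in\Delta_{M}(\mathcal{H})}\int\mathbb{E}_{n,h}\bigl[e^{l_{n}(\theta_{0}+h/\sqrt{n},\hat{\delta}_{n}^{*})/\lambda}\bigr]d\pi(h)\ge V^{*}.
\]
It therefore suffices to show that for each fixed $M$,
\[
\limsup_{n}\sup_{|h|\le M}\mathbb{E}_{n,h}\bigl[e^{l_{n}(\theta_{0}+h/\sqrt{n},\hat{\delta}_{n}^{*})/\lambda}\bigr]\le\sup_{h}\mathbb{E}_{h}\bigl[e^{l(h,\tilde{\delta}^{*})/\lambda}\bigr].
\]
The right-hand side equals $V^*$ because $\tilde{\delta}^*$ is minimax in the limit experiment: for estimation by Anderson's lemma, and for treatment assignment by Proposition \ref{prop1}. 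The max over priors in $\Delta_M$ is bounded by the sup over $|h|\le M$ of the pointwise risk.

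The pointwise risk is handled by a convergent-sequence argument. Take a sequence $h_n\to h$ with $|h_n|\le M$, chosen to nearly attain the supremum, and pass to a subsequence by compactness. We need $\mathbb{E}_{n,h_n}[e^{l_n/\lambda}]\to\mathbb{E}_{h}[e^{l(h,\tilde\delta^*)/\lambda}]$.

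First, the weak limit of the statistic under $P_{n,h_n}$. By LAN (\ref{eq:LAN property}), which holds uniformly on bounded $h$, the joint law of $(x_n,\ln dP_{n,h_n}/dP_{n,0})$ under $P_{n,0}$ converges to $(x,\,h^\intercal I_0^{1/2}x-\tfrac12 h^\intercal I_0 h)$. The limit log-likelihood ratio has mean $-\tfrac12\sigma^2$ with variance $\sigma^2$, so it satisfies $\mathbb{E}e^{\cdot}=1$. Le Cam's first lemma then gives contiguity. Le Cam's third lemma gives $x_n\rightsquigarrow\mathcal N(I_0^{1/2}h,I)$ under $P_{n,h_n}$.

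By Assumption 3 and contiguity, $\sqrt n(\hat\theta_{\rm mle}-\theta_0)=I_0^{-1/2}x_n+o_{P_{n,h_n}}(1)$. (The display in Assumption 3 should be read with the $\sqrt n$ scaling.) By Assumption 2 (delta method), $\sqrt n(\mu(\hat\theta_{\rm mle})-\mu(\theta_0))=\dot\mu_0^\intercal I_0^{-1/2}x_n+o_P(1)$. Likewise $\sqrt n(\mu(\theta_0+h_n/\sqrt n)-\mu(\theta_0))\to\dot\mu_0^\intercal h$.

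For estimation, $\ell$ is bowl-shaped, so its discontinuity set is contained in the boundaries of convex level sets. These boundaries are Lebesgue-null, so the Gaussian limit puts no mass on them. The continuous mapping theorem then gives
\[
l_n\rightsquigarrow\ell\bigl(\dot\mu_0^\intercal h-\dot\mu_0^\intercal I_0^{-1/2}x\bigr),\qquad x\sim\mathcal N(I_0^{1/2}h,I).
\]
Assumption 4 makes $e^{l_n/\lambda}$ uniformly bounded, so the expectations converge.

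For treatment assignment, the map $x\mapsto\mathbf 1\{\dot\mu_0^\intercal I_0^{-1/2}x\ge0\}$ is discontinuous only on a hyperplane, which has Gaussian measure zero. By (\ref{eq:treatment_loss_approx}) the loss converges uniformly in $a$ and in bounded $h$, and is bounded by roughly $|\dot\mu_0|M$. Bounded convergence applies again.

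The main obstacle I expect is the uniformity over $|h|\le M$ under a changing measure $P_{n,h_n}$. This is why I reduce to convergent sequences and use contiguity rather than working at fixed $h$. A further delicacy is that $\ell$ may be discontinuous; for that I would use the null-boundary property of convex sets under the Gaussian limit law. The step of moving from compactly supported priors to the global sup in $V^*$ needs no extra work, since $\sup_{|h|\le M}\le\sup_h$ trivially.
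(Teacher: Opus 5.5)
Your proposal is correct and follows essentially the paper's argument: you establish continuous convergence $f_n(h_n)\to f(h)$ of the pointwise exponentiated risk via LAN, Le Cam's third lemma, Assumptions 2--3 and boundedness, then combine compactness of $\{|h|\le M\}$ with the minimaxity of $\tilde{\delta}^{*}$ in the limit experiment. Your deviations are minor streamlinings: you bound $\max_{\pi\in\Delta_M(\mathcal{H})}$ directly by $\sup_{|h|\le M}$ instead of extracting a weakly convergent subsequence of priors, you cite Theorem \ref{thm:lower_bound} explicitly for the reverse inequality (which the paper leaves implicit), and you treat the possible discontinuities of $\ell$ and the missing $\sqrt{n}$ in Assumption 3, both of which the paper glosses over.
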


In fact, the MLE can be replaced by any asymptotically efficient estimator
satisfying Assumption 3. All such estimators attain the same minimax
lower bound $V^{*}$; our theory therefore does not distinguish among
them.

The statements of Theorems \ref{thm:lower_bound} and \ref{thm:upper_bound}
are new even in the absence of misspecification. Standard local asymptotic
minimax theorems are typically stated in terms of a maximum over discrete
sets of $h$ values, rather than the $\max_{\pi(h)\in\Delta_{M}(\mathcal{H})}$
operation employed here, see, e.g., \citet[Proposition 8.11]{van2000asymptotic}.
This distinction is consequential for our setting, as the discrete
formulation does not lend itself to a natural interpretation of prior
ambiguity. Our results avoid this limitation through a different method
of proof that directly accommodates optimization over a compact space
of local priors.

\subsection{Non-local priors\label{subsec:Non-local-priors}}

The formal results above were derived under priors localized around
a reference parameter $\theta_{0}$. As noted earlier, with a slight
strengthening of the assumptions, we can allow for global priors over
a compact set $\Theta$; essentially, we require the assumptions to
be valid uniformly over $\theta_{0}\in\Theta$. In this case, the
lower bounds on decision risk correspond to the least favorable choice
of the reference.

Formally, we show that under both loss functions, 
\begin{align*}
 & \liminf_{n\to\infty}\,\min_{\delta}\,\max_{\pi(\theta)\in\Delta(\Theta)}\int\mathbb{E}_{n,\theta}\left[e^{l_{n}(\theta,\delta)/\lambda}\right]d\pi(\theta)\\
 & \ge\begin{cases}
\sup_{\theta_{0}\in\Theta}V_{\theta_{0}}^{*} & \textrm{for estimation loss},\\
\sup_{\{\theta_{0}\in\Theta:\mu(\theta_{0})=0\}}V_{\theta_{0}}^{*} & \textrm{for treatment-assignment loss},
\end{cases}
\end{align*}
where $V_{\theta_{0}}^{*}$ represents the minimal decision-risk in
the limit experiment (as defined in Section \ref{subsec:Limit-approximations})
evaluated at a given reference parameter $\theta_{0}$. The formal
statement, together with the required assumptions, is provided in
Appendix \ref{sec:Local-Asymptotics-with-non-local-prior}. 

In the same appendix, we show that the decisions in (\ref{eq:asymptotically_optimal_decisions})
are asymptotically optimal under global priors as well, in that they
attain this bound:
\begin{align*}
 & \lim_{K\to\infty}\,\limsup_{n\to\infty}\,\max_{\pi(\theta)\in\Delta(\Theta)}\int\mathbb{E}_{n,\theta}\left[e^{l_{n,K}(\theta,\delta)/\lambda}\right]d\pi(\theta)\\
 & =\begin{cases}
\sup_{\theta_{0}\in\Theta}V_{\theta_{0}}^{*} & \textrm{for estimation loss},\\
\sup_{\{\theta_{0}\in\Theta:\mu(\theta_{0})=0\}}V_{\theta_{0}}^{*} & \textrm{for treatment-assignment loss},
\end{cases}
\end{align*}
where $l_{n,K}(\cdot):=K\wedge l_{n}(\cdot)$ denotes the loss function
truncated at level $K$. 

To the best of our knowledge, these results appear new even in the
context of no misspecification concerns. A number of authors, e.g.,
\citet[Section 3.2]{hirano2025wald}, have raised concerns about the
interpretation of asymptotic analysis under local reparameterizations.
Our results address these concerns by showing that global minimax
analysis is equivalent to local asymptotic analysis with a least favorable
choice of reference.

\subsection{The structure of the least favorable likelihood}

The variational criterion guards against arbitrary forms of misspecification.
One may wonder, however, whether the least favorable likelihood implied
by the criterion imposes deviations that a decision-maker may not
consider plausible, for instance, whether it violates the assumption
of i.i.d data. Fortunately, this is not the case. While the least
favorable likelihood is not exactly i.i.d., it is exchangeable in
the data and conditionally i.i.d., which, as we shall see, lends it
a natural interpretation.

To understand its structure, consider the setting where the outcomes
are normal $Y_{i}|h\sim\mathcal{N}(h/\sqrt{n},I_{0}^{-1})$ and $\mu_{n}(h):=\dot{\mu}_{0}^{\intercal}h$.
Under normality, $x_{n}:=\frac{I_{0}^{1/2}}{\sqrt{n}}\sum_{i=1}^{n}Y_{i}$
is a sufficient statistic and the limit experiment provides an exact
representation of the data. As shown earlier, the optimal decisions
under both estimation and treatment-assignment loss are functions
solely of $x_{n}$. Given the optimal decision, the least favorable
likelihood for a given $h$---the one that optimizes (\ref{eq:DV formula})---satisfies
$m^{*}(\cdot|h)\propto p_{h}(\cdot)\cdot\exp\left\{ l(h,\tilde{\delta}(\cdot))/\lambda\right\} $,
where $p_{h}(\cdot)$ represents the baseline model for $x_{n}$,
i.e., the normal $\mathcal{N}(I_{0}^{1/2}h,I)$ distribution. Since
any distribution can be represented as a continuous location-scale
mixture of normals, there exists some measure $\varrho_{h}(\nu,\Gamma)$,
such that
\[
m^{*}(\cdot|h):=\int\phi\left(\cdot;\Gamma^{1/2}I_{0}^{1/2}h+\nu,\Gamma\right)d\varrho_{h}(\nu,\Gamma),
\]
where $\phi(\cdot|m,A)$ represents the normal density with mean $m$
and variance $A$. For estimation losses, it is straightforward to
show that $\varrho_{h}(\nu,\Gamma)$ can be taken to be independent
of $h$. 

The representation above implies $m^{*}(\cdot|h)$ can be induced
by transforming the sample mean as $\Gamma^{1/2}x_{n}+\nu$, or equivalently,
transforming the outcome variables as $Y_{i}=(I_{0}^{-1}\Gamma^ {}I_{0}^ {})^{1/2}\tilde{Y}_{i}+(I_{0}^{-1/2}\nu)/\sqrt{n}$,
where $(\nu,\Gamma)|h\sim\varrho_{h}(\cdot)$ and $\tilde{Y}_{i}$
denotes the outcome under the baseline model $\mathcal{N}(h/\sqrt{n},I_{0}^{-1})$. 

The outcome variables $(Y_{1},\dots,Y_{n})$ under the least favorable
likelihood are no longer i.i.d, but they are exchangeable and conditionally
i.i.d given $\nu,\Gamma$. The transformed variables admit a natural
interpretation as location-scale contaminations of the original outcomes,
with the contamination parameters drawn from the distribution $\varrho_{h}(\cdot)$
before the data is observed. In the special case of squared error
loss, taking $I_{0}=I$ for simplicity, it can be shown that for $\lambda\ge1/2$,
the transformed outcomes are induced by setting $\Gamma=I$ and $\nu\sim\mathcal{N}(0,(2\lambda-1)^{-1})$
independently of $h$. In our running example, such a least favorable
likelihood implies that the outcomes in Pennsylvania are generated
as $\mathcal{N}((h+\nu)/\sqrt{n},I)$. The latent variable $\nu$
can be interpreted as the difference in average treatment effects
between Pennsylvania and the US population, modeled with its own prior
$\mathcal{N}(0,(2\lambda-1)^{-1})$ that is independent of $h$. For
more general losses, however, the transformation may also induce scale
shifts, and the distribution of the latent location and scale parameters
may depend on $h$. 

The above analysis suggests that even though we allowed for omnibus
misspecification, we would have obtained the same results had we restricted
attention to likelihoods that are i.i.d conditional on latent parameters.\footnote{The optimal decision and least favorable model, which is a combination
of the least favorable likelihood and prior, form a Nash equilibrium.
Since the least favorable likelihood is conditionally i.i.d., the
same Nash equilibrium obtains if the space of likelihoods is restricted
to those that are conditionally i.i.d.} Additionally, while this analysis so far has been confined to Gaussian
likelihoods, the analogous least favorable transformation for general
parametric likelihoods $p_{h}(\cdot)$ with score function $\psi(\cdot)$
is given by
\[
Y_{i}=\psi^{-1}\left((I_{0}^ {}\Gamma I_{0}^{-1})^{1/2}\psi(\tilde{Y}_{i})+I_{0}^{1/2}\nu/\sqrt{n}\right),
\]
where $(\nu,\Gamma)|h\sim\varrho_{h}(\cdot)$ and $\tilde{Y}_{i}$
denotes the outcome under the baseline likelihood $p_{h}(\cdot)$;
note that we have assumed that $\psi(\cdot)$ to be invertible. Under
this transformation, $x_{n}$ is asymptotically distributed as the
least-favorable distribution $m^{*}(\cdot|h)$ for any given $h$. 

\subsection{Application: Maximum likelihood vs Simulated method of moments}

\citet{andrews2025purpose} note that in applied work involving parametric
models, researchers often prefer to use simulated method of moments---which
involves subjective selection of moment functions---over maximum
likelihood, despite the latter being more efficient. This preference
is frequently justified by the argument that ``with misspecification
concerns, moment estimators are often more reliable\textquotedbl{}
(\citealp{bordalo2020overreaction}). 

Our results suggest, however, that this reasoning is incomplete. Under
our framework, there is no tradeoff between misspecification robustness
and efficiency. Efficient estimators are misspecification-robust to
an arbitrary degree, and inefficient estimators remain suboptimal
even in the presence of potential misspecification. Misspecification
concerns alone therefore cannot justify the use of the simulated method
of moments over maximum likelihood.

This is not to say that the simulated method of moments should never
be used. The selection of moment functions can often be understood
as a form of model selection. For instance, if the parameter of interest
is the treatment effect on a subgroup of the population, it may be
reasonable to selectively overweight the relevant portion of the sample.
This, however, is a problem of model selection rather than model misspecification
per se.

\section{Semi-parametric Models\label{sec:Semi-parametric-Models}}

The preceding sections focused on parametric models for the likelihood.
In many applications, however, the exact distributional form is left
unspecified, motivating the use of semi-parametric models. As discussed
in Section \ref{subsec:Nuisance-and-structural-parameters}, our methodology
extends naturally to the semi-parametric and nonparametric settings
by allowing for infinite-dimensional nuisance parameters.

In semi-parametric models, the structural parameter is typically a
regular functional $\mu:=\mu(P)$, of the unknown population distribution
$P$. Common examples of regular functionals include the mean, median,
and quantiles. For simplicity, we assume that $\mu$ is scalar-valued.
The loss functions are then
\[
l_{n}(P,\delta)=\ell\bigl(\sqrt{n}\,(\mu(P)-\delta)\bigr),
\]
for estimation loss, and
\[
l_{n}(P,\delta)=\sqrt{n}\bigl(\mu(P)\,\mathbf{1}\{\mu(P)\geq0\}-\mu(P)\,\delta\bigr),\quad\delta\in\{0,1\},
\]
for treatment-assignment loss. The population distribution $P$ plays
the same role as $\theta$ in the parametric analysis. While $P$
is unknown, we are only interested in its scalar functional $\mu(P)$;
the remaining features of $P$ are treated as an infinite dimensional
nuisance parameter. 

As in Section \ref{subsec:Bayesian,-Frequentist-and-misspecified},
we consider a framework in which the decision-maker confronts both
prior ambiguity --- an inability to form a single prior over the
parameter $P$ --- and model misspecification --- the concern that
the population distribution $P$ may not coincide with the outcome
distribution $\hat{P}$ in the experimental sample. In our running
example, Alice may adopt a fully nonparametric specification of the
outcome distribution yet still face misspecification concerns, as
treatment effects in Pennsylvania could differ fundamentally from
those in the broader US population. More generally, misspecification
also arises when the semi-parametric model imposes restrictions that
are not satisfied by the combination of the true $\mu$ and the sample
outcome distribution $\hat{P}$. For instance, in the GMM framework,
the researcher specifies a moment condition $\mathbb{E}_{P}[m(Y_{i},\mu)]=0$,
where $m(\cdot)\in\mathbb{R}^{p}$ is a known vector of moment functions,
$\mu\in\mathbb{R}^{d}$ is the structural parameter of interest, and
$P$ is the unknown population distribution. Misspecification arises
if this restriction does not hold in the sample, i.e., $\mathbb{E}_{\hat{P}}[m(Y_{i},\mu)]\neq0$
for the true $\mu$. In our running example, which fits within the
GMM framework with $m(Y_{i},\mu)=Y_{i}-\mu$, Alice may be concerned
that $\mathbb{E}_{\hat{P}}[Y_{i}-\mu]\neq0$. 

The central message of our semi-parametric results is that the sample
analogs of the optimal decisions characterized in Section \ref{subsec:Characterization-of-optimal-decisions}
remain optimal in the semi-parametric setting. In essence, one replaces
the score statistic from the parametric setting with an efficient
influence function process associated with the functional of interest.
For example, if the goal is to conduct inference on the mean, one
replaces $x_{n}$ with the cumulative sum of outcomes $n^{-1/2}\sum_{i=1}^{n}(Y_{i}/\sigma)$,
where $\sigma^{2}:=\textrm{Var}[Y_{i}]$. 

\subsection{Local asymptotics for semi-parametric models\label{subsec:local_asymptotics_nonparametrics}}

It is easiest to discuss ambiguity and misspecification in semi-parametric
settings using local asymptotics. Our local asymptotic analysis employs
the formalism of \citet[Section 25.3]{van2000asymptotic}. Let $\mathcal{P}$
denote the class of candidate population distributions with bounded
variance, dominated by some measure $\nu$. We fix a reference distribution
$P_{0}\in\mathcal{P}$, and surround it with various smooth one-dimensional
parametric sub-models, $\{P_{s,h}:s\le\zeta\}$ for some $\zeta>0$,
whose score function is $h(\cdot)$, and that pass through $P_{0}$
at $s=0$ (i.e., $P_{0,h}=P_{0}$). Formally, we require these sub-models
to satisfy
\begin{equation}
\int\left[\frac{dP_{s_{n},h_{n}}^{1/2}-dP_{0}^{1/2}}{s_{n}}-\frac{1}{2}h_{n}dP_{0}^{1/2}\right]^{2}d\nu\to0\ \textrm{as}\ s_{n}\to0\ \textrm{and }h_{n}\to h\ \textrm{in }L_{2}(P_{0}).\label{eq:qmd non-parametrics}
\end{equation}
Here, $L^{2}(P_{0})$ represents the Hilbert space with the inner
product $\left\langle f,g\right\rangle =\mathbb{E}_{P_{0}}[fg]$ and
norm $\left\Vert f\right\Vert =\mathbb{E}_{P_{0}}[f^{2}]^{1/2}$,
and $h_{n}\to h$ is in the sense of $\left\Vert h_{n}-h\right\Vert \to0$. 

As in the parametric setting, for the treatment-assignment problem,
we choose $P_{0}$ such that $\mu(P_{0})=0$. 

By \citet[Lemma 25.14]{van2000asymptotic}, condition (\ref{eq:qmd non-parametrics})
implies $\int hdP_{0}=0$ and $\int h^{2}dP_{0}<\infty$. The set
of all such functions $h$ is termed the tangent space $T(P_{0})$,
which is a subset $L^{2}(P_{0})$. For any $h\in T(P_{0})$, let $P_{n,h}$
denote the joint probability measure over $Y_{1},\dots,Y_{n}$, when
each $Y_{i}$ is an iid draw from $P_{1/\sqrt{n},h}$, and let $\mathbb{E}_{n,h}[\cdot]$
denote the corresponding expectation. An important consequence of
(\ref{eq:qmd non-parametrics}) is the sequential LAN property: 
\begin{align}
\sum_{i=1}^{n}\ln\frac{dP_{1/\sqrt{n},h_{n}}}{dP_{0}}(Y_{i}) & =\frac{1}{\sqrt{n}}\sum_{i=1}^{n}h(Y_{i})-\frac{1}{2}\left\Vert h\right\Vert ^{2}+o_{P_{n,0}}(1),\ \textrm{for all }h_{n}\to h.\label{eq:SLAN nonparametric setting}
\end{align}

Let $\psi\in T(P_{0})$ denote the efficient influence function corresponding
to $\mu$, defined by the property that for any $h\in T(P_{0})$,
\begin{equation}
\frac{\mu(P_{s,h})-\mu(P_{0})}{s}-\left\langle \psi,h\right\rangle =o(s).\label{eq:influence function}
\end{equation}
Set $\sigma^{2}=\mathbb{E}_{P_{0}}[\psi^{2}]>0$. The semi-parametric
analogue of the score statistic in the semi-parametric setting is
the standardized efficient influence function process
\[
x_{n}:=\frac{\sigma^{-1}}{\sqrt{n}}\sum_{i=1}^{n}\psi(Y_{i}).
\]

Any element $h\in T(P_{0})$ admits the orthogonal decomposition $h=\left\langle \psi/\sigma,h\right\rangle \psi/\sigma+\tilde{h}$,
where $\tilde{h}$ is orthogonal to $\psi$ (i.e., $\left\langle \psi,\tilde{h}\right\rangle =0$).
The component $\mu=\left\langle \psi,h\right\rangle $ represents
the structural parameter, while $\tilde{h}$ represents an infinite
dimensional nuisance parameter. Although the full perturbation direction
$h$ is unknown, only the projection onto the efficient influence
function is relevant for learning about $\mu$. 

For each $h\in T(P_{0})$, define $\mu_{n}(h):=\mu(P_{1/\sqrt{n},h})$,
and let $\bm{x}:=(Y_{1},\dots,Y_{n})$ denote the collection of outcomes.
We can rewrite the loss functions in terms of $h$ as 
\[
l_{n}(h,\delta)=\begin{cases}
\ell\bigl(\sqrt{n}\,(\mu_{n}(h)-\delta)\bigr) & \textrm{for estimation loss},\\
\sqrt{n}\bigl(\mu_{n}(h)\,\mathbf{1}\{\mu_{n}(h)\geq0\}-\mu_{n}(h)\,\delta\bigr) & \textrm{for treatment-assignment loss}.
\end{cases}
\]

In the semi-parametric setting, a Bayesian statistical model, $m(h,\bm{x})$,
is defined as a joint probability distribution over both $h,\bm{x}$.
As in Section \ref{subsec:Bayesian,-Frequentist-and-misspecified},
it admits the decomposition
\[
m(h,\bm{x})=\pi(h)\otimes p_{n,h}(\bm{x}),
\]
where $\pi\in\Delta(T(P_{0}))$ denotes a prior over the tangent space
$T(P_{0})$, and $p_{n,h}(\bm{x}):=\prod_{i}dP_{1/\sqrt{n},h}(Y_{i})/d\nu$
represents the likelihood, a parametric sub-model. Prior ambiguity
is incorporated by defining a frequentist model, a structured set
of models, as
\[
\mathcal{Q}:=\bigl\{\pi(h)\otimes p_{n,h}(\bm{x}):\pi\in\Delta(T(P_{0}))\bigr\}.
\]
Finally, misspecification concerns are addressed by placing a protective
belt around $\mathcal{Q}$, yielding the set of unstructured models
\[
\mathcal{M}=\Bigl\{ m\in\Delta(T(P_{0}),\mathcal{X}):\min_{q\in\mathcal{Q}}\,R_{q}(m)\leq K\Bigr\}.
\]
Expanding the set of structured models allows us to account for the
possibility that the true distribution of the experimental data $\bm{x}$
is not captured by $p_{n,h}(\bm{x})$ for any $h$. 

The decision-maker chooses the decision rule that performs best against
the worst-case model in $\mathcal{M}$, thereby guarding against both
prior ambiguity and misspecification:
\[
\delta_{n}^{*}:=\argmin_{\delta}\left[\sup_{m\in\mathcal{M}}\mathbb{E}_{m}\bigl[l_{n}(h,\delta)\bigr]\right].
\]
As in Section \ref{sec:Characterizing-Optimal-Decisions}, applying
a Lagrangian formulation and the same sequence of calculations yields
the following characterization of minimal decision risk:
\begin{equation}
V_{n}^{*}=\min_{\delta}\,\max_{\pi\in\Delta(T(P_{0}))}\int\mathbb{E}_{n,h}\!\left[e^{l_{n}(h,\delta)/\lambda}\right]d\pi(h).\label{eq:minimax_value_non_parametrics}
\end{equation}

\subsection{Formal results: Semi-parametric models}

We impose the following regularity conditions throughout this section:

\begin{asm5}The sub-models $\{P_{s,h};h\in T(P_{0})\}$ satisfy (\ref{eq:qmd non-parametrics}).
Furthermore, they admit an efficient influence function $\psi(\cdot)$
for $\mu(P)$ in that for each $h_{n}\to h\in T(P_{0})$ satisfying
$\left\Vert h\right\Vert \le M<\infty$, and with $\mu_{0}:=\mu(P_{0})$,
\[
\sqrt{n}\left(\mu(P_{1/\sqrt{n},h_{n}})-\mu_{0}\right)=\left\langle \psi,h\right\rangle +o(1).
\]
\end{asm5}

The first part of Assumption 5 restates the definition of parametric
sub-models. The second part of Assumption 5 slightly strengthens (\ref{eq:influence function}). 

Since $L^{2}(P_{0})$ is a Hilbert space, it is possible to select
$\{\phi_{1},\phi_{2},\dots\}\in L^{2}(P_{0}^ {})$ in such a manner
that $\{\psi/\sigma,\phi_{1},\phi_{2},\dots\}$ is a set of orthonormal
basis functions for the closure of $T(P_{0}^ {})$; the division by
$\sigma_{}$ in the first component ensures $\left\Vert \psi/\sigma_{}\right\Vert ^{2}=1$.
We can also choose these bases so they lie in $T(P_{0}^ {})$, i.e.,
$\mathbb{E}_{P_{0}}[\phi_{j}]=0$ for all $j$. By the Hilbert space
isometry, each $h_{}\in T(P_{0}^ {})$ is then associated with an
element from the $l_{2}$ space of square integrable sequences, $(\mu/\sigma,\gamma_{1},\gamma_{2},\dots)$,
where $\mu=\left\langle \psi,h\right\rangle $ and $\gamma_{k}=\left\langle \phi_{k},h\right\rangle _{}$
for all $k\neq0$. Consequently, any prior $\pi(h)$ over $T(P_{0})$
can be represented as a prior over $l_{2}$. 

As in Section \ref{subsec:Formal-results-parametric}, our formal
results require localization of ambiguity. This involves restricting
attention to priors that are supported on a compact subset, $K_{M}$,
of $l_{2}$, defined as 
\[
K_{M}\equiv\left\{ h=\left(\mu/\sigma,\gamma_{1},\dots\right):\left\Vert h\right\Vert \le M,\ \sum_{j=J}^{\infty}\vert\gamma_{j}\vert^{2}\le\varepsilon_{J}\ \textrm{for some }\varepsilon_{J}\downarrow0\ \textrm{as }J\to\infty\right\} .
\]
The compactness condition essentially requires the set of candidate
$h$ to be sufficiently smooth. 

\subsubsection{Lower bounds\label{subsec:Lower-bounds-nonparametrics}}

As in the parametric setting, we show that the minimax value $V^{*}$
in the limit experiment also forms an asymptotic lower bound on the
sequence of optimal decision risks, $V_{n}^{*}$, in the finite sample
semi-parametric experiments. However, since the previous definition
of the limit experiment used a different interpretation of $h$, we
will need to modify the construction slightly.

Specifically, we now consider a limit experiment where we observe
a one dimensional signal $x$, posited to be drawn from a reference
Gaussian likelihood, $P_{\mu}(x)\sim\mathcal{N}(\mu/\sigma,1)$. Let
$\mathbb{E}_{\mu}[\cdot]$ denote the expectation corresponding to
$P_{\mu}$. The minimax value $V^{*}$ in this limit experiment is
then defined as
\begin{align}
V^{*} & :=\min_{\tilde{\delta}}\,\max_{\rho(\mu)\in\Delta(\mathbb{R})}\int\mathbb{E}_{\mu}\left[e^{l(\mu,\tilde{\delta})/\lambda}\right]d\rho(\mu),\ \textrm{with}\label{eq:minimax_value_limit_experiment_nonparametrics}\\
l(\mu,\tilde{\delta}) & =\begin{cases}
\ell(\mu-\tilde{\delta}) & \textrm{for estimation loss},\\
\mu\cdot\left\{ \mathbf{1}\{\mu\ge0\}-\tilde{\delta}\right\}  & \textrm{for treatment-assignment loss}.
\end{cases}\nonumber 
\end{align}
It is straightforward to verify that the value of $V^{*}$ in (\ref{eq:minimax_value_limit_experiment_nonparametrics})
is, in fact, the same as that in (\ref{eq:minimax_value_limit_experiment})
when $\theta=\mu$ and $I_{0}=1/\sigma^{2}$. 

\begin{thm}\label{thm:lower_bound-nonparametric}Suppose that Assumption
5 holds. Then, under both the estimation and treatment-assignment
loss functions, 
\[
\lim_{M\to\infty}\liminf_{n\to\infty}\min_{\delta}\,\max_{\pi(h)\in\Delta(K_{M})}\int\mathbb{E}_{n,h}\left[e^{l_{n}(h,\delta)/\lambda}\right]d\pi(h)\ge V^{*}.
\]
\end{thm}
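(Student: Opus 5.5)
The plan is to reduce the semi-parametric lower bound to the parametric one in Theorem \ref{thm:lower_bound}, by restricting nature's priors to a one-dimensional least favorable sub-model. Restricting the inner maximum to a smaller class of priors can only lower the value, so the left-hand side is bounded below by the same expression with $\pi$ confined to priors supported on $\{h=t\,\psi/\sigma:\ |t|\le M\}$. This segment lies in $K_{M}$: its elements have norm at most $M$, and all $\gamma_{j}=0$, so the tail condition holds trivially. It therefore suffices to prove the bound for this one-dimensional family.

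Along the direction $h_{t}=t\psi/\sigma$, Assumption 5 gives $\sqrt{n}(\mu_{n}(h_{t})-\mu_{0})=t\langle\psi,\psi/\sigma\rangle+o(1)=t\sigma+o(1)$, uniformly over $|t|\le M$ by a subsequence argument with $h_{n}\to h$. The sequential LAN property (\ref{eq:SLAN nonparametric setting}) gives
\[
\sum_{i}\ln\frac{dP_{1/\sqrt{n},h_{t}}}{dP_{0}}(Y_{i})=t\,x_{n}-\tfrac{1}{2}t^{2}+o_{P_{n,0}}(1),
\]
with $x_{n}\rightsquigarrow\mathcal{N}(0,1)$ under $P_{n,0}$. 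This is exactly the parametric LAN structure of Section \ref{subsec:Limit-approximations}, with scalar parameter $t$, $I_{0}=1$ and $\dot{\mu}_{0}=\sigma$. The limit experiment is then $x\sim\mathcal{N}(t,1)$ with loss $\ell(\sigma t-\tilde{\delta})$, or $\sigma t(\mathbf{1}\{t\ge0\}-\tilde{\delta})$ for treatment assignment. Reparametrizing $\mu=\sigma t$ gives $x\sim\mathcal{N}(\mu/\sigma,1)$, which is precisely the experiment defining $V^{*}$ in (\ref{eq:minimax_value_limit_experiment_nonparametrics}). The parameter is rescaled, so the localization radius in $\mu$ is $\sigma M$, which still diverges as $M\to\infty$.

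With this reduction, I would rerun the argument of Theorem \ref{thm:lower_bound} on the sub-model, or invoke it directly. The only hypotheses that proof uses are LAN with the $\sqrt{n}$-rate and the linear expansion of the structural functional (Assumption 2); both hold here, and the loss approximations (\ref{eq:estimation_loss_approx})--(\ref{eq:treatment_loss_approx}) go through verbatim with $\dot{\mu}_{0}^{\intercal}h$ replaced by $\sigma t$. For completeness, the core of that argument runs as follows. Fix any prior $\rho$ on $[-\sigma M,\sigma M]$ in the limit experiment. By Le Cam's third lemma and the asymptotic representation theorem, any sequence $\delta_{n}$ has a subsequence converging, under every $P_{n,h_{t}}$, to a randomized rule $\tilde{\delta}(x,U)$ in the limit experiment. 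Portmanteau together with Fatou, which is valid because $e^{l/\lambda}\ge0$ is lower semicontinuous, then gives
\[
\liminf_{n}\int\mathbb{E}_{n,h_{t}}\bigl[e^{l_{n}/\lambda}\bigr]d\rho\ \ge\ \int\mathbb{E}_{\mu}\bigl[e^{l(\mu,\tilde{\delta})/\lambda}\bigr]d\rho\ \ge\ \inf_{\tilde{\delta}}\int\mathbb{E}_{\mu}\bigl[e^{l/\lambda}\bigr]d\rho .
\]
Maximizing over $\rho$ then yields the compact-support Bayes/minimax value $V_{M}^{*}$.

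\textbf{Main obstacle.} The hardest step is the final one: showing $V_{M}^{*}\to V^{*}$ as $M\to\infty$, while interchanging $\min_{\delta}$ and $\max_{\pi}$ legitimately in the finite-sample problem. The inequality $\min_{\delta}\max_{\pi}\ge\max_{\pi}\min_{\delta}$ always holds, so the real work is in the limit experiment. There I would rely on a minimax theorem over $\Delta([-\sigma M,\sigma M])$, which is compact, with a convex set of randomized rules, so that the compact-support Bayes values $V_{M}^{*}$ increase to the unrestricted minimax value $V^{*}$. For estimation loss I would verify this using Gaussian priors $\mathcal{N}(0,\tau^{2})$ truncated at $\sigma M$, together with Anderson's lemma, letting $M\to\infty$ and then $\tau\to\infty$. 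For treatment assignment I would note that the least-favorable two-point prior of Proposition \ref{prop1} has bounded support, so it lies in the restricted class once $M$ is large and no limit argument is needed.
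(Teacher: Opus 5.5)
Your proposal is correct and follows the paper's proof. The paper likewise restricts nature to priors on the segment $h=(\mu/\sigma^{2})\psi$ with all $\gamma_{j}=0$ (your $t\psi/\sigma$ with $\mu=\sigma t$), and uses (\ref{eq:SLAN nonparametric setting}) and Assumption 5 to treat this as a LAN parametric model with normalized score $\psi/\sigma$ and limit experiment $x\sim\mathcal{N}(\mu/\sigma,1)$. It then reruns the proof of Theorem \ref{thm:lower_bound} with the bounded-support two-point least favorable prior of Proposition \ref{prop1} for treatment assignment.

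The only difference is the estimation case, which the paper settles by citing \citet[Theorem 25.21]{van2000asymptotic}. If you rederive it yourself via truncated Gaussian priors and Anderson's lemma, two points need care. First, $\sqrt{n}(\delta_{n}-\mu_{0})$ need not be tight, so the representation step needs a compactification; Lemma \ref{lem:treatment_assignment_lower_bound} avoids this only because $\delta_{n}\in[0,1]$. Second, Assumption 4 is not imposed in this theorem, so $e^{\ell/\lambda}$ should be truncated at a level $K$ before letting $M\to\infty$.
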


\subsubsection{Asymptotically optimal decisions}

As in the parametric setting, asymptotically optimal decisions under
ambiguity and misspecification are the same as those under ambiguity
along. Let $\hat{\mu}_{n}$ denote any semi-parametrically efficient
estimator for $\mu$, understood as satisfying the following assumption:

\begin{asm6} The estimator $\hat{\mu}$ attains the semi-parametric
efficiency bound, in that it admits a locally linear influence-function
approximation:
\[
\hat{\mu}_{n}-\mu_{0}=\sigma x_{n}+o_{P_{n,0}}(1).
\]
 \end{asm6}

As we show below, asymptotically optimal decisions are then given
by 
\[
\hat{\delta}_{n}^{*}=\begin{cases}
\hat{\mu}_{n} & \textrm{for estimation,}\\
\mathbf{1}\left\{ \hat{\mu}_{n}\geq0\right\}  & \textrm{for treatment assignment.}
\end{cases}
\]

\begin{thm}\label{thm:upper_bound-non-parametric}Suppose that Assumptions
4-6 hold. Then, under both the estimation and treatment-assignment
loss functions, 
\[
\lim_{M\to\infty}\limsup_{n\to\infty}\max_{\pi(h)\in\Delta(K_{M})}\int\mathbb{E}_{n,h}\left[e^{l_{n}(h,\hat{\delta}_{n}^{*})/\lambda}\right]d\pi(h)=V^{*}.
\]
\end{thm}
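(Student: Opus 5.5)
The plan is to prove the two inequalities separately. The lower bound "$\ge V^{*}$" is immediate from Theorem \ref{thm:lower_bound-nonparametric}: the plug-in rule $\hat{\delta}_{n}^{*}$ is one feasible $\delta$, so its worst-case risk over $\Delta(K_{M})$ dominates the minimax value. The real work is the upper bound,
\[
\lim_{M\to\infty}\limsup_{n\to\infty}\max_{\pi\in\Delta(K_{M})}\int\mathbb{E}_{n,h}\bigl[e^{l_{n}(h,\hat{\delta}_{n}^{*})/\lambda}\bigr]d\pi(h)\le V^{*}.
\]
Since the integrand is linear in $\pi$, the maximum over priors equals $\sup_{h\in K_{M}}R_{n}(h):=\sup_{h\in K_{M}}\mathbb{E}_{n,h}[e^{l_{n}(h,\hat{\delta}_{n}^{*})/\lambda}]$. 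So it suffices to show two things:
\begin{enumerate}
\item $\limsup_{n}\sup_{h\in K_{M}}R_{n}(h)\le\sup_{h\in K_{M}}R(h)$, where $R(h):=\mathbb{E}_{\mu}[e^{l(\mu,\tilde{\delta}^{*})/\lambda}]$ with $\mu=\langle\psi,h\rangle$ and $\tilde{\delta}^{*}$ the limit-experiment rule ($\sigma x$, or $\mathbf{1}\{x\ge0\}$).
\item $\sup_{\mu}R(\mu)=V^{*}$. This holds because $\tilde{\delta}^{*}$ is minimax in the limit experiment (Anderson's lemma for estimation, Proposition \ref{prop1} with $I_{0}=1/\sigma^{2}$ for treatment), so its maximal risk equals the game value $V^{*}$.
\end{enumerate}

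For step 1 I would argue by contradiction via sequences. Pick $h_{n}\in K_{M}$ nearly attaining $\sup_{K_{M}}R_{n}$. By compactness of $K_{M}$ in $l_{2}$, pass to a subsequence with $h_{n}\to h$ in $L^{2}(P_{0})$. The sequential LAN property (\ref{eq:SLAN nonparametric setting}) holds for converging sequences, which is exactly why compactness is needed. Together with the CLT for $(x_{n},n^{-1/2}\sum h(Y_{i}))$ under $P_{n,0}$, it gives joint convergence of $\bigl(x_{n},\ln dP_{n,h_{n}}/dP_{n,0}\bigr)$ to $\bigl(x,\,\langle h,\psi/\sigma\rangle x+\tilde{Z}-\tfrac12\|h\|^{2}\bigr)$, where $\tilde{Z}$ is independent of $x$. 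Le Cam's third lemma then yields $x_{n}\rightsquigarrow\mathcal{N}(\mu/\sigma,1)$ under $P_{n,h_{n}}$, with $\mu=\langle\psi,h\rangle$.

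By Assumption 6 and contiguity, $\sqrt{n}(\hat{\mu}_{n}-\mu_{0})=\sigma x_{n}+o_{P_{n,h_{n}}}(1)$. By Assumption 5, $\sqrt{n}(\mu_{n}(h_{n})-\mu_{0})\to\mu$. Hence the pair (local parameter, local decision) converges jointly, and the loss converges in distribution to $l(\mu,\tilde{\delta}^{*})$:
\begin{itemize}
\item For estimation: continuity of $\ell$ off a null set, using bowl-shape. If $\ell$ is merely bowl-shaped, I'd use the portmanteau lemma with upper semicontinuity, or approximate by a continuous bounded bowl-shaped function.
\item For treatment: the indicator $\mathbf{1}\{\hat{\mu}_{n}\ge0\}$ is discontinuous only at $x=0$, which has limit probability zero.
\end{itemize}
Boundedness of the loss (Assumption 4; the treatment loss is bounded because $\|h\|\le M$ bounds $|\mu|$) makes $e^{l/\lambda}$ bounded, so the expectations converge: $R_{n}(h_{n})\to R(h)\le\sup_{K_{M}}R$.

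The main obstacle is the uniformity over $K_{M}$ in step 1: LAN must hold along arbitrary convergent sequences, not just at fixed $h$. I handle this through the sequential form in Assumption 5 and (\ref{eq:qmd non-parametrics}), plus compactness of $K_{M}$ to extract convergent subsequences. A secondary subtlety is the discontinuity of the estimation loss, handled as indicated above. Finally, since $\sup_{K_{M}}R\le V^{*}$ for every $M$, the outer limit in $M$ is at most $V^{*}$. Combined with the lower bound, this gives equality.
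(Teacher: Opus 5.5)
Your proposal is correct and follows essentially the paper's own argument. Sequential LAN, Le Cam's third lemma and Assumptions 5--6 give $R_{n}(h_{n})\to R(h)$ along every convergent sequence in the compact set $K_{M}$; boundedness of the loss carries this over to expectations; and minimaxity of $\tilde{\delta}^{*}$ in the limit experiment caps the bound at $V^{*}$. The differences are only technical: you replace $\max_{\pi\in\Delta(K_{M})}$ by $\sup_{h\in K_{M}}$ via point masses rather than using the paper's uniform-convergence and weakly-convergent-priors step, and you make explicit two things the paper leaves implicit, namely the ``$\ge$'' direction via Theorem~\ref{thm:lower_bound-nonparametric} and the a.e.\ continuity of the bowl-shaped $\ell$.
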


\subsection{Application: Optimal GMM estimators under misspecification}

The Generalized Method of Moments (GMM) is an example of a semi-parametric
model that is widely used in economic applications. Recall that in
the GMM framework, the researcher specifies a moment condition $\mathbb{E}_{P}[m(Y_{i},\mu)]=0$,
where $m(\cdot)\in\mathbb{R}^{p}$ is a known vector of moments, $\mu\in\mathbb{R}^{d}$
is the structural parameter, and $P$ is the population distribution.
When $p>d$, the GMM model is said to be over-identified. In this
setting, the efficient influence function is given by 
\[
\psi(Y_{i})=G_{0}^{\intercal}\Omega_{0}^{-1}m(Y_{i},\mu_{0}),
\]
where $\mu_{0}:=\mu(P_{0})$ is the unique solution to $\mathbb{E}_{P_{0}}[m(Y_{i},\mu)]=0$
under the reference distribution $P_{0}$, $G_{0}:=\mathbb{E}_{P_{0}}[\nabla_{\mu}m(Y_{i},\mu)]$
and $\Omega_{0}:=\mathbb{E}_{P_{0}}[m(Y_{i},\mu_{0})m(Y_{i},\mu_{0})^{\intercal}]$. 

In the absence of misspecification concerns, it is well known that
several estimators that are asymptotically efficient, including 2-step
GMM and continuously updated GMM (CUGMM), among others. In practice,
however, researchers are often concerned that the model may be misspecified,
i.e., $\mathbb{E}_{\hat{P}}[m(Y_{i},\mu)]\neq0$ under the true structural
parameter $\mu^ {}$ and the sample outcome distribution $\hat{P}$.
Under misspecification, different estimators converge to different
limits under the distribution of sample outcomes $\hat{P}$, and researchers
often resort to inefficient estimators employing a weighing matrix
other than the optimal $W^{*}=G_{0}^{\intercal}\Omega_{0}^{-1}$.
This practice is frequently justified on the grounds that ``...under
misspecification the two-step GMM estimator is no more efficient than
any other estimator... each weighting leads us to recover a different
parameter'' (\citealp{andrews2025purpose}). 

Our results suggest, however, that this reasoning is incomplete. When
the decision-maker confronts both prior ambiguity and model misspecification
as in our framework, the optimal estimator coincides with that under
prior ambiguity alone. Consequently, 2-step GMM remains superior to
diagonally-weighted GMM, even under an arbitrary degree of misspecification.
Researchers who wish to employ inefficient weighting should therefore
provide explicit justification for doing so. If, for instance, the
researcher believes that misspecification is not unstructured but
that certain forms or directions of misspecification are more likely
than others, this could in principle lead to different estimation
strategies. Even so, it would be difficult to justify use of identify
or diagonal weighting on the basis of directional misspecification
alone, since such weighting typically preserves symmetry across directions. 

Apart from requiring efficiency, our results do not distinguish among
different efficient estimators. Under local asymptotics, all efficient
estimators attain the same decision risk $V^{*}$, so additional criteria
must be employed to select among them. For instance, imposing invariance
would lead one to prefer CUGMM over two-step GMM. 

\section{Axiomatics\label{sec:Axiomatics}}

This article is primarily concerned with the econometric and statistical
implications of employing misspecification-averse preferences as formulated
by \citet{cerreia2026making}. As noted in Section \ref{sec:Characterizing-Optimal-Decisions},
while the constrained and variational decision criteria in (\ref{eq:optimal_decision})
and (\ref{eq:variational_criterion}) lead to equivalent characterizations
of optimal decisions, it is the variational criterion that admits
a more direct axiomatic justification. Since the latter is just a
special case of \citet{cerreia2026making}, the axiomatic foundations
for our framework follow directly from that paper once the decision-theoretic
ingredients are specified.

Following \citet{cerreia2026making}, we work within an Anscombe--Aumann
framework. The state space $\omega=(\theta,\bm{x})$ comprises both
the parameter and the data. The consequence space is defined separately
for the two problems under consideration. For estimation, we associate
consequences $z\in\mathbb{R}$ with estimation errors $\mu(\theta)-\delta$.
Allowing for randomized estimators enables us to view these as Anscombe--Aumann
acts, i.e., measurable mappings from $\Omega$ to $\mathbb{R}$. For
treatment assignment, we associate consequences with ex-post regret,
$\mu(\theta)\,\mathbf{1}\{\mu(\theta)\geq0\}-\mu(\theta)\,\delta$.
Following the normative framework of \citet{cerreia2026making}, we
assume the decision-maker posits the set of structured models $\mathcal{Q}$
defined in (\ref{eq:ambiguity-set}). We then adopt the same axioms
---either the single-preference variational model of \citet[Appendix A]{cerreia2026making}
or the two-preference model of \citet[Sections 3 and 4]{cerreia2026making}.
Since the axioms can be employed unchanged, we do not restate them
here. \citet{cerreia2026making} show that both sets of preferences
admit a representation via the variational decision criterion (\ref{eq:variational_criterion}).

Two remarks are in order. First, as in any decision-theoretic environment,
the loss function $l_{n}(\theta,\delta)$, i.e., the negative of the
Bernoulli utility function, is endogenously determined by preferences
rather than imposed exogenously as we have done throughout. That said,
as in the standard Anscombe--Aumann framework, Bernoulli utilities
can be elicited through direct lotteries over the consequence space.
In the case of estimation, our loss functions can be motivated by
supposing that the decision-maker is risk-averse and therefore employs
a bowl-shaped loss over consequences. Similarly, the treatment-assignment
loss is motivated by supposing that the decision-maker is risk-neutral
and therefore employs a linear map on the consequence space of ex-post
regret.

Second, the two-preference model of \citet{cerreia2026making} permits
the use of discrepancy measures other than KL divergence. Pinning
down the latter requires an additional axiom, as in \citet[Appendix A]{cerreia2026making}.
We discuss alternative discrepancy measures in Section \ref{subsec:Alternative-discrepancy-measures}.

\citet{andrews2026misspecification} provide an alternative axiomatization
of the variational decision-risk criterion (\ref{eq:variational_criterion}).
Their approach begins by axiomatizing a family of conditional preferences
representing the decision-maker's evaluations given each parameter
value $\theta$, and then imposes an axiom requiring that the degree
of misspecification concern $\lambda$ be independent of $\theta$.
In our framework, by contrast, $\lambda$ represents a single degree
of misspecification concern against the entire class of models $\mathcal{Q}$.
Since $\mathcal{Q}$ includes all possible priors, our definition
of a statistical model is richer, and the two axiomatizations---while
yielding the same decision criterion---differ in the structure they
impose on statistical models.

\section{Extensions}

In this section, we discuss several variations and extensions of our
framework. 

\subsection{Alternative discrepancy measures\label{subsec:Alternative-discrepancy-measures}}

So far, we have used relative entropy $R_{q}(m)$ to measure the discrepancy
between statistical models. \citet[Chapter 1.8]{hansen2011robustness}
provide two important reasons for using relative entropy. First, it
leads to a tractable characterization of minimal decision risk and
optimal decisions, as demonstrated in Section \ref{sec:Characterizing-Optimal-Decisions}.
Second, as discussed in \citet[Chapter 9]{hansen2011robustness},
it can be linked to risk-sensitivity adjustment through the theory
of large deviations. 

It is, however, possible to employ alternative discrepancy measures.
Relative entropy, is but a special case of the $\phi$-divergence
class of discrepancies, which take the general form 
\[
D_{\phi}(m\|q)=\int\phi\left(\frac{dm}{dq}\right)dq,
\]
where $\phi:[0,\infty)\to(-\infty,\infty]$ is a convex function satisfying
$\phi(x)<\infty$ for all $x>0$, $\phi(1)=0$ and $\phi(0)=\lim_{x\to0}\phi(x)$.
Setting $\phi(x)=x\ln x$ recovers KL divergence. 

Under this more general class of discrepancies, the decision-risk
of a rule $\delta$ can be characterized in variational form as 
\[
V_{n,\phi}(\delta)=\min_{q\in\mathcal{Q}}\inf_{m}\left\{ \mathbb{E}_{m}\bigl[u_{n}(\theta,\delta)\bigr]+\lambda\,D_{\phi}(m\|q)\right\} .
\]
Let $\phi^{*}(\cdot)$ denote the convex conjugate of $\phi(\cdot)$.
The decision risk $V_{n,\phi}(\delta)$ then admits the dual representation
\begin{align*}
V_{n,\phi}(\delta) & =\lambda\min_{q\in\mathcal{Q}}\sup_{\eta\in\mathbb{R}}\left\{ \eta-\int\phi^{*}\left(\eta-\frac{u_{n}(\theta,\delta)}{\lambda}\right)dq\right\} \\
 & =\lambda\min_{\pi\in\Delta(\Theta)}\sup_{\eta\in\mathbb{R}}\left\{ \eta-\int\mathbb{E}_{p(\bm{x}|\theta)}\left[\phi^{*}\left(\eta+\frac{l_{n}(\theta,\delta)}{\lambda}\right)\right]d\pi(\theta)\right\} ,
\end{align*}
where the first equality is due to \citet[Proposition 1]{cerreia2026making}
and the second equality exploits the specific structure of $\mathcal{Q}$
in our setting. 

By standard properties of convex conjugates, $\phi^{*}(x)<\infty$
for all bounded $x$ if and only if $\lim_{x\to\infty}\phi(x)/x=\infty$.
Since our loss functions $l_{n}(\theta,\delta)$ are generally unbounded,
$V_{n,\phi}(\delta)$ is therefore $\infty$ for any $\phi$-divergence
measure satisfying $\lim_{x\to\infty}\phi(x)/x<\infty$, unless sharp
support restrictions are imposed on the class of priors. In other
words, the decision risk is trivially infinite for these divergence
measures whenever the class of priors is sufficiently rich. We therefore
argue that such discrepancies are not well suited for a framework
that accommodates both ambiguity and misspecification. Notable examples
in this category include total variation $(\phi(x)=\vert x-1\vert/2$),
squared Hellinger distance ($\phi(x)=(\sqrt{x}-1)^{2}/2)$ and Pearson
$\chi^{2}$ divergence ($\phi(x)=(x-1)^{2}/x$). 

The underlying issue with these divergence measures is that the misspecification
they permit is too broad. Consider, for instance, the total-variation
metric: it is possible to have $dm/dq=\infty$, meaning $q\in\mathcal{Q}$
may not share the same support as the true model, even as total variation
remains finite. This would imply that $\mathcal{Q}$ is blatantly
misspecified, in the sense that any specification test would almost
surely reject it. In contrast, as \citet[Chapter 9]{hansen2011robustness}
argue, when $R_{\mathcal{Q}}(m)<\infty$, the approximating model
can still be regarded as plausibly correct, since it would not be
rejected with probability one. 

Among the $\phi$-divergence measures satisfying $\lim_{x\to\infty}\phi(x)/x=\infty$,
the only commonly used divergence apart from KL divergence is the
Neyman $\chi^{2}$ divergence ($\phi(x)=(x-1)^{2}$). The Neyman $\chi^{2}$
divergence is a stronger discrepancy measure than KL divergence: it
is possible to have $\chi^{2}(m\|q)<\infty$ while $R_{q}(m)=\infty$.
Consequently, a KL-based a misspecification set $\mathcal{M}=\Bigl\{ m\in\Delta(\Theta,\mathcal{X}):\min_{q\in\mathcal{Q}}\,D_{q}(m)\leq K\Bigr\}$
is strictly larger than the corresponding $\chi^{2}$-based set $\mathcal{M}_{\chi^{2}}=\Bigl\{ m\in\Delta(\Theta,\mathcal{X}):\min_{q\in\mathcal{Q}}\,D_{\chi^{2}}(m\|q)\leq K\Bigr\}$.
Based on this, we conjecture that our results on the optimality of
efficient decisions continue to hold for the Neyman $\chi^{2}$ divergence
as well, though we leave the formal analysis to future work. 

\subsection{Asymmetric loss functions}

The estimation and treatment assignment loss functions considered
thus far share a crucial property: they are symmetric in the sense
that overestimating $\theta$ by a given amount incurs the same loss
as underestimating it by that amount. This symmetry is crucial to
the result that optimal decisions do not depend on the degree of misspecification. 

There are, however, many loss functions that lack this symmetry. A
prominent example is the linex loss $l(\theta,\delta)=e^{(\mu(\theta)-\delta)}-(\mu(\theta)-\delta)-1$
which penalizes positive errors much more than negative errors. Under
such losses, the optimal estimator is biased even in the absence of
misspecification. Incorporating misspecification concerns introduces
additional bias whose magnitude depends on the misspecification parameter
$\lambda$; see Appendix \ref{sec:Asymmetric-Loss-Functions} for
details in the context of linex loss.\footnote{Under misspecification, the linex loss function must be truncated
to keep the minimax risk finite. The optimal estimator therefore also
depends on the level of truncation. However, for any level of truncation,
the bias of the optimal estimator decreases in $\lambda$. } Intuitively, misspecification entails an exponential tilting of the
loss function, which further exacerbates any asymmetry already present
in the loss. Consequently, for asymmetric losses, optimal decisions
under ambiguity and misspecification may not coincide with those under
ambiguity alone.

\subsection{Relaxing caution: Smooth ambiguity aversion and hierarchical Bayes\label{subsec:Relaxing-caution}}

Our framework employs the Waldian approach to ambiguity by selecting
the worst-case model within the ambiguity set $\mathcal{Q}$. \citet{cerreia2026making}
term the preference axiom underlying this approach as `caution'. An
alternative is to adopt smooth ambiguity aversion, as in \citet{klibanoff2005smooth},
by introducing a probability distribution $\varrho_{Q}(\cdot)$ over
$\mathcal{Q}$. Intuitively, smooth ambiguity aversion corresponds
to making the decision-maker less cautious: rather than guarding against
the worst case, she averages over models according to $\varrho_{Q}(\cdot)$.
Since $\mathcal{Q}$ pairs a candidate likelihood with every possible
prior, a distribution over $\mathcal{Q}$ is equivalent to a distribution
$\varrho(\pi)$ over the space of priors $\Delta(\Theta)$: the latter
can simply be interpreted as a hyperprior in a Bayesian hierarchical
model.

\citet[Section 6.1]{cerreia2026making} show that under smooth ambiguity
aversion, the decision risk takes the form 
\begin{align*}
\tilde{V}_{n}(\delta) & =\int_{\mathcal{Q}}\phi_{Q}\left(\inf_{m}\left\{ \mathbb{E}_{m}\bigl[u_{n}(\theta,\delta)\bigr]+\lambda\,R_{q}(m)\right\} \right)d\varrho_{\mathcal{Q}}(q)\\
 & =\int_{\mathcal{Q}}\phi_{Q}\left(-\lambda\ln\mathbb{E}_{q}\!\left[e^{-u_{n}(\theta,\delta)/\lambda}\right]\right)d\varrho_{Q}(q),
\end{align*}
where $\phi_{Q}(\cdot)$ is a monotone function and the second equality
uses (\ref{eq:DV formula}). Setting $\phi_{Q}(t)=-e^{-t/\lambda}$
and converting $\varrho_{Q}(q)$ to a prior $\varrho(\pi)$ over $\Delta(\Theta)$
yields
\begin{align*}
\tilde{V}_{n}(\delta) & =-\int_{\Delta(\Theta)}\left(\int\mathbb{E}_{p(\bm{x}|\theta)}\!\left[e^{l_{n}(\theta,\delta)/\lambda}\right]d\pi(\theta)\right)d\varrho(\pi)\\
 & =-\int\mathbb{E}_{p(\bm{x}|\theta)}\!\left[e^{l_{n}(\theta,\delta)/\lambda}\right]d\bar{\pi}(\theta),
\end{align*}
where $\bar{\pi}(\theta)$ is the effective prior induced by the hyperprior
$\varrho(\cdot)$ over $\Delta(\Theta)$. Hence, misspecification
combined with smooth ambiguity aversion when $\phi_{Q}(t)=-e^{-t/\lambda}$
is equivalent to misspecification with a single hierarchical prior. 

The choice of $\phi_{Q}(t)=-e^{-t/\lambda}$, however, uses the same
parameter $\lambda$ to govern both aversion to prior ambiguity and
sensitivity to model misspecification. It may therefore be more natural
to set $\phi_{Q}(t)=-e^{-t/\xi}$, where $\xi$ captures aversion
to prior uncertainty separately from the misspecification parameter
$\lambda$. With this choice, the decision risk becomes 
\begin{align}
\tilde{V}_{n}(\delta) & =-\int_{\mathcal{Q}}\left(\mathbb{E}_{q}\!\left[e^{l_{n}(\theta,\delta)/\lambda}\right]\right)^{\lambda/\xi}d\varrho_{Q}(q)\nonumber \\
 & =-\int_{\Delta(\Theta)}\left(\int\mathbb{E}_{p(\bm{x}|\theta)}\!\left[e^{l_{n}(\theta,\delta)/\lambda}\right]d\pi\right)^{\lambda/\xi}d\varrho(\pi).\label{eq:decision-risk-general}
\end{align}
As $\xi\to0$, aversion to prior ambiguity grows without bound and
$\tilde{V}_{n}(\delta)$ reduces to the decision risk $V_{n}(\delta)$
from (\ref{eq:decision-risk-eq})---the Waldian formulation involving
the least favorable prior, and the primary focus of this article.
For $\xi\in(0,\infty)$, the decision-maker exhibits less aversion
to prior ambiguity, but this comes at the cost of a nonlinear objective
when $\lambda\neq\xi$, which complicates the analysis. A formal treatment
of the criterion (\ref{eq:decision-risk-general}) when $\xi\notin\{\lambda,\infty\}$
is therefore left for future research. 

\subsection{Model averaging\label{subsec:Model-averaging}}

While we have so far focused on misspecification of a single likelihood,
practitioners are often interested in selecting or averaging among
multiple competing likelihood specifications, each potentially subject
to varying degrees of misspecification concern. In this section, we
propose a framework for model averaging under such misspecification
concerns, albeit without a formal axiomatic justification. 

Let $p_{1,\theta}(\bm{x})\equiv p_{1}(\bm{x}|\theta)$ and $p_{2,\theta}(\bm{x})\equiv p_{2}(\bm{x}|\theta)$
denote two candidate likelihoods, and let $\mathcal{Q}_{1},\mathcal{Q}_{2}$
denote the corresponding frequentist models, where, as in Section
\ref{subsec:Models-with-prior-ambiguity},
\[
\mathcal{Q}_{a}:=\bigl\{\pi(\theta)\otimes p_{a,\theta}(\bm{x}):\pi\in\Delta(\Theta)\bigr\},\quad a\in\{1,2\}.
\]
Suppose that, treating each likelihood in isolation, the decision
maker contemplates a misspecification set for each of the form
\[
\mathcal{M}_{a}=\Bigl\{ m\in\Delta(\Theta,\mathcal{X}):\min_{q\in\mathcal{Q}_{a}}R_{q}(m)\leq K_{a}\Bigr\},\quad a\in\{1,2\}.
\]
Here, $K_{a}$ quantifies the decision-maker's misspecification concern
for each model: $K_{2}\gg K_{1}$ implies that the decision maker
has greater concern about likelihood 2 being misspecified than about
likelihood 1. Our framework permits the two likelihoods to be nested,
i.e., $\{p_{2,\theta}(\bm{x})\}_{\theta}\subseteq\{p_{1,\theta}(\bm{x})\}_{\theta}$,
which can be achieved by fixing some components of $\theta$ in $p_{2,\theta}(\bm{x})$.
For instance, in the running example, Alice may be choosing between
two specifications $Y\sim\mathcal{N}(\mu,\sigma^{2})$ and $Y\sim\mathcal{N}(\mu,1)$;
here, $\theta:=(\mu,\sigma^{2})$, but $\sigma^{2}=1$ is fixed in
the second specification. 

Rather than treating the models in isolation, the decision-maker may
wish to combine them. A natural choice is to take the overall set
of misspecified models to be the intersection
\[
\mathcal{M}=\mathcal{M}_{1}\cap\mathcal{M}_{2},
\]
which we assume to be non-empty; otherwise, the two models would be
deemed incompatible. The intuition is that if the decision-maker,
having contemplated each likelihood in isolation, has judged the models
in $\mathcal{M}_{1},\mathcal{M}_{2}$ to be individually reasonable,
then it is natural to restrict attention to models contained in both.
In practice, this approach is perhaps most suitable when one or both
of $\mathcal{M}_{1},\mathcal{M}_{2}$ is large enough to encompass
both $\mathcal{Q}_{1}$ and $\mathcal{Q}_{2}$, so that the decision-maker
views both frequentist models as compatible with her misspecification
preferences.

With this choice of $\mathcal{M}$, the optimal decision becomes
\begin{align*}
\delta_{n}^{*} & :=\argmin_{\delta}\sup_{m\in\mathcal{M}}\mathbb{E}_{m}\bigl[l_{n}(\theta,\delta)\bigr]\\
 & =\argmax_{\delta}\inf_{m}\left\{ \mathbb{E}_{m}\bigl[u_{n}(\theta,\delta)\bigr]:R_{\mathcal{Q}_{1}}(m)\leq K_{1},\;R_{\mathcal{Q}_{2}}(m)\leq K_{2}\right\} .
\end{align*}
As $R_{\mathcal{Q}_{a}}(\cdot)$ is convex, standard duality arguments
yield the more convenient variational characterization:
\begin{align*}
\delta_{n}^{*} & =\argmax_{\delta}V_{n}(\delta);\ \textrm{ where}\\
V_{n}(\delta) & :=\inf_{m}\left\{ \mathbb{E}_{m}\bigl[u_{n}(\theta,\delta)\bigr]+\lambda\left(\alpha\,R_{\mathcal{Q}_{1}}(m)+(1-\alpha)\,R_{\mathcal{Q}_{2}}(m)\right)\right\} ,
\end{align*}
for some $\alpha\in[0,1]$ and $\lambda\geq0$ that depend on $K_{1},K_{2}$.
In particular, $\alpha>1/2$ whenever $K_{2}\gg K_{1}$: the decision
risk places greater weight on the likelihood that is less likely to
be misspecified. 

Let
\[
\bar{p}_{\alpha,\theta}(\bm{x})\equiv\bar{p}_{\alpha}(\bm{x}|\theta):=p_{1}^{\alpha}(\bm{x}|\theta)\cdot p_{2}^{1-\alpha}(\bm{x}|\theta)
\]
denote the unnormalized geometric mixture density that combines $p_{1}(\bm{x}|\theta)$
and $p_{2}(\bm{x}|\theta)$ with mixing weights $\alpha$ and $1-\alpha$,
respectively. Recalling the decomposition $m=\pi(\theta)\otimes m_{\theta}(\bm{x})$
and applying (\ref{eq:characterization_of_R_Q}) yields
\begin{align*}
V_{n}(\delta) & =\inf_{\pi(\theta)\otimes m_{\theta}(\bm{x})}\int\left\{ \mathbb{E}_{m_{\theta}(\bm{x})}\bigl[u_{n}(\theta,\delta)\bigr]+\lambda\left(\alpha\,\text{KL}(m_{\theta}\|p_{1,\theta})+(1-\alpha)\,\text{KL}(m_{\theta}\|p_{2,\theta})\right)\right\} d\pi(\theta)\\
 & =\inf_{\pi(\theta)\otimes m_{\theta}(\bm{x})}\,\int\left\{ \mathbb{E}_{m_{\theta}(\bm{x})}\left[u_{n}(\theta,\delta)\right]+\lambda\cdot\text{KL}\bigl(m_{\theta}\|\bar{p}_{\alpha,\theta}\bigr)\right\} d\pi(\theta),
\end{align*}
where the last step makes use of the fact that the weighted sum of
KL divergences can be expressed as a single KL divergence against
a geometric mixture. Defining $\mathcal{Q}_{\alpha}:=\bigl\{\pi(\theta)\otimes\bar{p}_{\alpha,\theta}(\bm{x}):\pi\in\Delta(\Theta)\bigr\}$
and applying (\ref{eq:characterization_of_R_Q}) again then gives
\begin{align*}
V_{n}(\delta) & =\inf_{m}\left\{ \mathbb{E}_{m}\bigl[u_{n}(\theta,\delta)\bigr]+\lambda\,R_{\mathcal{Q}_{\alpha}}(m)\right\} \\
 & =-\lambda\ln\left\{ \max_{\pi\in\Delta(\Theta)}\int\mathbb{E}_{\bar{p}_{\alpha}(\bm{x}|\theta)}\!\left[e^{l_{n}(\theta,\delta)/\lambda}\right]d\pi(\theta)\right\} ,
\end{align*}
where the last step follows by the same arguments as in Section \ref{sec:Characterizing-Optimal-Decisions}. 

The optimal decision therefore solves
\begin{align*}
\delta_{n}^{*} & =\argmin_{\delta}\,\max_{\pi\in\Delta(\Theta)}\int\mathbb{E}_{\bar{p}_{\alpha}(\bm{x}|\theta)}\!\left[e^{l_{n}(\theta,\delta)/\lambda}\right]d\pi(\theta).
\end{align*}
Optimal decisions under multiple, possibly misspecified, candidate
likelihoods are thus equivalent to minimax decisions with an exponentiated
loss function and a single unnormalized geometric mixture likelihood
$\bar{p}_{\alpha}(\bm{x}|\theta)$. While $\bar{p}_{\alpha}(\bm{x}|\theta)$
is not a proper density, we can alternatively characterize $\delta_{n}^{*}$
as 
\[
\delta_{n}^{*}=\argmin_{\delta}\,\max_{\pi\in\Delta(\Theta)}\int w(\theta)\mathbb{E}_{p_{\alpha}(\bm{x}|\theta)}\!\left[e^{l_{n}(\theta,\delta)/\lambda}\right]d\pi(\theta)
\]
where $w(\theta):=\int\bar{p}_{\alpha}(\bm{x}|\theta)d\nu(\bm{x})$
is the normalizing constant and $p_{\alpha}(\bm{x}|\theta):=\bar{p}_{\alpha}(\bm{x}|\theta)/w(\theta)$
is the normalized mixture likelihood.

The calculations above are exact. The local asymptotic analysis of
this setting is more delicate, however, owing to the presence of the
weight $w(\theta)$, and depends on whether the competing likelihoods
$p_{1}(\cdot|\theta),p_{2}(\cdot|\theta)$ are separated by a fixed
distance or allowed to converge. Hence, we defer the formal results
to a companion paper.

Consider the limiting case in which $\mathcal{Q}_{1}$ strictly nests
$\mathcal{Q}_{2}$ and $K_{1}=0$, so that the decision-maker has
no misspecification concerns about the larger nesting model $\mathcal{Q}_{1}\equiv\mathcal{M}_{1}$
while still harboring concerns about the smaller class of models $\mathcal{M}_{2}$.
The optimal decision then becomes
\begin{align*}
\delta_{n}^{*} & =\argmax_{\delta}\inf_{m\in\mathcal{M}_{1}}\left\{ \mathbb{E}_{m}\bigl[u_{n}(\theta,\delta)\bigr]:R_{\mathcal{Q}_{2}}(m)\leq K_{2}\right\} \\
 & =\argmax_{\delta}\inf_{m\in\mathcal{M}_{1}}\left\{ \mathbb{E}_{m}\bigl[u_{n}(\theta,\delta)\bigr]+\bar{\lambda}R_{\mathcal{Q}_{2}}(m)\right\} ,
\end{align*}
for some $\bar{\lambda}$ that depends on $K_{2}$. The decision criterion
$\inf_{m\in\mathcal{M}_{1}}\left\{ \mathbb{E}_{m}\bigl[u_{n}(\theta,\delta)\bigr]+\bar{\lambda}R_{\mathcal{Q}_{2}}(m)\right\} $
is equivalent to the \textit{constrained multiplier criterion} axiomatized
and studied by \citet{andrews2026misspecification}. The latter thus
arises as a special case of our model selection framework when there
are no misspecification concerns about the larger, nesting model.

\section{Conclusion}

In this article, we have introduced a framework for evaluating statistical
decisions under both prior ambiguity and likelihood misspecification.
Misspecification manifests as an exponential tilting of the loss function,
while ambiguity corresponds to a search for the least favorable prior.
We also develop a theory of local asymptotics under global misspecification,
achieved by localizing the priors around a reference parameter, and
use this theory to characterize optimal estimation and treatment-assignment
decisions. Remarkably, in both cases, optimal decisions coincide with
those under correct likelihood specification.

The proposed framework opens several avenues for further research.
While Section \ref{subsec:Model-averaging} provides an initial treatment
of model averaging, a richer characterization may involve taking the
misspecification risks of competing likelihoods to converge to each
other, so that more nuanced tradeoffs between efficiency and misspecification
robustness may emerge. A further extension would be to separate ambiguity
concerns over the prior from those over candidate likelihoods, employing
smooth ambiguity aversion for the latter, as discussed in Section
\ref{subsec:Relaxing-caution}. This would bring the framework closer
to the literature on Bayesian model averaging. 

\bibliographystyle{econ-econometrica}
\bibliography{Ambiguity_and_misspecification}

@article{bordalo2020overreaction,
  title={{Overreaction in Macroeconomic Expectations}},
  author={Bordalo, Pedro and Gennaioli, Nicola and Ma, Yueran and Shleifer, Andrei},
  journal={American Economic Review},
  volume={110},
  number={9},
  pages={2748--2782},
  year={2020},
  publisher={American Economic Association 2014 Broadway, Suite 305, Nashville, TN 37203}
}

@article{andrews2026misspecification,
  title={{Misspecification-Averse Estimation}},
  author={Andrews, Isaiah and Li, Ricky and Shang, Yucheng},
  journal={arXiv preprint arXiv:2604.23176},
  year={2026}
}

@article{hirano2025wald,
  title={{Wald's Statistical Decision Theory for Policy Analysis and Adaptive Experiments}},
  author={Hirano, Keisuke},
  year={2025}
}

@book{dontchev2009implicit,
  title={{Implicit Functions and Solution Mappings}},
  author={Dontchev, Asen L and Rockafellar, R Tyrrell},
  volume={543},
  year={2009},
  publisher={Springer}
}

@book{Ben-Tal:RobustOptimization,
  author = {Ben-Tal, A. and Ghaoui, L. El and Nemirovski, A.},
  publisher = {Princeton University Press},
  title = {Robust Optimization},
  year = {2009}
}

@article{masten2021salvaging,
  title={{Salvaging Falsified Instrumental Variable Models}},
  author={Masten, Matthew A and Poirier, Alexandre},
  journal={Econometrica},
  volume={89},
  number={3},
  pages={1449--1469},
  year={2021},
  publisher={Wiley Online Library}
}

@article{muller2024locally,
  title={{Locally Robust Semiparametrically Efficient Bayesian Inference}},
  author={M{\"u}ller, Ulrich K and Norets, ANDRIY},
  year={2024},
  journal={Working paper}
}

@article{armstrong2025misspecification,
  title={{Misspecification in Econometrics: A Selective Review}},
  author={Armstrong, Timothy B},
  year={2025}
}

@article{white1982maximum,
  title={{Maximum Likelihood Estimation of Misspecified Models}},
  author={White, Halbert},
  journal={Econometrica: Journal of the econometric society},
  pages={1--25},
  year={1982},
  publisher={JSTOR}
}

@book{manski2003partial,
  title={{Partial Identification of Probability Distributions}},
  author={Manski, Charles F},
  year={2003},
  publisher={Springer}
}

@article{kuhn2025distributionally,
  title={{Distributionally Robust Optimization}},
  author={Kuhn, Daniel and Shafiee, Soroosh and Wiesemann, Wolfram},
  journal={Acta Numerica},
  volume={34},
  pages={579--804},
  year={2025},
  publisher={Cambridge University Press}
}

@article{huber1964robust,
  title={{Robust Estimation of a Location Parameter}},
  author={Huber, Peter J},
  journal={Annals of Mathematical Statistics},
  volume={35},
  number={4},
  pages={73--101},
  year={1964}
}

@article{box1976science,
  title={{Science and Statistics}},
  author={Box, George EP},
  journal={Journal of the American Statistical Association},
  volume={71},
  number={356},
  pages={791--799},
  year={1976},
  publisher={Taylor \& Francis}
}

@book{hansen2011robustness,
  title={Robustness},
  author={Hansen, Lars Peter and Sargent, Thomas J},
  year={2011},
  publisher={Princeton university press}
}

@article{klibanoff2005smooth,
  title={{A Smooth Model of Decision Making under Ambiguity}},
  author={Klibanoff, Peter and Marinacci, Massimo and Mukerji, Sujoy},
  journal={Econometrica},
  volume={73},
  number={6},
  pages={1849--1892},
  year={2005},
  publisher={Wiley Online Library}
}

@book{ibragimov1981problem,
  title={{Statistical Estimation: Asymptotic Theory}},
  author={Ibragimov, IA and Hasminskii, RZ},
  year={1981},
  publisher={Springer}
}

@article{anscombe1963definition,
  title={{A Definition of Subjective Probability}},
  author={Anscombe, Francis J and Aumann, Robert J},
  journal={{The Annals of Mathematical Statistics}},
  volume={34},
  number={1},
  pages={199--205},
  year={1963},
  publisher={JSTOR}
}

@article{andrews2020informativeness,
  title={{On the Informativeness of Descriptive Statistics for Structural Estimates}},
  author={Andrews, Isaiah and Gentzkow, Matthew and Shapiro, Jesse M},
  journal={Econometrica},
  volume={88},
  number={6},
  pages={2231--2258},
  year={2020},
  publisher={Wiley Online Library}
}

@article{bonhomme2022minimizing,
  title={{Minimizing Sensitivity to Model Misspecification}},
  author={Bonhomme, St{\'e}phane and Weidner, Martin},
  journal={Quantitative Economics},
  volume={13},
  number={3},
  pages={907--954},
  year={2022},
  publisher={Wiley Online Library}
}

@article{christensen2023counterfactual,
  title={{Counterfactual Sensitivity and Robustness}},
  author={Christensen, Timothy and Connault, Benjamin},
  journal={Econometrica},
  volume={91},
  number={1},
  pages={263--298},
  year={2023},
  publisher={Wiley Online Library}
}

@article{ishihara2021evidence,
  title={{Evidence Aggregation for Treatment Choice}},
  author={Ishihara, Takuya and Kitagawa, Toru},
  journal={arXiv preprint arXiv:2108.06473},
  year={2021}
}

@article{yata2021optimal,
  title={{Optimal Decision Rules under Partial Identification}},
  author={Yata, Kohei},
  journal={arXiv preprint arXiv:2111.04926},
  year={2021}
}

@article{montiel2026decision,
  title={{Decision Theory for Treatment Choice Problems with Partial Identification}},
  author={Montiel Olea, Jos{\'e} Luis and Qiu, Chen and Stoye, J{\"o}rg},
  journal={Review of Economic Studies},
  pages={rdag015},
  year={2026},
  publisher={Oxford University Press}
}

@article{christensen2022optimal,
  title={{Optimal Discrete Decisions when Payoffs are Partially Identified}},
  author={Christensen, Timothy and Moon, Hyungsik Roger and Schorfheide, Frank},
  journal={arXiv preprint arXiv:2204.11748},
  year={2022}
}

@incollection{wald1950statistical,
  title={{Statistical Decision Functions}},
  author={Wald, Abraham},
  booktitle={Breakthroughs in Statistics: Foundations and Basic Theory},
  pages={342--357},
  year={1950},
  publisher={Springer}
}

@article{andrews2025purpose,
  title={{The Purpose of an Estimator is What it Does: Misspecification, Estimands, and Over-Identification}},
  author={Andrews, Isaiah and Chen, Jiafeng and Tecchio, Otavio},
  journal={arXiv preprint arXiv:2508.13076},
  year={2025}
}

@article{maccheroni2006ambiguity,
  title={{Ambiguity Aversion, Robustness, and the Variational Representation of Preferences}},
  author={Maccheroni, Fabio and Marinacci, Massimo and Rustichini, Aldo},
  journal={Econometrica},
  volume={74},
  number={6},
  pages={1447--1498},
  year={2006},
  publisher={Wiley Online Library}
}

@book{robert2007bayesian,
  title={{The Bayesian Choice: From Decision-Theoretic Foundations to Computational Implementation}},
  author={Robert, Christian P},
  year={2007},
  publisher={Springer}
}

@article{cerreia2026making,
  title={{Making Decisions under Model Misspecification}},
  author={Cerreia-Vioglio, Simone and Hansen, Lars Peter and Maccheroni, Fabio and Marinacci, Massimo},
  journal={Review of Economic Studies},
  volume={93},
  number={2},
  pages={892--925},
  year={2026},
  publisher={Oxford University Press UK}
}

@book{le1986asymptotic,
  title={{Asymptotic Methods in Statistical Decision Theory}},
  author={Le Cam, Lucien M},
  year={1986},
  publisher={Springer-Verlag}
}

@book{van1996weak,
  title={{Weak Convergence and Empirical Processes: With Applications to Statistics}},
  author={van der Vaart, Aad W and Wellner, Jon},
  year={1996},
  publisher={Springer Science \& Business Media}
}

@book{van2000asymptotic,
  title={{Asymptotic Statistics}},
  author={Van der Vaart, Aad W},
  year={2000},
  publisher={Cambridge university press}
}

@article{hirano2009asymptotics,
  title={{Asymptotics for Statistical Treatment Rules}},
  author={Hirano, Keisuke and Porter, Jack R},
  journal={Econometrica},
  volume={77},
  number={5},
  pages={1683--1701},
  year={2009},
  publisher={Wiley Online Library}
}

\appendix

\section{Proofs of Theorems \ref{thm:lower_bound} and \ref{thm:upper_bound}}

\subsection{Proof of Theorem \ref{thm:lower_bound}}

For estimation-loss, the theorem is a direct consequence of \citet[Proposition 8.11]{van2000asymptotic};
see also \citet[Theorem 3.11.5]{van1996weak}. 

We therefore focus on proving the theorem for treatment-assignment
loss. Let 
\begin{align*}
R_{n}(h,\delta) & :=\mathbb{E}_{n,h}\left[e^{l_{n}(\theta+h/\sqrt{n},\delta)/\lambda}\right]\textrm{ and }\\
R(h,\tilde{\delta}) & :=\mathbb{E}_{h}\left[e^{l(h,\tilde{\delta})/\lambda}\right]
\end{align*}
denote the frequentist-risks of $\delta$ and $\tilde{\delta}$ in
the finite-sample and limit experiments, respectively. As with the
proof of Proposition \ref{prop1}, we allow for randomized decisions.
For randomized decisions, $\delta,\tilde{\delta}_{n}$ should be understood
as the probability that the treatment is assigned, and the loss function
$l_{n}(\theta+h/\sqrt{n},\delta)/\lambda$ should be interpreted as
allowing for $\delta\in[0,1]$. 

We start by proving the following lemma:

\begin{lem} \label{lem:treatment_assignment_lower_bound}Suppose
Assumptions 1 and 2 hold. Let $\{\delta_{n}\}_{n}$ be a sequence
of treatment decisions with associated frequentist regret $R_{n}(h,\delta_{n})$.
Then, there exists a subsequence $\{n_{k}\}_{k}$ and a (possibly
randomized) treatment decision $\tilde{\delta}$ in the limit experiment
such that $\lim_{k\to\infty}R_{n_{k}}(h,\delta_{n_{k}})\ge R(h,\tilde{\delta})$
for each $h$. \end{lem}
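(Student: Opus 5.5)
The plan is to follow the standard Le Cam--van der Vaart route (cf.\ \citet[Theorem 15.1 and Chapter 7]{van2000asymptotic}) for sequences of tests in a limit experiment. The only twist is that the loss enters through the exponential tilt $e^{l/\lambda}$, which depends on $\delta$ only through the assignment probability.

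\textbf{Step 1: extract a subsequence along which the decisions converge.} Regard each $\delta_n$ as a $[0,1]$-valued statistic. Consider the pair $(\delta_n, x_n)$ under $P_{n,0}$. By (\ref{eq:Convergence of score process}) the second coordinate is tight, and the first is bounded. Prohorov's theorem then gives a subsequence $\{n_k\}$ along which
\[
(\delta_{n_k}, x_{n_k}) \rightsquigarrow (T, x)
\]
under $P_{n,0}$, where $x\sim\mathcal{N}(0,I)$ and $T\in[0,1]$. Since $\Delta(\mathbb{R}^d)$-valued limits are not needed, I restrict attention to $\mathbb{E}[T\mid x]$. Define the limit decision by $\tilde{\delta}(x):=\mathbb{E}[T\mid x]$. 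This is a randomized rule in the limit experiment with assignment probability $\tilde{\delta}(x)$.

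\textbf{Step 2: transfer to every $h$ via Le Cam's third lemma.} Fix $h$. By the LAN expansion (\ref{eq:LAN property}) and Slutsky,
\[
\Bigl(\delta_{n_k},\ \ln \tfrac{dP_{n_k,h}}{dP_{n_k,0}}\Bigr) \rightsquigarrow \bigl(T,\ h^{\intercal}I_0^{1/2}x-\tfrac12 h^{\intercal}I_0h\bigr)
\]
under $P_{n,0}$. Le Cam's third lemma then gives $\delta_{n_k}\rightsquigarrow L_h$ under $P_{n_k,h}$, where for bounded continuous $f$
\[
\mathbb{E}[f(L_h)]=\mathbb{E}\bigl[f(T)\,e^{h^{\intercal}I_0^{1/2}x-\frac12 h^{\intercal}I_0h}\bigr].
\]
Contiguity holds because the limiting log-likelihood ratio is Gaussian with mean equal to minus half its variance.

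\textbf{Step 3: pass the limit through the loss.} For treatment loss with randomized $\delta\in[0,1]$, interpret the exponentiated risk as
\[
R_n(h,\delta_n)=\mathbb{E}_{n,h}\bigl[(1-\delta_n)e^{l_n(h,0)/\lambda}+\delta_n e^{l_n(h,1)/\lambda}\bigr],
\]
which is affine in $\delta_n$. By (\ref{eq:treatment_loss_approx}), $l_n(h,a)$ converges deterministically and uniformly in $a$ for bounded $h$. Therefore
\[
\lim_k R_{n_k}(h,\delta_{n_k}) = e^{l(h,0)/\lambda}+\bigl(e^{l(h,1)/\lambda}-e^{l(h,0)/\lambda}\bigr)\,\mathbb{E}[L_h].
\]
By Step 2, $\mathbb{E}[L_h]=\mathbb{E}_0[\tilde{\delta}(x)\,dP_h/dP_0]=\mathbb{E}_h[\tilde{\delta}(x)]$. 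The right-hand side is exactly $R(h,\tilde{\delta})$, which gives the claim with equality, and hence with $\ge$.

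\textbf{Main obstacle.} The delicate point is the interpretation of randomized rules in Step 3. If instead the loss is tilted after realizing the randomized action, then $\mathbb{E}\,e^{l(h,\delta)/\lambda}$ with $\delta\in[0,1]$ plugged in directly is convex rather than affine in $\delta$. In that case the same argument still works, but equality is replaced by an inequality via Jensen's inequality. Specifically, convexity of $a\mapsto e^{l(h,a)/\lambda}$ (an exponential of an affine function) gives
\[
\liminf_k \mathbb{E}_{n_k,h}\bigl[e^{l_{n_k}(h,\delta_{n_k})/\lambda}\bigr] = \mathbb{E}\bigl[e^{l(h,T)/\lambda}\tfrac{dP_h}{dP_0}\bigr] \ge \mathbb{E}_h\bigl[e^{l(h,\tilde{\delta}(x))/\lambda}\bigr].
\]
The first equality uses bounded continuous convergence, valid because the loss is bounded for fixed $h$. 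The inequality uses conditional Jensen given $x$.

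A second technical point is that the subsequence must not depend on $h$. This is guaranteed because the extraction in Step 1 is done once under $P_{n,0}$, and contiguity then handles all $h$ simultaneously. Finally, to get $\lim$ rather than $\liminf$ along the subsequence, one can pass to a further subsequence for each countable dense set of $h$ by diagonalization. Alternatively, one can note that in the affine interpretation the limit exists for every $h$.
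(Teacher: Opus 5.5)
Your proposal is correct and follows essentially the same route as the paper: a single Prohorov extraction of $(\delta_{n_k},x_{n_k})$ under $P_{n,0}$, Le Cam's third lemma to pass to each $h$, the limit rule $\tilde{\delta}=\mathbb{E}[T\mid x]$, and conditional Jensen via convexity of $a\mapsto e^{l(h,a)/\lambda}$. The differences are minor and in your favour: bounded convergence (in place of the paper's Fatou argument) gives an actual limit along the subsequence, as the statement's $\lim$ requires, so your diagonalization remark is unnecessary; and your affine reading of randomized rules, where equality holds without Jensen, is the one compatible with the pure-action best-response comparison in the proof of Proposition~\ref{prop1}, on which the proof of Theorem~\ref{thm:lower_bound} relies.
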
 
\begin{proof}
As $\delta_{n}\in[0,1]$ is uniformly bounded, it is tight. Since
$x_{n}\xrightarrow[P_{n,0}]{d}x\sim\mathcal{N}(0,I)$, it follows
that the joint $(\delta_{n},x_{n})$ is also tight. Hence, by Prohorov's
theorem, given any sequence $\{n_{}\}$, there exists a further sub-sequence
$\{n_{k}\}$ such that $(\delta_{n_{k}},x_{n_{k}})\xrightarrow[P_{n_{k},0}]{d}(\bar{\delta},x)$,
where $\bar{\delta}\in[0,1]$ is some tight limit of $\delta_{n_{k}}$.
Combined with (\ref{eq:LAN property}), this implies
\begin{align}
\left(\begin{array}{c}
\delta_{n_{k}}\\
\ln\frac{dP_{n_{k},\theta_{0}+h/\sqrt{n_{k}}}}{dP_{n_{k},\theta_{0}}}
\end{array}\right) & \xrightarrow[P_{n_{k},0}]{d}\left(\begin{array}{c}
\bar{\delta}\\
\ln V
\end{array}\right);\label{eq:pf:Thm-simple-regert:weak convergence}\\
V\sim & \exp\left\{ h^{\intercal}I_{0}^{1/2}x-\frac{1}{2}h^{\intercal}I_{0}h\right\} .\nonumber 
\end{align}
Observe that $V\ge0$ and $E[V]=1$. Therefore, by an application
of Le Cam's third lemma,
\begin{equation}
\delta_{n_{k}}\xrightarrow[P_{n_{k},h}]{d}\mathcal{L};\ \textrm{where }\mathcal{L}(B):=E\left[\mathbb{I}\{\bar{\delta}\in B\}V\right]\ \forall\ B\in\mathcal{B}(\mathbb{R}).\label{eq:pf:Thm-simple-regret-weak convergence 2}
\end{equation}
Together with (\ref{eq:treatment_loss_approx}), the above implies
\begin{equation}
e^{l_{n_{k}}(\theta_{0}+h/\sqrt{n_{k}},\delta_{n_{k}})/\lambda}\xrightarrow[P_{n_{k},h}]{d}e^{l(h,\cdot)/\lambda}\sharp\mathcal{L},\label{eq:pf:Them-simple-regret-weak convergence 3}
\end{equation}
where $f\sharp\mathcal{L}$ denotes the push-forward of the measure
$\mathcal{L}$ by $f$.

Define $\tilde{\delta}=E\left[\bar{\delta}|x\right]$. By construction,
$\tilde{\delta}$ is a valid treatment decision in the limit experiment. 

Now, $\exp\left\{ l_{n_{k}}(\theta_{0}+h/\sqrt{n_{k}},\delta_{n_{k}})/\lambda\right\} $
is a sequence of non-negative random variables. Hence, by an application
of Fatou's lemma for weak convergence on (\ref{eq:pf:Them-simple-regret-weak convergence 3}),
\begin{align*}
\liminf_{k\to\infty}R_{n_{k}}(h,\delta_{n_{k}}) & =\liminf_{k\to\infty}\mathbb{E}_{n_{k},h}\left[\exp\left\{ l_{n_{k}}(\theta_{0}+h/\sqrt{n_{k}},\delta_{n_{k}})/\lambda\right\} \right]\\
 & \ge E\left[\exp\left\{ l(h,\bar{\delta})/\lambda\right\} \exp\left\{ h^{\intercal}I_{0}^{1/2}x-\frac{1}{2}h^{\intercal}I_{0}h\right\} \right]\\
 & \ge E\left[\exp\left\{ l(h,\tilde{\delta})/\lambda\right\} \exp\left\{ h^{\intercal}I_{0}^{1/2}x-\frac{1}{2}h^{\intercal}I_{0}h\right\} \right]\\
 & =\mathbb{E}_{h}\left[\exp\left\{ l(h,\tilde{\delta})/\lambda\right\} \right]=R(h,\tilde{\delta}),
\end{align*}
where the second inequality follows from the conditional Jensen's
inequality after noting that $\exp\left\{ l(h,\delta)/\lambda\right\} $
is convex in $\delta$ for any $h$, and the penultimate equality
follows from standard change-of-measure arguments for Gaussian distributions. 
\end{proof}
Returning to the proof of Theorem \ref{thm:lower_bound}, let $\pi_{\Delta^{*}}$
denote the least-favorable prior, i.e., the symmetric two-point prior
supported on
\[
\left(-\frac{\Delta^{*}}{\sigma^{2}}I_{0}^{-1}\dot{\mu}_{0},\frac{\Delta^{*}}{\sigma^{2}}I_{0}^{-1}\dot{\mu}_{0}\right).
\]
Clearly, for all $M$ sufficiently large,
\begin{align*}
 & \liminf_{n\to\infty}\min_{\delta}\,\max_{\pi(h)\in\Delta_{M}(\mathcal{H})}\int\mathbb{E}_{n,h}\left[e^{l_{n}(\theta_{0}+h/\sqrt{n},\delta)/\lambda}\right]d\pi(h)\\
 & \ge\liminf_{n\to\infty}\min_{\delta}\int\mathbb{E}_{n,h}\left[e^{l_{n}(\theta_{0}+h/\sqrt{n},\delta)/\lambda}\right]d\pi_{\Delta^{*}}(h).
\end{align*}

Let $\{\delta_{n}\}_{n}$ denote any sequence along which the $\liminf_{n\to\infty}\min_{\delta}$
on the right hand side of the above expression is attained. By the
definition of $\pi_{\Delta^{*}}$, 
\[
\int\mathbb{E}_{n,h}\left[e^{l_{n}(\theta_{0}+h/\sqrt{n},\delta_{n})/\lambda}\right]d\pi_{\Delta^{*}}(h)=\frac{1}{2}R_{n}(-h^{*},\delta_{n})+\frac{1}{2}R_{n}(h^{*},\delta_{n}).
\]
Lemma \ref{lem:treatment_assignment_lower_bound} then implies the
existence of further subsequence, $\{n_{k}\}_{k}$, and a treatment
decision $\tilde{\delta}$ in the limit experiment such that 
\begin{align*}
\liminf_{k\to\infty}\left\{ \frac{1}{2}R_{n_{k}}(-h^{*},\delta_{n_{k}})+\frac{1}{2}R_{n_{k}}(h^{*},\delta_{n_{k}})\right\}  & \ge\frac{1}{2}R(-h^{*},\tilde{\delta})+\frac{1}{2}R(h^{*},\tilde{\delta})\\
 & =\int\mathbb{E}_{h}\left[e^{l(h,\tilde{\delta})/\lambda}\right]d\pi_{\Delta^{*}}(h)\\
 & \ge\int\mathbb{E}_{h}\left[e^{l(h,\tilde{\delta}^{*})/\lambda}\right]d\pi_{\Delta^{*}}(h)=V^{*},
\end{align*}
where the inequality follows from the fact that $\tilde{\delta}^{*}$
is the best-response to the least-favorable prior $\pi_{\Delta^{*}}$
in the limit experiment, and the final equality is just the definition
of minimax-risk. This completes the proof of the theorem. 

\subsection{Proof of Theorem \ref{thm:upper_bound}}

\subsubsection*{Estimation}

We start with the case of estimation. Consider any sequence $h_{n}\to h$.
By (\ref{eq:Convergence of score process}), (\ref{eq:LAN property})
and Assumption 3, 
\begin{align}
\left(\begin{array}{c}
\sqrt{n}(\hat{\theta}_{\textrm{mle}}-\theta_{0})\\
\ln\frac{dP_{n,\theta_{0}+h_{n}/\sqrt{n}}}{dP_{n,\theta_{0}}}
\end{array}\right) & \xrightarrow[P_{n,0}]{d}\left(\begin{array}{c}
I_{0}^{-1/2}x\\
h^{\intercal}I_{0}^{1/2}x-\frac{1}{2}h^{\intercal}I_{0}h
\end{array}\right),\ \textrm{where }x\sim\mathcal{N}(0,I).\label{eq:pf:Thm-simple-regert:weak convergence-2}
\end{align}
By Le Cam's third lemma, 
\begin{equation}
\sqrt{n}(\hat{\theta}_{\textrm{mle}}-\theta_{0})\xrightarrow[P_{n,h_{n}}]{d}\mathcal{N}(h,I_{0}^{-1}),\label{eq:pf:Thm2:0}
\end{equation}
and therefore, in view of Assumption 2, it follows that for each $h_{n}\to h$,
\begin{align*}
l_{n}\left(\theta_{0}+h_{n}/\sqrt{n},\hat{\delta}_{n}^{*}\right) & =\ell_{}\left(\sqrt{n}\left(\mu(\theta_{0}+h_{n}/\sqrt{n})-\mu(\hat{\theta}_{\textrm{mle}})\right)\right)\\
 & \xrightarrow[P_{n,h_{n}}]{d}\ell\left(\dot{\mu}_{0}^{\intercal}I_{0}^{-1/2}x\right),\ \textrm{where }x\sim\mathcal{N}(0,I).
\end{align*}
Since $l_{n}(\cdot)$ is uniformly bounded by Assumption 4, standard
properties of weak convergence and the fact $\tilde{\delta}^{*}(\cdot):=\dot{\mu}_{0}^{\intercal}I_{0}^{-1/2}\cdot$
imply 
\begin{equation}
\mathbb{E}_{n,h_{n}}\left[e^{l_{n}(\theta_{0}+h_{n}/\sqrt{n},\hat{\delta}_{n}^{*})/\lambda}\right]\to\mathbb{E}_{0}\left[e^{\ell\left(\dot{\mu}_{0}^{\intercal}I_{0}^{-1/2}x\right)/\lambda}\right]=\mathbb{E}_{h}\left[e^{\ell\left(\dot{\mu}_{0}^{\intercal}h-\tilde{\delta}^{*}\right)/\lambda}\right]\label{eq:pf:thm2:1}
\end{equation}
for every sequence $h_{n}\to h$. Define
\begin{align*}
f_{n}(h) & :=\mathbb{E}_{n,h_{}}\left[e^{l_{n}(\theta_{0}+h_{}/\sqrt{n},\hat{\delta}_{n}^{*})/\lambda}\right]\ \textrm{and}\\
f(h) & :=\mathbb{E}_{h}\left[e^{\ell_{}\left(\dot{\mu}_{0}^{\intercal}h-\tilde{\delta}^{*}\right)/\lambda}\right]=\mathbb{E}_{h}\left[e^{l_{}\left(h,\tilde{\delta}^{*}\right)/\lambda}\right].
\end{align*}
Equation (\ref{eq:pf:thm2:1}) implies continuous convergence of $f_{n}(\cdot)$
to $f(\cdot)$, i.e., $f_{n}(h_{n})\to f(h)$ for every $h_{n}\to h$.
But continuous convergence on compact sets implies uniform convergence,
so 
\begin{equation}
\sup_{\vert h\vert\le M}\vert f_{n}(h)-f(h)\vert\to0.\label{eq:pf:thm2:2}
\end{equation}

Now, consider a sequence of priors $\{\pi_{n}(h)\}_{n}$ along which
\[
\limsup_{n\to\infty}\max_{\pi(h)\in\Delta_{M}(\mathcal{H})}\int\mathbb{E}_{n,h}\left[e^{l_{n}(\theta_{0}+h/\sqrt{n},\hat{\delta}_{n}^{*})/\lambda}\right]d\pi(h)
\]
is attained. Since $\Delta_{M}(\mathcal{H})$ represents the space
of compactly supported priors, it is compact under the metric of weak
convergence. Hence, there exists a further sub-sequence $\{\pi_{n_{j}}(h)\}_{j}$
such that $\pi_{n_{j}}(h)$ converges weakly to some $\tilde{\pi}(h)\in\Delta_{M}(\mathcal{H})$.
Furthermore, as $e^{\ell_{}(\cdot)/\lambda}$ is uniformly bounded,
so is $f(\cdot)$. Combining these observations with (\ref{eq:pf:thm2:2}),
standard properties of weak convergence yield
\begin{align*}
 & \limsup_{n\to\infty}\max_{\pi(h)\in\Delta_{M}(\mathcal{H})}\int\mathbb{E}_{n,h}\left[e^{l_{n}(\theta_{0}+h/\sqrt{n},\hat{\delta}_{n}^{*})/\lambda}\right]d\pi(h)\\
 & =\lim_{j\to\infty}\int f_{n_{j}}(h)d\pi_{n_{j}}(h)\\
 & \le\lim_{j\to\infty}\sup_{\vert h\vert\le M}\vert f_{n_{j}}(h)-f(h)\vert+\lim_{j\to\infty}\left|\int f(h)d\pi_{n_{j}}(h)-\int f(h)d\tilde{\pi}(h)\right|+\int f(h)d\tilde{\pi}(h)\\
 & =\int f(h)d\tilde{\pi}(h)\le\max_{\pi(h)\in\Delta(\mathcal{H})}\int\mathbb{E}_{h}\left[e^{l\left(h,\tilde{\delta}^{*}\right)/\lambda}\right]d\pi(h):=V^{*}.
\end{align*}
Since the above is valid for any $M<\infty$, this completes the proof
of Theorem \ref{thm:upper_bound} for the estimation problem. 

\subsubsection*{Treatment assignment}

We now turn to the case of treatment assignment. Recall that here
we choose $\theta_{0}$ so that $\mu(\theta_{0})=0$. Then, (\ref{eq:pf:Thm2:0})
and Assumption 2 imply 
\begin{align*}
\mathbf{1}\left\{ \mu(\hat{\theta}_{\textrm{mle}})\ge0\right\}  & =\mathbf{1}\left\{ \sqrt{n}\left(\mu(\hat{\theta}_{\textrm{mle}})-\mu(\theta_{0})\right)\ge0\right\} \\
 & \xrightarrow[P_{n,h_{n}}]{d}\mathbf{1}\left\{ \dot{\mu}_{0}^{\intercal}I_{0}^{-1/2}x\ge0\right\} ,\ \textrm{where }x\sim\mathcal{N}(I_{0}^{-1/2}h,I).
\end{align*}
Hence, by standard properties of weak convergence, for each $h_{n}\to h$,
\begin{equation}
\mathbb{E}_{n,h_{n}}\left[\hat{\delta}_{n}^{*}\right]=P_{n,h_{n}}\left(\mu(\hat{\theta}_{\textrm{mle}})\ge0\right)\to P_{h}\left(\dot{\mu}_{0}^{\intercal}I_{0}^{-1/2}x\ge0\right)=\mathbb{E}_{h}\left[\tilde{\delta}^{*}\right].\label{eq:pf:Thm2:3}
\end{equation}

Note that under the treatment assignment loss, 
\[
\mathbb{E}_{n,h_{n}}\left[e^{l_{n}(\theta_{0}+h_{n}/\sqrt{n},\hat{\delta}_{n}^{*})/\lambda}\right]=e^{l_{n}(\theta_{0}+h_{n}/\sqrt{n},1)/\lambda}\mathbb{E}_{n,h_{n}}\left[\hat{\delta}_{n}^{*}\right]+e^{l_{n}(\theta_{0}+h_{n}/\sqrt{n},0)/\lambda}\mathbb{E}_{n,h_{n}}\left[1-\hat{\delta}_{n}^{*}\right].
\]
Now, (\ref{eq:treatment_loss_approx}) yields
\[
l_{n}(\theta_{0}+h_{n}/\sqrt{n},a)\to l(h,a),\ \textrm{for each }a\in\{0,1\}\ \textrm{and }h_{n}\to h.
\]
Combined with (\ref{eq:pf:Thm2:3}), this proves 
\begin{align*}
\mathbb{E}_{n,h_{n}}\left[e^{l_{n}(\theta_{0}+h_{n}/\sqrt{n},\hat{\delta}_{n}^{*})/\lambda}\right] & \to e^{l(h,1)/\lambda}\mathbb{E}_{h}\left[\tilde{\delta}^{*}\right]+e^{l(h,0)/\lambda}\mathbb{E}_{h}\left[1-\tilde{\delta}^{*}\right]\\
 & =\mathbb{E}_{h}\left[e^{l(h,\tilde{\delta}^{*})/\lambda}\right],\ \textrm{for each }h_{n}\to h.
\end{align*}

As before, define
\begin{align*}
f_{n}(h) & :=\mathbb{E}_{n,h_{}}\left[e^{l_{n}(\theta_{0}+h_{}/\sqrt{n},\hat{\delta}_{n}^{*})/\lambda}\right]\ \textrm{and}\\
f(h) & :=\mathbb{E}_{h}\left[e^{l_{}\left(h,\tilde{\delta}^{*}\right)/\lambda}\right],
\end{align*}
and observe that $f(h)$ is bounded under treatment-assignment loss
whenever $\vert h\vert\le M$. Consequently, the remainder of the
proof follows by applying the same arguments as in the case of estimation.

\newpage{}

\section*{\textbf{SUPPLEMENTARY APPENDIX (Online Appendix) }}

\section{Proofs Of the Remaining Results}

\subsection{Proof of Proposition \ref{prop1}}

The claim follows if we show that the decision maker's choice of $\tilde{\delta}^{*}=\mathbf{1}\{\dot{\mu}_{0}^{\intercal}I_{0}^{-1/2}x\geq0\}$
and nature's choice of a two point prior supported on $(-h^{*},h^{*})$
constitute a Nash equilibrium. Note that the proof applies even if
we allow for randomized decisions, where $\tilde{\delta}$ is interpreted
as the probability of treatment assignment and the loss function $l_{n}(\theta+h/\sqrt{n},\delta)/\lambda$
is interpreted as allowing for $\delta\in[0,1]$. The optimal decision,
however, does not involve randomization.

Let $\sigma=\sqrt{\dot{\mu}_{0}^{\intercal}I_{0}^{-1}\dot{\mu}_{0}}$.
Consider the best response of the decision-maker to any symmetric
two-point prior, $\pi_{\Delta}$, supported on
\[
\left(-\frac{\Delta^ {}}{\sigma^{2}}I_{0}^{-1}\dot{\mu}_{0},\frac{\Delta}{\sigma^{2}}I_{0}^{-1}\dot{\mu}_{0}\right),
\]
where $\Delta\in[0,\infty)$ is arbitrary. Let $\mathbb{P}_{h|x}$
denote the posterior probability over $h$ given this prior. Some
straightforward algebra shows that
\[
q(x):=\mathbb{P}_{h|x}\left(h=-\frac{\Delta}{\sigma^{2}}I_{0}^{-1}\dot{\mu}_{0}\right)=\frac{1}{1+\exp\left\{ 2\Delta\left(\dot{\mu}_{0}^{\intercal}I_{0}^{-1/2}x\right)/\sigma^{2}\right\} }.
\]
The Bayes-optimal response to $\pi_{\Delta}$ is therefore 
\begin{align*}
\tilde{\delta}_{\Delta} & =\mathbf{1}\left\{ \mathbb{E}_{h|x}\left[\exp\left\{ -\frac{\dot{\mu}_{0}^{\intercal}h}{\lambda}\mathbf{1}\{\dot{\mu}_{0}^{\intercal}h<0\}\right\} \right]\le\mathbb{E}_{h|x}\left[\exp\left\{ \frac{\dot{\mu}_{0}^{\intercal}h}{\lambda}\mathbf{1}\{\dot{\mu}_{0}^{\intercal}h\ge0\}\right\} \right]\right\} \\
 & =\mathbf{1}\left\{ e^{\frac{\Delta}{2\lambda}}q(x)+(1-q(x))\le q(x)+e^{\frac{\Delta}{2\lambda}}(1-q(x))\right\} \\
 & =\mathbf{1}\left\{ q(x)\le1/2\right\} =\mathbf{1}\left\{ \dot{\mu}_{0}^{\intercal}I_{0}^{-1/2}x\ge0\right\} .
\end{align*}
Hence, $\tilde{\delta}^{*}=\mathbf{1}\{\dot{\mu}_{0}^{\intercal}I_{0}^{-1/2}x\geq0\}$
is a Bayes optimal response to $\pi_{\Delta}$ for any $\Delta>0$.

We now consider nature's best response to $\tilde{\delta}^{*}$. Observe
that for any $\dot{\mu}_{0}^{\intercal}h\ge0$, 
\begin{align*}
\mathbb{E}_{h}\left[e^{l(h,\tilde{\delta}^{*})/\lambda}\right] & =\mathbb{E}_{h}\left[e^{\frac{\dot{\mu}_{0}^{\intercal}h}{\lambda}\mathbf{1}\{\dot{\mu}_{0}^{\intercal}I_{0}^{-1/2}x\le0\}}\right]\\
 & =\mathbb{E}_{0}\left[e^{\frac{\dot{\mu}_{0}^{\intercal}h}{\lambda}\mathbf{1}\{x\le-\dot{\mu}_{0}^{\intercal}h/\sigma\}}\right]=e^{\frac{\vert\dot{\mu}_{0}^{\intercal}h\vert}{\lambda}}\Phi(-\vert\dot{\mu}_{0}^{\intercal}h\vert/\sigma)+1-\Phi(-\vert\dot{\mu}_{0}^{\intercal}h\vert/\sigma).
\end{align*}
Similarly, for any $\dot{\mu}_{0}^{\intercal}h<0$, 
\begin{align*}
\mathbb{E}_{h}\left[e^{l(h,\tilde{\delta}^{*})/\lambda}\right] & =\mathbb{E}_{0}\left[e^{-\frac{\dot{\mu}_{0}^{\intercal}h}{\lambda}\mathbf{1}\{x\ge-\dot{\mu}_{0}^{\intercal}h/\sigma\}}\right].\\
 & =e^{\frac{\vert\dot{\mu}_{0}^{\intercal}h\vert}{\lambda}}\Phi(-\vert\dot{\mu}_{0}^{\intercal}h\vert/\sigma)+1-\Phi(-\vert\dot{\mu}_{0}^{\intercal}h\vert/\sigma).
\end{align*}
Thus, the expected loss is constant in $\vert\dot{\mu}_{0}^{\intercal}h\vert$,
implying that nature is indifferent between choosing $-h$ and $h$
for any $h$. Nature's best response to $\tilde{\delta}^{*}$ is therefore
any prior supported on $\left\{ h:\vert\dot{\mu}_{0}^{\intercal}h\vert=\Delta^{*}\right\} $,
where 
\[
\Delta^{*}:=\arg\max_{\Delta\ge0}\left\{ \left(e^{\frac{\Delta}{\lambda}}-1\right)\Phi(-\Delta/\sigma)\right\} .
\]
It is straightforward to verify that $\Delta^{*}$ exists and is unique.
Thus, the symmetric two-point prior, $\pi_{\Delta^{*}}$, supported
on $(-h^{*},h^{*})$ is a best response to $\tilde{\delta}^{*}$ . 

\subsection{Proof of Theorem \ref{thm:lower_bound-nonparametric}}

For estimation-loss, the theorem is a direct consequence of \citet[Theorem 25.21]{van2000asymptotic}.
We therefore focus on proving the theorem for treatment-assignment
loss. 

Recall that, via the Hilbert space isometry, we can construct an orthonormal
basis $(\psi/\sigma,\phi_{1},\phi_{2},\dots)$ such that each $h\in T(P_{0})$
can be identified with square integrable sequence of the form $(\mu/\sigma,\gamma_{1},\gamma_{2},\dots)\in l_{2}$,
where $\mu=\left\langle \psi,h\right\rangle $ and $\gamma_{k}=\left\langle \phi_{k},h\right\rangle $
for all $k\neq0$. Consider the class of priors $\tilde{\Pi}$ on
$l_{2}$ that assign a prior $\rho(\mu)\in\Delta_{M}(\mathbb{R})$
to $\mu$, supported on $\{\mu:\vert\mu\vert\le M\}$, while placing
a point-mass at the origin $(\gamma_{1},\gamma_{2},\dots)=(0,0,\dots)$
for the remaining components. Any $\pi\in\tilde{\Pi}$ is then equivalent
to a probability distribution over $T(P_{0})$ such that $h$ takes
the form $(\mu/\sigma^{2})\psi$, where $\mu\sim\rho(\cdot)$. 

Clearly, for $\bar{M}:=M/\sigma^{2}$, 
\begin{align*}
 & \liminf_{n\to\infty}\min_{\delta}\,\max_{\pi(h)\in\Delta(K_{M})}\int\mathbb{E}_{n,h}\left[e^{l_{n}(h,\delta)/\lambda}\right]d\pi(h)\\
 & \ge\liminf_{n\to\infty}\min_{\delta}\,\max_{\rho(\mu)\in\Delta_{\bar{M}}(\mathbb{R})}\int\mathbb{E}_{n,h}\left[\exp\left\{ l_{n}\left(\frac{\mu}{\sigma^{2}}\psi,\delta\right)/\lambda\right\} \right]d\rho(\mu).
\end{align*}

Now, consider sub-models of the form $P_{1/\sqrt{n},(\mu/\sigma)\psi/\sigma}$
for $\mu\in\mathbb{R}$. By (\ref{eq:SLAN nonparametric setting}),
\begin{equation}
\sum_{i=1}^{n}\ln\frac{P_{1/\sqrt{n},(\mu/\sigma)\psi/\sigma}}{dP_{0}}(Y_{i})=\frac{\mu}{\sigma\sqrt{n}}\sum_{i=1}^{n}\frac{\psi}{\sigma}(Y_{i})-\frac{\mu^{2}}{2\sigma^{2}}+o_{P_{1/\sqrt{n},0}}(1).\label{eq:LAN:parametric sub-models}
\end{equation}
Comparing with (\ref{eq:LAN property}), we observe that the family
\[
\left\{ P_{1/\sqrt{n},(\mu/\sigma)\psi/\sigma}:\mu\in\mathbb{R}\right\} 
\]
is equivalent to a parametric model with (normalized) score $\psi(\cdot)/\sigma$
and local parameter $\mu$ (observe that $\mathbb{E}_{P_{0}}[(\psi/\sigma)^{2}]=1$).
Furthermore, the second part of Assumption 5 implies
\begin{equation}
l_{n}\left(\frac{\mu}{\sigma^{2}}\psi,a\right)\to\mu\,\mathbf{1}\{\mu\geq0\}-\mu\,a,\ \textrm{uniformly over }a\in[0,1]\textrm{ and bounded }\mu.\label{eq:treatment_loss_approx-1}
\end{equation}
Consequently, we can apply the same arguments as in the proof of Theorem
\ref{thm:lower_bound} to show that for all $M$ sufficiently large
\begin{align*}
 & \liminf_{n\to\infty}\min_{\delta}\,\max_{\rho(\mu)\in\Delta_{\bar{M}}(\mathbb{R})}\int\mathbb{E}_{n,h}\left[\exp\left\{ l_{n}\left(\frac{\mu}{\sigma^{2}}\psi,\delta\right)/\lambda\right\} \right]d\rho(\mu)\\
 & \ge\min_{\tilde{\delta}}\,\max_{\rho(\mu)\in\Delta(\mathbb{R})}\int\mathbb{E}_{\mu}\left[e^{\left\{ \mu\,\mathbf{1}\{\mu\geq0\}-\mu\,\tilde{\delta}\right\} /\lambda}\right]d\rho(\mu).
\end{align*}
But by (\ref{eq:minimax_value_limit_experiment_nonparametrics}),
the term on the right is just the definition of $V^{*}$. 

\subsection{Proof of Theorem \ref{thm:upper_bound-non-parametric}}

\subsubsection*{Estimation}

We start with the case of estimation. Denote $\mu_{0}=\mu(P_{0})$.
By Assumption 6 and the central limit theorem, 
\begin{equation}
\sqrt{n}(\hat{\mu}_{n}-\mu_{0})\xrightarrow[P_{n,0}]{d}\mathcal{N}(0,\sigma^{2}).\label{eq:pf:thm4:1}
\end{equation}

Consider any sequence $h_{n}\to h$, where convergence is in terms
of the $L^{2}(P_{0})$ norm. Recall that any $h\in T(P_{0})$ admits
the orthogonal decomposition $h=(\mu/\sigma)\psi/\sigma+\tilde{h}$,
where $\mu:=\left\langle \psi,h\right\rangle $ and $\tilde{h}$ is
orthogonal to $\psi$. Applying the central limit theorem again, 
\begin{equation}
\left(\begin{array}{c}
\frac{1}{\sqrt{n}}\sum_{i=1}^{n}\psi(Y_{i})/\sigma\\
\frac{1}{\sqrt{n}}\sum_{i=1}^{n}\tilde{h}(Y_{i})
\end{array}\right)\xrightarrow[P_{n,0}]{d}\left(\begin{array}{c}
x\\
Z
\end{array}\right)\sim\mathcal{N}\left(\left(\begin{array}{c}
0\\
0
\end{array}\right),\left(\begin{array}{cc}
1 & 0\\
0 & \left\Vert \tilde{h}\right\Vert ^{2}
\end{array}\right)\right).\label{eq:pf:thm4:2}
\end{equation}
Combining (\ref{eq:pf:thm4:1}), (\ref{eq:pf:thm4:2}), (\ref{eq:SLAN nonparametric setting})
and the fact $\left\Vert h\right\Vert ^{2}=(\mu^{2}/\sigma^{2})+\left\Vert \tilde{h}\right\Vert ^{2}$,
we obtain
\begin{align}
\left(\begin{array}{c}
\sqrt{n}(\hat{\mu}_{n}-\mu_{0})\\
\ln\frac{dP_{n,h_{n}}}{dP_{n,0}}
\end{array}\right) & \xrightarrow[P_{n,0}]{d}\left(\begin{array}{c}
\sigma^ {}x\\
\ln V
\end{array}\right),\ \textrm{where}\label{eq:pf:Thm-4:weak convergence}\\
V & \sim\exp\left\{ \left(\frac{\mu}{\sigma}x-\frac{\mu^{2}}{2\sigma^{2}}\right)+\left(Z-\frac{1}{2}\left\Vert \tilde{h}\right\Vert ^{2}\right)\right\} .\nonumber 
\end{align}

Observe that $V\ge0$ and $E[V]=1$. Therefore, by an application
of Le Cam's third lemma,
\begin{equation}
\sqrt{n}(\hat{\mu}_{n}-\mu_{0})\xrightarrow[P_{n,h_{n}}]{d}\mathcal{L};\ \textrm{where }\mathcal{L}(B):=E\left[\mathbb{I}\{\sigma^ {}x\in B\}V\right]\ \forall\ B\in\mathcal{B}(\mathbb{R}).\label{eq:pf:Thm4:3}
\end{equation}
In other words, for every bounded and continuous $g(\cdot)$, 
\begin{align*}
\mathbb{E}_{n,h_{n}}\left[g\left(\sqrt{n}(\hat{\mu}_{n}-\mu_{0})\right)\right] & \to E\left[g(\sigma^ {}x)\exp\left\{ \left(\frac{\mu}{\sigma}x-\frac{\mu^{2}}{2\sigma^{2}}\right)+\left(Z-\frac{1}{2}\left\Vert \tilde{h}\right\Vert ^{2}\right)\right\} \right]\\
 & =E\left[g(\sigma x)\exp\left\{ \left(\frac{\mu}{\sigma}x-\frac{\mu^{2}}{2\sigma^{2}}\right)\right\} \right],
\end{align*}
where the equality is due to the fact $x$ and $Z$ are independent,
and $Z\sim\mathcal{N}(0,\left\Vert \tilde{h}\right\Vert ^{2})$. Hence,
applying the portmanteau lemma and standard change of measure arguments
yields 
\begin{equation}
\sqrt{n}(\hat{\mu}_{n}-\mu_{0})\xrightarrow[P_{n,h_{n}}]{d}\mathcal{N}\left(\left\langle \psi,h\right\rangle ,\sigma^{2}\right),\label{eq:pf:thm4:5}
\end{equation}
for every $h_{n}\to h$. 

Observe that, by the second part of Assumption 5, 
\begin{equation}
\sqrt{n}\left(\mu(P_{1/\sqrt{n},h_{n}})-\mu_{0}\right)=\left\langle \psi,h\right\rangle +o_{Pn,0}(1)\label{eq:pf:thm4:6}
\end{equation}
for every $h_{n}\to h$. Combining (\ref{eq:pf:thm4:5}), (\ref{eq:pf:thm4:6})
and the fact $\ell(\cdot)$ is bounded by Assumption 4, we obtain
\begin{align*}
l_{n}\left(h_{n},\hat{\delta}_{n}^{*}\right) & =\ell\left(\sqrt{n}\left(\mu(P_{1/\sqrt{n},h_{n}})-\mu_{0}\right)+\sqrt{n}\left(\mu_{0}-\hat{\mu}_{n}\right)\right)\\
 & \xrightarrow[P_{n,h_{n}}]{d}\ell\left(\sigma x\right),\ \textrm{where }x\sim\mathcal{N}(0,1).
\end{align*}
Since $l_{n}(\cdot)$ is uniformly bounded, standard properties of
weak convergence imply that for every $h_{n}\to h$, 
\begin{equation}
\mathbb{E}_{n,h_{n}}\left[e^{l_{n}\left(h_{n},\hat{\delta}_{n}^{*}\right)/\lambda}\right]\to\mathbb{E}_{0}\left[e^{\ell\left(\sigma x\right)/\lambda}\right]=\mathbb{E}_{\left\langle \psi,h\right\rangle }\left[e^{\ell\left(\left\langle \psi,h\right\rangle -\tilde{\delta}^{*}\right)/\lambda}\right],\label{eq:pf:thm4:7}
\end{equation}
where for any $\mu\in\mathbb{R}$, $\mathbb{E}_{\mu}[\cdot]$ represents
the expectation under the limit experiment described in Section \ref{subsec:Lower-bounds-nonparametrics},
and $\tilde{\delta}^{*}:=\sigma x$---with $x\sim\mathcal{N}(\mu/\sigma,1)$
under $P_{\mu}$ ---is the optimal decision rule under that limit
experiment. 

Define
\begin{align*}
f_{n}(h) & :=\mathbb{E}_{n,h_{}}\left[e^{l_{n}(h,\hat{\delta}_{n}^{*})/\lambda}\right]\ \textrm{and}\\
f(h) & :=\mathbb{E}_{\left\langle \psi,h\right\rangle }\left[e^{\ell\left(\left\langle \psi,h\right\rangle -\tilde{\delta}^{*}\right)/\lambda}\right].
\end{align*}
Equation (\ref{eq:pf:thm4:7}) implies continuous convergence of $f_{n}(\cdot)$
to $f(\cdot)$ as functions from $l_{2}$ to $\mathbb{R}$ , i.e.,
$f_{n}(h_{n})\to f(h)$ for every $h_{n}\to h$. But continuous convergence
on compact sets implies uniform convergence, so 
\begin{equation}
\sup_{h\in K_{M}}\vert f_{n}(h)-f(h)\vert\to0.\label{eq:pf:thm4:8}
\end{equation}

Now, consider a sequence of priors $\{\pi_{n}(h)\}_{n}$ along which
\[
\limsup_{n\to\infty}\max_{\pi(h)\in\Delta(K_{M})}\int\mathbb{E}_{n,h}\left[e^{l_{n}(h,\hat{\delta}_{n}^{*})/\lambda}\right]d\pi(h)
\]
is attained. Since $K_{M}\subset l_{2}$ is a compact set, $\Delta(K_{M})$
is compact under the metric of weak convergence. Hence, there exists
a further sub-sequence $\{\pi_{n_{j}}(h)\}_{j}$ such that $\pi_{n_{j}}(h)$
converges weakly to some $\tilde{\pi}(h)\in\Delta(K_{M})$. Furthermore,
since $e^{\ell_{}(\cdot)/\lambda}$ is uniformly bounded, so is $f(\cdot)$.
Combined with (\ref{eq:pf:thm4:8}), standard properties of weak convergence
then imply 
\begin{align*}
 & \limsup_{n\to\infty}\max_{\pi(h)\in\Delta(K_{M})}\int\mathbb{E}_{n,h}\left[e^{l_{n}(h,\hat{\delta}_{n}^{*})/\lambda}\right]d\pi(h)\\
 & =\lim_{j\to\infty}\int f_{n_{j}}(h)d\pi_{n_{j}}(h)\\
 & \le\lim_{j\to\infty}\sup_{h\in K_{M}}\vert f_{n_{j}}(h)-f(h)\vert+\lim_{j\to\infty}\left|\int f(h)d\pi_{n_{j}}(h)-\int f(h)d\tilde{\pi}(h)\right|+\int f(h)d\tilde{\pi}(h)\\
 & =\int f(h)d\tilde{\pi}(h)=\int\mathbb{E}_{\left\langle \psi,h\right\rangle }\left[e^{\ell\left(\left\langle \psi,h\right\rangle -\tilde{\delta}^{*}\right)/\lambda}\right]d\tilde{\pi}(h)\\
 & \le\max_{\rho(\mu)\in\Delta(\mathbb{R})}\int\mathbb{E}_{\mu}\left[e^{\ell\left(\mu-\tilde{\delta}^{*}\right)/\lambda}\right]d\rho(\mu):=V^{*}.
\end{align*}
Since the above is valid for any $M<\infty$, this completes the proof
of Theorem \ref{thm:upper_bound-non-parametric} for the estimation
problem. 

\subsubsection*{Treatment assignment}

We now turn to the case of treatment assignment. Recall that here
we choose $P_{0}$ so that $\mu_{0}:=\mu(P_{0})=0$. Then, (\ref{eq:pf:thm4:5}),
(\ref{eq:pf:thm4:6}) and the second part of Assumption 5 imply 
\begin{align*}
\mathbf{1}\left\{ \hat{\mu}_{n}\ge0\right\}  & =\mathbf{1}\left\{ \sqrt{n}\left(\hat{\mu}_{n}-\mu_{0}\right)\ge0\right\} \\
 & \xrightarrow[P_{n,h_{n}}]{d}\mathbf{1}\left\{ \sigma x\ge0\right\} ,\ \textrm{where }x\sim\mathcal{N}(\left\langle \psi,h\right\rangle /\sigma,1).
\end{align*}
Hence, by standard properties of weak convergence, for each $h_{n}\to h$,
we have
\begin{equation}
\mathbb{E}_{n,h_{n}}\left[\hat{\delta}_{n}^{*}\right]=P_{n,h_{n}}\left(\hat{\mu}_{n}\ge0\right)\to P_{\left\langle \psi,h\right\rangle }\left(\sigma x\ge0\right)=\mathbb{E}_{\left\langle \psi,h\right\rangle }\left[\tilde{\delta}^{*}\right],\label{eq:pf:Thm2:4-8}
\end{equation}
where for any $\mu\in\mathbb{R}$, $P_{\mu}$ and $\mathbb{E}_{\mu}[\cdot]$
represent the probabilities and expectations under the limit experiment
described in Section \ref{subsec:Lower-bounds-nonparametrics}, and
$\tilde{\delta}^{*}:=\mathbf{1}\left\{ \sigma x\ge0\right\} $ ---
with $x\sim\mathcal{N}(\mu/\sigma,1)$ under $P_{\mu}$ --- is the
optimal decision rule under that limit experiment. 

Note that under the treatment assignment loss, 
\[
\mathbb{E}_{n,h_{n}}\left[e^{l_{n}(h_{n},\hat{\delta}_{n}^{*})/\lambda}\right]=e^{l_{n}(h_{n},1)/\lambda}\mathbb{E}_{n,h_{n}}\left[\hat{\delta}_{n}^{*}\right]+e^{l_{n}(h_{n},0)/\lambda}\mathbb{E}_{n,h_{n}}\left[1-\hat{\delta}_{n}^{*}\right].
\]
Now, (\ref{eq:treatment_loss_approx}) implies
\[
l_{n}(h_{n},a)\to l\left(\left\langle \psi,h\right\rangle ,a\right),\ \textrm{for each }a\in\{0,1\}\ \textrm{and }h_{n}\to h,
\]
where $l(\cdot,\cdot)$ denotes the treatment assignment loss under
the limit experiment, as defined in (\ref{eq:minimax_value_limit_experiment_nonparametrics}).
Combined with (\ref{eq:pf:Thm2:4-8}), this proves 
\begin{align*}
\mathbb{E}_{n,h_{n}}\left[e^{l_{n}(h_{n},\hat{\delta}_{n}^{*})/\lambda}\right] & \to e^{l\left(\left\langle \psi,h\right\rangle ,1\right)/\lambda}\mathbb{E}_{\left\langle \psi,h\right\rangle }\left[\tilde{\delta}^{*}\right]+e^{l\left(\left\langle \psi,h\right\rangle ,0\right)}\mathbb{E}_{\left\langle \psi,h\right\rangle }\left[1-\tilde{\delta}^{*}\right]\\
 & =\mathbb{E}_{\left\langle \psi,h\right\rangle }\left[e^{l\left(\left\langle \psi,h\right\rangle ,\tilde{\delta}^{*}\right)/\lambda}\right],\ \textrm{for each }h_{n}\to h.
\end{align*}

As before, define
\begin{align*}
f_{n}(h) & :=\mathbb{E}_{n,h_{}}\left[e^{l_{n}(h,\hat{\delta}_{n}^{*})/\lambda}\right]\ \textrm{and}\\
f(h) & :=\mathbb{E}_{\left\langle \psi,h\right\rangle }\left[e^{l\left(\left\langle \psi,h\right\rangle ,\tilde{\delta}^{*}\right)/\lambda}\right],
\end{align*}
and observe that $f(h)$ is bounded under treatment-assignment loss
whenever $h\in K_{M}$ (which implies $\left\langle \psi,h\right\rangle \le M$).
Consequently, the remainder of the proof follows by applying the same
arguments as in the case of estimation.

\section{Local Asymptotics with Global Priors\label{sec:Local-Asymptotics-with-non-local-prior}}

We can allow unrestricted prior and reference parameter choice by
employing uniform versions of local asymptotic normality and Assumption
2. In what follows, the parameter space, $\Theta$, is assumed to
be a compact set. Let $P_{n,\theta}$ represent the joint probability
measure over the iid $Y_{1},\dots,Y_{n}$ when each $Y_{i}\sim P_{\theta}$,
and let $\mathbb{E}_{n,\theta}[\cdot]$ denote the corresponding expectation.
Also, take $\breve{\Theta}$ to be the an open set covering $\Theta$. 

\begin{asmA1} The class $\{P_{\theta}^ {}:\theta\in\breve{\Theta}\}$
satisfies a uniform LAN property, i.e., there exists a score function
$\psi_{\theta}(\cdot)$ and information matrix $I_{\theta}:=\mathbb{E}_{\theta}[\psi_{\theta}\psi_{\theta}^{\intercal}]$
such that for each $\theta_{n}\to\theta\in\Theta$ and $h_{n}\to h\in\mathbb{R}^{d}$,
\[
\ln\frac{dP_{n,\theta_{n}+h_{n}/\sqrt{n}}}{dP_{n,\theta_{n}}}=h^{\intercal}I_{\theta_{n}}^{1/2}x_{n,\theta_{n}}-\frac{1}{2}h^{\intercal}I_{\theta_{n}}h+o_{P_{n,\theta_{n}}}(1),
\]
where 
\[
x_{n,\theta_{n}}:=\frac{I_{\theta_{n}}^{-1/2}}{\sqrt{n}}\sum_{i=1}^{n}\psi_{\theta_{n}}(Y_{i})\xrightarrow[P_{n,\theta_{n}}]{d}\mathcal{N}(0,I).
\]
Furthermore, the information matrix $I_{\theta}:=\mathbb{E}_{\theta}[\psi_{\theta}\psi_{\theta}^{\intercal}]$
is invertible, continuous in $\theta$ and $0<\inf_{\theta}\lambda_{\textrm{min}}(I_{\theta}^{-1})<\sup_{\theta}\lambda_{\textrm{max}}(I_{\theta}^{-1})<\infty$.\footnote{Here, $\lambda_{\textrm{min}}(A)$ and $\lambda_{\textrm{max}}(A)$
represent the minimum and maximum eigenvalues of a matrix $A$.} \end{asmA1}

\begin{asmA2}The function $\mu(\cdot)$ is Lipschitz continuous over
$\breve{\Theta}$. Specifically, there exists $\dot{\mu}_{\theta}\in\mathbb{R}^{d}$
and $\epsilon_{n}\to0$ such that $\sqrt{n}\left(\mu(\theta+h/\sqrt{n})-\mu(\theta)\right)=\dot{\mu}_{\theta}^{\intercal}h+\epsilon_{n}\vert h\vert^{2}$
uniformly over all $\theta\in\breve{\Theta}$ and bounded $h$. Furthermore,
for the treatment-assignment problem, $\dot{\mu}_{\theta}$ is continuous
in $\theta$ and $\inf_{\theta\in\tilde{\Theta}}\left\Vert \dot{\mu}_{\theta}\right\Vert >0$,
where $\tilde{\Theta}:=\{\theta\in\Theta:\mu(\theta)=0\}\subset\textrm{interior}(\Theta)$.\end{asmA2}

Assumption A1 follows \citet[Definition 2.2]{ibragimov1981problem}.
Primitive conditions for this assumption can be found in \citet[Sections 2.6 \& 2.7]{ibragimov1981problem};
essentially, what is required is a uniform version of quadratic mean
differentiability. Many commonly used parametric models satisfy this
assumption. For instance, the Bernoulli distribution satisfies it
provided $\Theta\equiv[\epsilon,1-\epsilon]$ for some $\epsilon>0$.
Assumption A2 is a uniform version of Assumption 2. The main additional
requirement is that we need the derivative $\dot{\mu}_{\theta}$ to
be non-zero on the zero-set of $\mu(\cdot)$, i.e., whenever $\mu(\theta)=0$. 

We now define a reference parameter dependent limit experiment. Suppose
that the decision-maker observes a $d$-dimensional signal $x_{}$,
posited to be drawn from a reference Gaussian likelihood, $P_{\theta,h}(x_{})\sim\mathcal{N}(I_{\theta}^{1/2}h,I)$.
Let $V_{\theta}^{*}$ represent the parameter dependent minimal decision-risk
in this experiment:
\begin{align}
V_{\theta}^{*} & :=\min_{\tilde{\delta}}\,\max_{\pi(h)\in\Delta(\mathbb{R})}\int\mathbb{E}_{\theta,h}\left[e^{l_{\theta}(h,\tilde{\delta})/\lambda}\right]d\pi(h),\ \textrm{with}\label{eq:minimax_value_limit_experiment_nonparametrics-1}\\
l_{\theta}(h,\tilde{\delta}) & =\begin{cases}
\ell(\dot{\mu}_{\theta}^{\intercal}h-\tilde{\delta}) & \textrm{for estimation loss},\\
\dot{\mu}_{\theta}^{\intercal}h\,\left\{ \mathbf{1}\{\dot{\mu}_{\theta}^{\intercal}h\geq0\}-\tilde{\delta}\right\}  & \textrm{for treatment-assignment loss}.
\end{cases}\nonumber 
\end{align}
Observe that the corresponding optimal decisions are $\tilde{\delta}_{\theta}^{*}=\dot{\mu}_{\theta}^{\intercal}I_{\theta}^{-1/2}x$
for estimation and $\tilde{\delta}_{\theta}^{*}=\mathbf{1}\left\{ \dot{\mu}_{\theta}^{\intercal}I_{\theta}^{-1/2}x\geq0\right\} $
for treatment assignment. 

We then obtain the following lower bound under global priors:

\begin{thm}\label{thm:lower_bound-non-local} Suppose Assumptions
A1, A2 hold, and there exists $\bar{\theta}\in\textrm{interior}(\Theta)$
such that $V_{\bar{\theta}}^{*}=\sup_{\theta\in\Theta}V_{\theta}^{*}$
for estimation loss. Then, under both the estimation and treatment-assignment
loss functions, 
\begin{align*}
 & \liminf_{n\to\infty}\,\min_{\delta}\,\max_{\pi(\theta)\in\Delta(\Theta)}\int\mathbb{E}_{n,\theta}\left[e^{l_{n}(\theta,\delta)/\lambda}\right]d\pi(\theta)\\
 & \ge\begin{cases}
\sup_{\theta\in\Theta}V_{\theta}^{*} & \textrm{for estimation loss},\\
\sup_{\{\theta\in\Theta:\mu(\theta)=0\}}V_{\theta}^{*} & \textrm{for treatment-assignment loss}.
\end{cases}
\end{align*}
\end{thm}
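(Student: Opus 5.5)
The plan is to reduce the global lower bound to the local lower bound of Theorem \ref{thm:lower_bound} at a suitably chosen reference parameter. Fix any $\theta_{0}$ in the relevant set: $\theta_{0}\in\textrm{interior}(\Theta)$ for estimation loss, and $\theta_{0}\in\tilde{\Theta}=\{\theta:\mu(\theta)=0\}\subset\textrm{interior}(\Theta)$ for treatment-assignment loss. For every $M<\infty$ and all $n$ large enough, the local neighbourhood $\{\theta_{0}+h/\sqrt{n}:\vert h\vert\le M\}$ lies inside $\Theta$. Each local prior $\pi(h)\in\Delta_{M}(\mathcal{H})$ therefore induces a global prior in $\Delta(\Theta)$ by pushing forward under $h\mapsto\theta_{0}+h/\sqrt{n}$. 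This gives, for each $\delta$,
\[
\max_{\pi(\theta)\in\Delta(\Theta)}\int\mathbb{E}_{n,\theta}\left[e^{l_{n}(\theta,\delta)/\lambda}\right]d\pi(\theta)\ge\max_{\pi(h)\in\Delta_{M}(\mathcal{H})}\int\mathbb{E}_{n,h}\left[e^{l_{n}(\theta_{0}+h/\sqrt{n},\delta)/\lambda}\right]d\pi(h).
\]
Taking $\min_{\delta}$ and then $\liminf_{n}$, the left side of the theorem dominates the left side of Theorem \ref{thm:lower_bound} at reference $\theta_{0}$, for every $M$. Letting $M\to\infty$ then yields the bound $V_{\theta_{0}}^{*}$.

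To invoke Theorem \ref{thm:lower_bound} at $\theta_{0}$, I will check that Assumptions A1 and A2 imply its hypotheses at that point.
\begin{itemize}
\item Taking $\theta_{n}\equiv\theta_{0}$ in A1 gives the LAN expansion (\ref{eq:LAN property}) and the score convergence (\ref{eq:Convergence of score process}), with $I_{\theta_{0}}$ invertible. These are the only consequences of Assumption 1 actually used in the proof of Theorem \ref{thm:lower_bound}: both the Le Cam third-lemma argument in Lemma \ref{lem:treatment_assignment_lower_bound} and the van der Vaart local asymptotic minimax theorem for estimation run off LAN.
\item A2 yields Assumption 2 at $\theta_{0}$. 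For treatment assignment it also gives $\dot{\mu}_{\theta_{0}}\neq0$ on $\tilde{\Theta}$, which is needed both for the two-point least-favourable prior of Proposition \ref{prop1} and for the loss approximation (\ref{eq:treatment_loss_approx}).
\end{itemize}
With these checks, the previous paragraph gives
\[
\liminf_{n}\min_{\delta}\max_{\pi\in\Delta(\Theta)}(\cdots)\ge V_{\theta_{0}}^{*}.
\]

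It remains to take the supremum over $\theta_{0}$. For estimation loss, the hypothesis that the supremum is attained at an interior point $\bar{\theta}$ lets us simply set $\theta_{0}=\bar{\theta}$, so no continuity argument is needed. For treatment-assignment loss, the bound holds for every $\theta_{0}\in\tilde{\Theta}$, since $\tilde{\Theta}\subset\textrm{interior}(\Theta)$. Because the left side does not depend on $\theta_{0}$, taking the supremum over $\tilde{\Theta}$ finishes the proof.

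Alternatively, $V_{\theta}^{*}$ can be written in closed form via Proposition \ref{prop1}: it depends on $\theta$ only through $\sigma_{\theta}^{2}=\dot{\mu}_{\theta}^{\intercal}I_{\theta}^{-1}\dot{\mu}_{\theta}$. This quantity is continuous by A1 and A2, which would also let one pass to boundary limits if needed.

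The main obstacle I anticipate is confirming that the proof of Theorem \ref{thm:lower_bound} uses only LAN at the single point $\theta_{0}$ rather than the full qmd Assumption 1, which A1 does not literally state. For treatment loss this is clear from Lemma \ref{lem:treatment_assignment_lower_bound}. For estimation loss, I would cite the LAN-based version of the local asymptotic minimax theorem (\citealp[Theorem 3.11.5]{van1996weak}), which requires only LAN. If needed, I would also note that the finite-sample-to-limit step for the exponentiated bowl-shaped loss follows as in that theorem, since $e^{\ell/\lambda}$ is bowl-shaped.
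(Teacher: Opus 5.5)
Your proposal is correct and matches the paper's proof: push the compactly supported local priors around an interior reference point forward into $\Delta(\Theta)$, apply the argument of Theorem \ref{thm:lower_bound} at that point using A1 (with $\theta_{n}\equiv\theta_{0}$) and A2, and take the supremum using the interior maximizer $\bar{\theta}$ for estimation loss and $\tilde{\Theta}\subset\textrm{interior}(\Theta)$ for treatment-assignment loss. You also point out that A1 supplies only LAN rather than qmd, so the LAN-based local asymptotic minimax theorem should be cited for estimation; the paper passes over this silently, and it does not change the argument.
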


To show that the MLE based decisions in (\ref{eq:asymptotically_optimal_decisions})
also remain asymptotically optimal under global priors, we impose
a stronger assumption on the properties of MLE:

\begin{asmA3} The maximum-likelihood estimator $\hat{\theta}_{\textrm{mle}}$
admits a uniform locally linear score-function approximation, i.e.,
for any $\theta_{n}\to\theta\in\Theta$, 
\[
\sqrt{n}I_{\theta_{n}}^{1/2}\left(\hat{\theta}_{\textrm{mle}}-\theta_{n}\right)=x_{n,\theta_{n}}+o_{P_{n,\theta_{n}}}(1)\xrightarrow[P_{n,\theta_{n}}]{d}\mathcal{N}(0,I).
\]
Furthermore, for any $\epsilon>0$, there exists $M<\infty$ such
that
\[
\sup_{\theta\in\Theta}P_{n,\theta}\left(\left|\sqrt{n}\left(\mu(\hat{\theta}_{\textrm{mle}})-\mu(\theta)\right)\right|>M\right)\le\epsilon.
\]
\end{asmA3}

Sufficient conditions for Assumption A3 can be found in \citet[Theorem 3.1]{ibragimov1981problem}.

As in Section \ref{subsec:Formal-results-parametric}, we require
that $\ell(\cdot)$ be bounded. Additionally, we also truncate the
loss for the treatment-assignment problem to avoid issues relating
to the non-existence of moments. We state these requirements as an
additional assumption below. 

\begin{asmA4} The function $\ell(\cdot)$ is bowl-shaped and bounded.
Additionally, for the treatment assignment problem, we replace $l_{n}(\theta,\delta)$
with the truncated loss $l_{n,K}(\theta,\delta)=K\wedge l_{n}(\theta,\delta)$.\end{asmA4}

\begin{thm}\label{thm:upper_bound-non-local} Suppose that Assumptions
A1-A4 hold. Then, under both the estimation and treatment-assignment
loss functions,
\begin{align*}
 & \lim_{K\to\infty}\,\limsup_{n\to\infty}\,\max_{\pi(\theta)\in\Delta(\Theta)}\int\mathbb{E}_{n,\theta}\left[e^{l_{n,K}(\theta,\hat{\delta}_{n}^{*})/\lambda}\right]d\pi(\theta)\\
 & \le\begin{cases}
\sup_{\theta\in\Theta}V_{\theta}^{*} & \textrm{for estimation loss},\\
\sup_{\{\theta\in\Theta:\mu(\theta)=0\}}V_{\theta}^{*} & \textrm{for treatment-assignment loss}.
\end{cases}
\end{align*}
\end{thm}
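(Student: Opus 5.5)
The strategy mirrors the proof of Theorem \ref{thm:upper_bound}, with the fixed reference $\theta_{0}$ replaced by a sequence of references $\theta_{n}$ ranging over $\Theta$. There is one preliminary reduction. For each $n$ the integrand is bounded, by $e^{\bar{\ell}/\lambda}$ for estimation and $e^{K/\lambda}$ for treatment assignment. The supremum over $\pi\in\Delta(\Theta)$ is attained at point masses, so it equals $\sup_{\theta\in\Theta}g_{n}(\theta)$, where
\[
g_{n}(\theta):=\mathbb{E}_{n,\theta}\left[e^{l_{n,K}(\theta,\hat{\delta}_{n}^{*})/\lambda}\right].
\]
Pick $\theta_{n}\in\Theta$ with $g_{n}(\theta_{n})\ge\sup_{\theta}g_{n}(\theta)-1/n$. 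By compactness of $\Theta$, pass to a subsequence along which the $\limsup$ is attained and $\theta_{n}\to\bar{\theta}\in\Theta$. It then suffices to show that $\lim g_{n}(\theta_{n})$ is bounded by the right-hand side, up to an error vanishing as $K\to\infty$.

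\emph{Estimation.} Apply Assumption A3 along $\theta_{n}$, together with Assumption A2 (uniform delta method), to get
\[
\sqrt{n}\left(\mu(\hat{\theta}_{\textrm{mle}})-\mu(\theta_{n})\right)=\dot{\mu}_{\theta_{n}}^{\intercal}I_{\theta_{n}}^{-1/2}x_{n,\theta_{n}}+o_{P_{n,\theta_{n}}}(1).
\]
Here one must control the remainder, whose size is $\sqrt{n}|\hat{\theta}-\theta_{n}|$ and hence tight, so that it is truly $o_{P}(1)$. The quantity $\dot{\mu}_{\theta}$ is bounded but need not be continuous for estimation. So pass to a further subsequence with $\dot{\mu}_{\theta_{n}}\to\bar{d}$; the Lipschitz property of $\mu$ gives $|\dot{\mu}_{\theta}|$ bounded. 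Continuity of $I_{\theta}$ then gives convergence in distribution to $\mathcal{N}(0,\bar{d}^{\intercal}I_{\bar{\theta}}^{-1}\bar{d})$. Since $\ell$ is bounded, $g_{n}(\theta_{n})\to\mathbb{E}[e^{\ell(\bar{\sigma}Z)/\lambda}]$ with $\bar{\sigma}^{2}=\bar{d}^{\intercal}I_{\bar{\theta}}^{-1}\bar{d}$. No truncation is needed for estimation, because $\ell$ is already bounded.

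It remains to identify this limit with $V_{\theta}^{*}$ for some $\theta$. By Anderson's lemma, $V_{\theta}^{*}=\mathbb{E}[e^{\ell(\sigma_{\theta}Z)/\lambda}]$, which is nondecreasing in $\sigma_{\theta}$; this monotonicity follows from bowl-shapedness via Anderson's lemma applied to scale mixtures. I therefore need $\bar{\sigma}\le\sup_{\theta}\sigma_{\theta}$. \textbf{This is the main obstacle} if $\dot{\mu}$ is discontinuous. I would first try to use the uniform differentiability in A2, which actually forces $\dot{\mu}_{\theta}$ to be continuous: compare $\mu(\theta+h/\sqrt{n})$ expansions at nearby $\theta$ and $\theta'$. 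Then $\bar{d}=\dot{\mu}_{\bar{\theta}}$, and the limit is exactly $V_{\bar{\theta}}^{*}\le\sup_{\theta}V_{\theta}^{*}$.

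\emph{Treatment assignment.} Split according to the size of $\sqrt{n}\,\mu(\theta_{n})$.

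Case (a): $\sqrt{n}|\mu(\theta_{n})|\to\infty$ along a subsequence. Say $\sqrt{n}\mu(\theta_{n})\to+\infty$. The loss is then nonzero only when $\hat{\delta}_{n}^{*}=0$, that is, when $\sqrt{n}(\mu(\hat{\theta})-\mu(\theta_{n}))<-\sqrt{n}\mu(\theta_{n})$. By the uniform tightness in the second part of A3, this event has probability tending to $0$. Since the truncated integrand is at most $e^{K/\lambda}$, we get $g_{n}(\theta_{n})\to1\le V_{\theta}^{*}$.

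Case (b): $\sqrt{n}\mu(\theta_{n})\to c$, finite. Then $\bar{\theta}\in\tilde{\Theta}$. Using A2 and $\inf_{\tilde{\Theta}}\|\dot{\mu}\|>0$, I would write $\theta_{n}=\vartheta_{n}+h_{n}/\sqrt{n}$ with $\vartheta_{n}\in\tilde{\Theta}$, $\vartheta_{n}\to\bar{\theta}$, and $h_{n}\to h$ bounded, for example by moving along $\dot{\mu}$ and using the implicit function theorem near $\tilde{\Theta}$. Now apply the uniform LAN property A1 at $\vartheta_{n}$, together with Le Cam's third lemma, A3 and A2, exactly as in the treatment-assignment part of the proof of Theorem \ref{thm:upper_bound}. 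Continuity of $I_{\theta}$ and $\dot{\mu}_{\theta}$ gives
\[
g_{n}(\theta_{n})\to\mathbb{E}_{\bar{\theta},h}\left[e^{(K\wedge l_{\bar{\theta}}(h,\tilde{\delta}_{\bar{\theta}}^{*}))/\lambda}\right]\le\mathbb{E}_{\bar{\theta},h}\left[e^{l_{\bar{\theta}}(h,\tilde{\delta}_{\bar{\theta}}^{*})/\lambda}\right]\le V_{\bar{\theta}}^{*},
\]
where the last inequality holds because $\tilde{\delta}_{\bar{\theta}}^{*}$ is minimax in the limit experiment, so its risk at any $h$ is at most $V_{\bar{\theta}}^{*}$ (Proposition \ref{prop1}, equalizer at $\Delta^{*}$). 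This is bounded by $\sup_{\tilde{\Theta}}V_{\theta}^{*}$ for every $K$. Taking $\lim_{K\to\infty}$ is then trivial.
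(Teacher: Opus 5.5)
Your proof is correct. It uses the same ingredients as the paper's: uniform tightness from the second part of A3 together with truncation where $\sqrt{n}|\mu(\theta)|$ is large, and otherwise re-centering at a zero of $\mu$, uniform LAN, and Le Cam's third lemma. The packaging, however, is genuinely different.

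\textbf{How the paper proceeds.} It works with uniform statements.
\begin{itemize}
\item For estimation, it enlarges the problem to $\max_{\theta}\max_{\pi(h)\in\Delta_{M}(\mathcal{H})}$ and proves $f_{n}\to f$ uniformly on $\Theta\times[-M,M]$ via continuous convergence.
\item For treatment assignment, it partitions $\Theta$ into $\tilde{\Theta}_{n,M}$ and $\tilde{\Theta}_{n,M}^{\pm}$. It then uses metric regularity (Lyusternik--Graves) plus a finite subcover of the compact set $\tilde{\Theta}$ to obtain the global inclusion $\tilde{\Theta}_{n,M}\subseteq\{\vartheta+h/\sqrt{n}:\vartheta\in\tilde{\Theta},|h|\le L\}$.
\end{itemize}

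\textbf{How you proceed.} You reduce to point masses, follow a single near-maximizing sequence $\theta_{n}\to\bar{\theta}$, and split according to whether $\sqrt{n}\mu(\theta_{n})$ diverges or stays bounded. This has two payoffs.
\begin{itemize}
\item The global regularity argument is replaced by a one-point implicit-function step at $\bar{\theta}$. That step uses $\dot{\mu}_{\bar{\theta}}\neq0$, continuity of $\dot{\mu}$, and $\bar{\theta}\in\tilde{\Theta}\subset\textrm{interior}(\Theta)$.
\item In the estimation case you avoid local priors and Le Cam's third lemma altogether. This is legitimate because the limiting risk of the plug-in rule does not depend on $h$, so working directly under $P_{n,\theta_{n}}$ with A1 and A3 suffices.
\end{itemize}
Your route is shorter and more elementary. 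The paper's route delivers uniform-convergence statements along the way, but these are not needed for the theorem itself.

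\textbf{On continuity of $\dot{\mu}$.} Your observation that the uniform expansion in A2 forces $\theta\mapsto\dot{\mu}_{\theta}$ to be continuous is correct. To see it, compare the increment $\mu(\theta'+u)-\mu(\theta')$ expanded around $\theta'$ and around $\theta$, with $|u|=|\theta-\theta'|$. This point is actually needed. The paper's estimation argument passes from $\dot{\mu}_{\theta_{n}}$ to $\dot{\mu}_{\theta}$ (and from $f(\theta_{n_j},h)$ to $f(\tilde{\theta},h)$) without comment, yet A2 states continuity of $\dot{\mu}_{\theta}$ only for the treatment-assignment problem. Your argument therefore fills a small gap rather than just reproducing the paper's route.
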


\subsection{Proof of Theorem \ref{thm:lower_bound-non-local}}

Observe that for any reference parameter $\theta\in\textrm{interior}(\Theta)$,
\begin{align*}
 & \liminf_{n\to\infty}\,\min_{\delta}\,\max_{\pi(\theta)\in\Delta(\Theta)}\int\mathbb{E}_{n,\theta}\left[e^{l_{n}(\theta,\delta)/\lambda}\right]d\pi(\theta)\\
 & \ge\lim_{M\to\infty}\liminf_{n\to\infty}\,\min_{\delta}\,\max_{\pi(h)\in\Delta_{M}(\mathcal{H})}\int\mathbb{E}_{n,\theta+h/\sqrt{n}}\left[e^{l_{n}(\theta+h/\sqrt{n},\delta)/\lambda}\right]d\pi(h).
\end{align*}
For the case of treatment assignment, we choose $\theta\in\{\theta\in\Theta:\mu(\theta)=0\}\subset\textrm{interior}(\Theta)$.
Then, making use of Assumptions A1 and A2, we can employ the same
arguments as in the proof of Theorem \ref{thm:lower_bound} to show
that 
\[
\lim_{M\to\infty}\liminf_{n\to\infty}\min_{\delta}\,\max_{\pi(h)\in\Delta_{M}(\mathcal{H})}\int\mathbb{E}_{n,\theta+h/\sqrt{n}}\left[e^{l_{n}(\theta+h/\sqrt{n},\delta)/\lambda}\right]d\pi(h)\ge V_{\theta}^{*}.
\]
The claim thus follows since the above holds for any $\theta\in\textrm{interior}(\Theta)$
under estimation loss, and any $\theta\in\{\theta\in\Theta:\mu(\theta)=0\}$
under treatment-assignment loss.

\subsection{Proof of Theorem \ref{thm:upper_bound-non-local}}

\subsubsection*{Estimation}

We start with the case of estimation. Consider any sequence $\theta_{n}\to\theta\in\Theta$
and $h_{n}\to h$. By Assumptions A1 and A3, 
\begin{align}
\left(\begin{array}{c}
\sqrt{n}(\hat{\theta}_{\textrm{mle}}-\theta_{n})\\
\ln\frac{dP_{n,\theta_{n}+h_{n}/\sqrt{n}}}{dP_{n,\theta_{n}}}
\end{array}\right) & \xrightarrow[P_{n,\theta_{n}}]{d}\left(\begin{array}{c}
I_{\theta}^{-1/2}x\\
h^{\intercal}I_{\theta}^{1/2}x-\frac{1}{2}h^{\intercal}I_{\theta}h
\end{array}\right),\ \textrm{where }x\sim\mathcal{N}(0,I).\label{eq:pf:Thm-A2:weak convergence}
\end{align}
Le Cam's third lemma then yields 
\begin{equation}
\sqrt{n}(\hat{\theta}_{\textrm{mle}}-\theta_{n})\xrightarrow[P_{n,\theta_{n}+h_{n}/\sqrt{n}}]{d}\mathcal{N}(h,I_{\theta}^{-1}).\label{eq:pf:Thm2:0-1}
\end{equation}
Therefore, in view of Assumption A2, it follows that for each $\theta_{n}\to\theta$
and $h_{n}\to h$, 
\begin{align*}
l_{n}\left(\theta_{n}+h_{n}/\sqrt{n},\hat{\delta}_{n}^{*}\right) & =\ell\left(\sqrt{n}\left(\mu(\theta_{n}+h_{n}/\sqrt{n})-\mu(\hat{\theta}_{\textrm{mle}})\right)\right)\\
 & =\left(\sqrt{n}\left(\mu(\theta_{n}+h_{n}/\sqrt{n})-\mu(\theta_{n})\right)+\sqrt{n}\left(\mu(\theta_{n})-\mu(\hat{\theta}_{\textrm{mle}})\right)\right)\\
 & \xrightarrow[P_{n,\theta_{n}+h_{n}/\sqrt{n}}]{d}\ell\left(\dot{\mu}_{\theta}^{\intercal}I_{\theta}^{-1/2}x\right),\ \textrm{where }x\sim\mathcal{N}(0,I).
\end{align*}
Since $l_{n}(\cdot)$ is uniformly bounded by Assumption A4, standard
properties of weak convergence and the fact $\tilde{\delta}_{\theta}^{*}(\cdot):=\dot{\mu}_{\theta}^{\intercal}I_{\theta}^{-1/2}\cdot$
imply 
\begin{equation}
\mathbb{E}_{n,\theta_{n}+h_{n}/\sqrt{n}}\left[e^{l_{n}(\theta_{n}+h_{n}/\sqrt{n},\hat{\delta}_{n}^{*})/\lambda}\right]\to\mathbb{E}_{\theta,0}\left[e^{\ell\left(\dot{\mu}_{\theta}^{\intercal}I_{\theta}^{-1/2}x\right)/\lambda}\right]=\mathbb{E}_{\theta,h}\left[e^{\ell\left(\dot{\mu}_{\theta}^{\intercal}h-\tilde{\delta}_{\theta}^{*}\right)/\lambda}\right]\label{eq:pf:thm2:1-1}
\end{equation}
for every sequence $(\theta_{n},h_{n})\to(\theta,h)$. 

Define
\begin{align*}
f_{n}(\theta,h) & :=\mathbb{E}_{n,\theta+h/\sqrt{n}}\left[e^{l_{n}(\theta+h/\sqrt{n},\hat{\delta}_{n}^{*})/\lambda}\right]\ \textrm{and}\\
f(\theta,h) & :=\mathbb{E}_{\theta,h}\left[e^{\ell\left(\dot{\mu}_{\theta}^{\intercal}h-\tilde{\delta}_{\theta}^{*}\right)/\lambda}\right]=\mathbb{E}_{\theta,h}\left[e^{l_{\theta}\left(h,\tilde{\delta}_{\theta}^{*}\right)/\lambda}\right].
\end{align*}
Equation (\ref{eq:pf:thm2:1-1}) implies continuous convergence of
$f_{n}(\cdot)$ to $f(\cdot)$, i.e., $f_{n}(\theta_{n},h_{n})\to f(\theta,h)$
for every $(\theta_{n},h_{n})\to(\theta,h)$. But continuous convergence
on compact sets implies uniform convergence, so 
\begin{equation}
\sup_{(\theta,h)\in\Theta\times[-M,M]}\vert f_{n}(\theta,h)-f(\theta,h)\vert\to0.\label{eq:pf:thm2:2-1}
\end{equation}

Now, observe that for any $M<\infty$,
\begin{align*}
 & \limsup_{n\to\infty}\,\max_{\pi(\theta)\in\Delta(\Theta)}\int\mathbb{E}_{n,\theta}\left[e^{l_{n}(\theta,\hat{\delta}_{n}^{*})/\lambda}\right]d\pi(\theta)\\
 & \le\limsup_{n\to\infty}\,\max_{\theta\in\Theta}\max_{\pi(h)\in\Delta_{M}(\mathcal{H})}\int\mathbb{E}_{n,\theta+h/\sqrt{n}}\left[e^{l_{n}(\theta+h/\sqrt{n},\hat{\delta}_{n}^{*})/\lambda}\right]d\pi(h).
\end{align*}
Consider a sequence $\left\{ \left(\theta_{n},\pi_{n}(h)\right)\right\} _{n}$
along which the limsup on the right hand side is attained. Since $\Delta_{M}(\mathcal{H})$
represents the space of compactly supported priors, it is compact
under the metric of weak convergence. Hence, there exists a further
sub-sequence $\left\{ \left(\theta_{n_{j}},\pi_{n_{j}}(h)\right)\right\} _{j}$
such that $\left(\theta_{n_{j}},\pi_{n_{j}}(h)\right)$ converges
weakly to some $(\tilde{\theta},\tilde{\pi}(h))\in\Theta\times\Delta_{M}(\mathcal{H})$.
Furthermore, as $e^{\ell_{}(\cdot)/\lambda}$ is uniformly bounded,
so is $f(\cdot)$. Combining these observations with (\ref{eq:pf:thm2:2-1}),
standard properties of weak convergence yield
\begin{align*}
 & \limsup_{n\to\infty}\,\max_{\theta\in\Theta}\,\max_{\pi(h)\in\Delta_{M}(\mathcal{H})}\int\mathbb{E}_{n,\theta+h/\sqrt{n}}\left[e^{l_{n}(\theta+h/\sqrt{n},\hat{\delta}_{n}^{*})/\lambda}\right]d\pi(h)\\
 & =\lim_{j\to\infty}\int f_{n_{j}}(\theta_{n_{j}},h)d\pi_{n_{j}}(h)\\
 & \le\lim_{j\to\infty}\,\sup_{\theta\in\Theta}\,\sup_{\vert h\vert\le M}\vert f_{n_{j}}(\theta,h)-f(\theta,h)\vert\\
 & \quad+\lim_{j\to\infty}\left|\int f(\theta_{n_{j}},h)d\pi_{n_{j}}(h)-\int f(\theta_{n_{j}},h)d\tilde{\pi}(h)\right|+\lim_{j\to\infty}\int f(\theta_{n_{j}},h)d\tilde{\pi}(h)\\
 & =\int\lim_{j\to\infty}f(\theta_{n_{j}},h)d\tilde{\pi}(h).
\end{align*}
Now, observe that by Assumption A1 (continuity of $I_{\theta}$) and
standard properties of the Gaussian distribution, $\lim_{j\to\infty}f(\theta_{n_{j}},h)=f(\tilde{\theta},h)$
for each $h$. Consequently, 
\begin{align*}
 & \limsup_{n\to\infty}\,\max_{\theta\in\Theta}\,\max_{\pi(h)\in\Delta_{M}(\mathcal{H})}\int\mathbb{E}_{n,\theta+h/\sqrt{n}}\left[e^{l_{n}(\theta+h/\sqrt{n},\hat{\delta}_{n}^{*})/\lambda}\right]d\pi(h)\\
 & =\int f(\tilde{\theta},h)d\tilde{\pi}(h)\le\sup_{\theta\in\Theta}\max_{\pi(h)\in\Delta(\mathcal{H})}\int\mathbb{E}_{\theta,h}\left[e^{l_{\theta}\left(h,\tilde{\delta}_{\theta}^{*}\right)/\lambda}\right]d\pi(h):=\sup_{\theta\in\Theta}V_{\theta}^{*}.
\end{align*}
This proves Theorem \ref{thm:upper_bound-non-local} for the estimation
problem. 

\subsubsection*{Treatment assignment}

Recall the definition $\tilde{\Theta}:=\{\theta\in\Theta:\mu(\theta)=0\}$
from Assumption A2 and note that $\tilde{\Theta}$ a compact set due
to the Lipschitz continuity of $\mu(\theta)$, as imposed in Assumption
A2. In addition, denote 
\begin{align*}
\tilde{\Theta}_{n,M} & :=\{\theta\in\Theta:\vert\mu(\theta)\vert\le M/\sqrt{n}\},\\
\tilde{\Theta}_{n,M}^{+} & :=\{\theta\in\Theta:\mu(\theta)>M/\sqrt{n}\},\textrm{ and}\\
\tilde{\Theta}_{n,M}^{-} & :=\{\theta\in\Theta:\mu(\theta)<-M/\sqrt{n}\}.
\end{align*}
We can decompose
\begin{align}
 & \limsup_{n\to\infty}\,\max_{\pi(\theta)\in\Delta(\Theta)}\int\mathbb{E}_{n,\theta}\left[e^{l_{n,K}(\theta,\hat{\delta}_{n}^{*})/\lambda}\right]d\pi(\theta)\nonumber \\
 & \le\limsup_{n\to\infty}\,\max_{\theta\in\tilde{\Theta}_{n,M}}\mathbb{E}_{n,\theta}\left[e^{l_{n,K}(\theta,\hat{\delta}_{n}^{*})/\lambda}\right]\nonumber \\
 & \;\quad\vee\limsup_{n\to\infty}\,\max_{\theta\in\tilde{\Theta}_{n,M}^{+}}\mathbb{E}_{n,\theta}\left[e^{l_{n,K}(\theta,\hat{\delta}_{n}^{*})/\lambda}\right]\vee\limsup_{n\to\infty}\,\max_{\theta\in\tilde{\Theta}_{n,M}^{-}}\mathbb{E}_{n,\theta}\left[e^{l_{n,K}(\theta,\hat{\delta}_{n}^{*})/\lambda}\right].\label{eq:pf:thmA2:1}
\end{align}

Note that $l_{n,K}(\theta,\hat{\delta}_{n}^{*})\le K$. Then, by the
second part of Assumption A4, we may choose $M<\infty$ large enough
such that
\begin{align*}
 & \limsup_{n\to\infty}\,\max_{\theta\in\tilde{\Theta}_{n,M}^{+}}\mathbb{E}_{n,\theta}\left[e^{l_{n,K}(\theta,\hat{\delta}_{n}^{*})/\lambda}\right]\\
 & \le\limsup_{n\to\infty}\left\{ e^{K/\lambda}\sup_{\theta\in\Theta}P_{n,\theta}\left(\sqrt{n}\left(\mu(\hat{\theta}_{\textrm{mle}})-\mu(\theta)\right)<-M\right)+1\right\} \le1+\eta,
\end{align*}
for any $\eta$ arbitrarily small. In a similar vein,
\[
\limsup_{n\to\infty}\max_{\theta\in\tilde{\Theta}_{n,M}^{-}}\mathbb{E}_{n,\theta}\left[e^{l_{n,K}(\theta,\hat{\delta}_{n}^{*})/\lambda}\right]\le1+\eta.
\]

It remains to analyze the term 
\[
\limsup_{n\to\infty}\,\max_{\theta\in\tilde{\Theta}_{n,M}}\mathbb{E}_{n,\theta}\left[e^{l_{n,K}(\theta,\hat{\delta}_{n}^{*})/\lambda}\right]=\limsup_{n\to\infty}\,\max_{\pi\in\Delta(\tilde{\Theta}_{n,M})}\int\mathbb{E}_{n,\theta}\left[e^{l_{n,K}(\theta,\hat{\delta}_{n}^{*})/\lambda}\right]d\pi(\theta).
\]

We now claim that there exist $c>0,\delta>0$ such that 
\begin{equation}
\vert\mu(\theta)\vert>c\cdot d(\theta,\tilde{\Theta})\ \forall\ \theta\ \textrm{with }d(\theta,\tilde{\Theta})\le\delta.\label{eq:pf:thmA2:2}
\end{equation}
To see this, note first that the uniform expansion in Assumption A2
implies $\mu(\cdot)$ is strictly differentiable at every $\theta\in\tilde{\Theta}$
with derivative $\dot{\mu}_{\theta}$; indeed, writing $u=h/\sqrt{n}$
and choosing $n\to\infty$ as $u\to0$, the expansion yields $|\mu(\theta+u)-\mu(\theta)-\dot{\mu}_{\theta}^{\intercal}u|\le\rho(|u|)|u|$
with $\rho(t)\to0$ as $t\to0$, uniformly over $\theta$, which is
a uniform version of strict differentiability. Since $\mu(\cdot)$
is scalar-valued and $\inf_{\theta\in\tilde{\Theta}}\left\Vert \dot{\mu}_{\theta}\right\Vert >0$,
the derivative is surjective at each $\theta\in\tilde{\Theta}$. Consequently,
by the Lyusternik--Graves theorem (\citealp[Theorem 5D.5]{dontchev2009implicit};
see also Section 1F for the equation case), $\mu(\cdot)$ is metrically
regular at each $\theta\in\tilde{\Theta}$, i.e., there exists a $\kappa_{\theta}<\infty$
and an open neighborhood $\mathcal{N}_{\theta}$ of $\theta$ such
that $d(\theta,\tilde{\Theta})=d(\theta,\mu^{-1}(0))\le\kappa_{\theta}\vert\mu(\theta)|\ \forall\ \theta\in\mathcal{N}_{\theta}$.
The regularity modulus admits the bound $\kappa_{\theta}\le(1+o(1))/\left\Vert \dot{\mu}_{\theta}\right\Vert \le2/\inf_{\theta\in\tilde{\Theta}}\left\Vert \dot{\mu}_{\theta}\right\Vert $.
Since $\tilde{\Theta}\subseteq\cup_{\theta\in\tilde{\Theta}}\mathcal{N}_{\theta}$
and $\tilde{\Theta}$ is a compact set, extracting a finite sub-cover
yields (\ref{eq:pf:thmA2:2}) with $c:=\inf_{\theta\in\tilde{\Theta}}\left\Vert \dot{\mu}_{\theta}\right\Vert /2$
and some $\delta>0$. 

Now, since $m:=\inf\{|\mu(\theta)|:\theta\in\Theta,d(\theta,\tilde{\Theta})>\delta\}>0$
by continuity of $\mu(\cdot)$ and compactness of $\Theta$, any $\theta\in\tilde{\Theta}_{n,M}$
satisfies $d(\theta,\tilde{\Theta})\le\delta$ once $M/\sqrt{n}\le m$,
so (\ref{eq:pf:thmA2:2}) applies and gives, for $n$ large enough
and $L:=M/c<\infty$,
\[
\tilde{\Theta}_{n,M}\subseteq\left\{ \theta+h/\sqrt{n}:\theta\in\tilde{\Theta},\vert h\vert\le L\right\} .
\]
Hence, 
\begin{align*}
 & \limsup_{n\to\infty}\,\max_{\pi\in\Delta(\tilde{\Theta}_{n,M})}\int\mathbb{E}_{n,\theta}\left[e^{l_{n,K}(\theta,\hat{\delta}_{n}^{*})/\lambda}\right]d\pi(\theta)\\
 & \le\limsup_{n\to\infty}\,\max_{\theta\in\tilde{\Theta}}\,\max_{\pi\in\Delta([-L,L])}\int\mathbb{E}_{n,\theta+h/\sqrt{n}}\left[e^{l_{n,K}(\theta+h/\sqrt{n},\hat{\delta}_{n}^{*})/\lambda}\right]d\pi(h).
\end{align*}

Consider any sequence $(\theta_{n},h_{n})\in\tilde{\Theta}\times[-L,L]$
such that $(\theta_{n},h_{n})\to(\theta,h)$. Since $\tilde{\Theta}$
is a compact set, $\theta\in\tilde{\Theta}$, i.e., $\mu(\theta)=0$.
Then, (\ref{eq:pf:Thm2:0-1}) and Assumption A2 imply 
\begin{align*}
\mathbf{1}\left\{ \mu(\hat{\theta}_{\textrm{mle}})\ge0\right\}  & =\mathbf{1}\left\{ \sqrt{n}\left(\mu(\hat{\theta}_{\textrm{mle}})-\mu(\theta_{n}+h_{n}/\sqrt{n})\right)+\sqrt{n}\left(\mu(\theta_{n}+h_{n}/\sqrt{n})-\mu(\theta)\right)\ge0\right\} \\
 & \xrightarrow[P_{n,\theta_{n}+h_{n}/\sqrt{n}}]{d}\mathbf{1}\left\{ \dot{\mu}_{\theta}^{\intercal}I_{\theta}^{-1/2}x\ge0\right\} ,\ \textrm{where }x\sim\mathcal{N}(I_{\theta}^{-1/2}h,I).
\end{align*}
Hence, by standard properties of weak convergence, for each $(\theta_{n},h_{n})\in\tilde{\Theta}\times[-L,L]\to(\theta,h)$,
\begin{equation}
\mathbb{E}_{n,\theta_{n}+h_{n}/\sqrt{n}}\left[\hat{\delta}_{n}^{*}\right]=P_{n,\theta_{n}+h_{n}/\sqrt{n}}\left(\mu(\hat{\theta}_{\textrm{mle}})\ge0\right)\to P_{\theta,h}\left(\dot{\mu}_{\theta}^{\intercal}I_{\theta}^{-1/2}x\ge0\right)=\mathbb{E}_{\theta,h}\left[\tilde{\delta}_{\theta}^{*}\right].\label{eq:pf:ThmA2:3}
\end{equation}

Note that for the treatment assignment loss, 
\begin{align*}
 & \mathbb{E}_{n,\theta_{n}+h_{n}/\sqrt{n}}\left[e^{l_{n,K}(\theta_{0}+h_{n}/\sqrt{n},\hat{\delta}_{n}^{*})/\lambda}\right]\\
 & =e^{l_{n,K}(\theta_{n}+h_{n}/\sqrt{n},1)/\lambda}\mathbb{E}_{n,\theta_{n}+h_{n}/\sqrt{n}}\left[\hat{\delta}_{n}^{*}\right]+e^{l_{n,K}(\theta_{n}+h_{n}/\sqrt{n},0)/\lambda}\mathbb{E}_{n,\theta_{n}+h_{n}/\sqrt{n}}\left[1-\hat{\delta}_{n}^{*}\right].
\end{align*}
Now, Assumption A2 yields
\[
l_{n,K}(\theta_{n}+h_{n}/\sqrt{n},a)\to l_{\theta,K}(h,a),\ \textrm{for each }a\in\{0,1\}\ \textrm{and }(\theta_{n},h_{n})\to(\theta,h),
\]
where 
\[
l_{\theta,K}(h,a):=K\wedge\dot{\mu}_{\theta}^{\intercal}h\,\left\{ \mathbf{1}\{\dot{\mu}_{\theta}^{\intercal}h\geq0\}-a\right\} .
\]
Combined with (\ref{eq:pf:ThmA2:3}), this proves 
\begin{align*}
 & \mathbb{E}_{n,\theta_{n}+h_{n}/\sqrt{n}}\left[e^{l_{n,K}(\theta_{n}+h_{n}/\sqrt{n},\hat{\delta}_{n}^{*})/\lambda}\right]\\
 & \to e^{l_{\theta,K}(h,1)/\lambda}\mathbb{E}_{\theta,h}\left[\tilde{\delta}_{\theta}^{*}\right]+e^{l_{\theta,K}(h,0)/\lambda}\mathbb{E}_{\theta,h}\left[1-\tilde{\delta}_{\theta}^{*}\right]\\
 & =\mathbb{E}_{\theta,h}\left[e^{l_{\theta,K}(h,\tilde{\delta}_{\theta}^{*})/\lambda}\right],\ \textrm{for each }(\theta_{n},h_{n})\in\tilde{\Theta}\times[-L,L]\to(\theta,h).
\end{align*}

As before, define $f_{n},f:\tilde{\Theta}\times[-L,L]\to\mathbb{R}$
such that
\begin{align*}
f_{n}(\theta,h) & :=\mathbb{E}_{n,\theta+h/\sqrt{n}}\left[e^{l_{n,K}(\theta+h_{}/\sqrt{n},\hat{\delta}_{n}^{*})/\lambda}\right]\ \textrm{and}\\
f(\theta,h) & :=\mathbb{E}_{h}\left[e^{l_{\theta,K}\left(h,\tilde{\delta}_{\theta}^{*}\right)/\lambda}\right].
\end{align*}
Observe that $f(\theta,h)$ is bounded under treatment-assignment
loss by construction. Consequently, by similar arguments as in the
case of estimation, it follows 
\begin{align*}
 & \limsup_{n\to\infty}\,\sup_{\theta\in\tilde{\Theta}}\,\max_{\pi\in\Delta([-L,L])}\int\mathbb{E}_{n,\theta}\left[e^{l_{n,K}(\theta,\hat{\delta}_{n}^{*})/\lambda}\right]d\pi(h)\\
 & \le\sup_{\theta\in\tilde{\Theta}}\,\max_{\pi(h)\in\Delta(\mathcal{H})}\int\mathbb{E}_{\theta,h}\left[e^{l_{\theta,K}\left(h,\tilde{\delta}_{\theta}^{*}\right)/\lambda}\right]d\pi(h)\\
 & \le\sup_{\theta\in\tilde{\Theta}}\,\max_{\pi(h)\in\Delta(\mathcal{H})}\int\mathbb{E}_{\theta,h}\left[e^{l_{\theta}\left(h,\tilde{\delta}_{\theta}^{*}\right)/\lambda}\right]d\pi(h):=\sup_{\theta\in\tilde{\Theta}}V_{\theta}^{*}.
\end{align*}

Observe that, by definition, $V_{\theta}^{*}>1$ for any $\theta\in\tilde{\Theta}$.
Combined with (\ref{eq:pf:thmA2:1}) and the fact $\eta>0$ can be
set arbitrarily small, it follows that 
\[
\limsup_{n\to\infty}\max_{\pi(\theta)\in\Delta(\Theta)}\int\mathbb{E}_{n,\theta}\left[e^{l_{n,K}(\theta,\hat{\delta}_{n}^{*})/\lambda}\right]d\pi(\theta)\le\sup_{\theta\in\tilde{\Theta}}V_{\theta}^{*}\ \forall\ K,
\]
as stated by the theorem. 

\section{Asymmetric Loss Functions: The Case of Linex\label{sec:Asymmetric-Loss-Functions}}

In the limit experiment, the linex loss takes the form 
\[
l(h,\tilde{\delta})=e^{(\dot{\mu}_{0}^{\intercal}h-\tilde{\delta})}-(\dot{\mu}_{0}^{\intercal}h-\tilde{\delta})-1.
\]
We start by analyzing the minimax optimal decision under no misspecification.
Observe that $\dot{\mu}_{0}^{\intercal}I_{0}^{-1/2}x$ is a sufficient
statistic for $\dot{\mu}_{0}^{\intercal}h$. Also, the loss function
is convex and location-invariant, in the sense that $l(h+z,\tilde{\delta}+\dot{\mu}_{0}^{\intercal}z)$
for all $z\in\mathbb{R}^{d}$. Consequently, the Hunt-Stein theorem
implies that the minimax optimal estimator should be the minimum risk
equivariant estimator. Any equivariant estimator in this setting must
be of the form $\tilde{\delta}_{z}(x)=\dot{\mu}_{0}^{\intercal}\left(I_{0}^{-1/2}x+z\right)$
for some constant $z$. The frequentist risk of $\tilde{\delta}_{z}$
is
\begin{align*}
R(h,\tilde{\delta}_{z}) & =e^{-\dot{\mu}_{0}^{\intercal}z}\mathbb{E}_{h}\left[e^{(\dot{\mu}_{0}^{\intercal}h-\dot{\mu}_{0}^{\intercal}I_{0}^{-1/2}x)}\right]+\dot{\mu}_{0}^{\intercal}z-1\\
 & =e^{-\dot{\mu}_{0}^{\intercal}z}e^{\dot{\mu}_{0}^{\intercal}I_{0}^{-1}\dot{\mu}_{0}}+\dot{\mu}_{0}^{\intercal}z-1.
\end{align*}
Optimizing over the value of $z$ --- or equivalently, that of $\dot{\mu}_{0}^{\intercal}z$
--- we find that the frequentist risk is minimized at $\dot{\mu}_{0}^{\intercal}z^{*}=\frac{1}{2}\dot{\mu}_{0}^{\intercal}I_{0}^{-1}\dot{\mu}_{0}$.
Consequently, in the absence of misspecification, the minimax optimal
estimator takes the form 
\[
\tilde{\delta}^{*}=\dot{\mu}_{0}^{\intercal}I_{0}^{-1/2}x+\frac{1}{2}\dot{\mu}_{0}^{\intercal}I_{0}^{-1}\dot{\mu}_{0}.
\]

Recall that optimal decisions under misspecification are equivalent
to minimax optimal decisions with an exponential tilt of the loss
function. Since the resulting risk would be infinite under the Gaussian
experiment, we truncate the linex loss at some large value $M$ to
obtain $l_{M}(h,\tilde{\delta})=\min\left\{ l(h,\tilde{\delta}),M\right\} $.
The decision-risk under misspecification is then given by 
\[
V^{*}=\min_{\tilde{\delta}}\,\max_{\pi(h)}\int\mathbb{E}_{h}\left[e^{l_{M}(h,\tilde{\delta})/\lambda}\right]d\pi(h).
\]
Observe that $\exp\{l_{M}(h,\tilde{\delta})/\lambda\}$ is still convex
and location-invariant. Consequently, we can apply the Hunt-Stein
theorem again to conclude that the minimax optimal estimator must
be of the form $\tilde{\delta}_{z}(x)=\dot{\mu}_{0}^{\intercal}I_{0}^{-1/2}x+\dot{\mu}_{0}^{\intercal}z$.
Since the frequentist risks, $R_{\lambda}(h,\tilde{\delta}_{z}):=\mathbb{E}_{h}\left[e^{l_{M}(h,\tilde{\delta}_{z})/\lambda}\right],$
of equivariant estimators are independent of $h$, we have 
\[
R_{\lambda}(h,\tilde{\delta}_{z})=R_{\lambda}(0,\tilde{\delta}_{z})=\mathbb{E}_{0}\left[e^{l_{M}(0,\tilde{\delta}_{z})/\lambda}\right].
\]
The optimal value of $\dot{\mu}_{0}^{\intercal}z^{*}$ therefore solves
\begin{align*}
\dot{\mu}_{0}^{\intercal}z^{*}=\Delta_{M}^{*}(\lambda) & :=\arg\min_{\Delta}\mathbb{E}_{0}\left[e^{l_{M}(0,\dot{\mu}_{0}^{\intercal}I_{0}^{-1/2}x+\Delta)/\lambda}\right]\\
 & =\arg\min_{\Delta}\mathbb{E}_{Y}\left[e^{l_{M}(0,Y+\Delta)/\lambda}\right],\ \textrm{where }Y\sim\mathcal{N}(0,\dot{\mu}_{0}^{\intercal}I_{0}^{-}\dot{\mu}_{0}).
\end{align*}
Applying the implicit function theorem, some tedious but straightforward
algebra shows that $\partial_{\lambda}\Delta_{M}^{*}(\lambda)<0$,
i.e., the minimax optimal shift decreases in $\lambda$. Indeed, as
$M\to0$ and $\lambda\to\infty$ --- i.e., misspecification risk
decreases --- the minimax optimal shift converges to that under no
misspecification: 
\[
\lim_{M\to\infty}\lim_{\lambda\to\infty}\Delta_{M}^{*}(\lambda)=\frac{1}{2}\dot{\mu}_{0}^{\intercal}I_{0}^{-1}\dot{\mu}_{0}.
\]
On the other hand, when $\lambda\to0$ --- i.e., misspecification
risk explodes --- we find that $\Delta_{M}^{*}(\lambda)\to\infty$
and the minimax optimal estimator also becomes $\infty$. Hence, in
the case of linex loss, the optimal estimator depends on the degree
of misspecification. 
\end{document}